\tikzset{>=Latex[]}
\let\llncssubparagraph\subparagraph
\let\subparagraph\paragraph
\let\subparagraph\llncssubparagraph
\newif\ifshortversion
\newif\ifanonymous
\newcommand{\xspacemm}{\ifmmode\else\xspace\fi}
\newcommand{\mathnoun}[1]{\ensuremath{#1}\xspacemm}
\newcommand{\newmathnoun}[2]{\newcommand{#1}{\mathnoun{#2}}}
\newcommand{\renewmathnoun}[2]{\renewcommand{#1}{\mathnoun{#2}}}
\newcommand{\newnoun}[2]{\newcommand{#1}{#2\xspace}}
\renewcommand{\vec}{\mathbf}
\newmathnoun{\FF}{\mathbb{F}}
\newmathnoun{\ZZ}{\mathbb{Z}}
\newmathnoun{\ZZNN}{\mathbb{Z}_{\geq0}}
\newmathnoun{\PosZZ}{\mathbb{Z}_{\infty}^{+}}
\newmathnoun{\QQ}{\mathbb{Q}}
\newmathnoun{\NN}{\mathbb{N}}
\newmathnoun{\C}{\mathcal{C}}
\newmathnoun{\EE}{\mathbb{E}}
\newmathnoun{\GG}{\mathbb{G}}
\newmathnoun{\sP}{\mathcal{P}}
\newmathnoun{\KK}{\mathbb{K}}
\newcommand{\gen}[1]{\mathnoun{I\left({#1}\right)}}
\newcommand{\genN}[1]{\mathnoun{I_n\left({#1}\right)}}
\newmathnoun{\BV}{\mathbb{Z}_{2^b}}
\newmathnoun{\FFp}{\mathbb{F}_p}
\newmathnoun{\FFq}{\mathbb{F}_q}
\newmathnoun{\LC}{\mathsf{LC}}
\newcommand{\smod}{%
  \mathchoice{\mskip1mu}{\mskip1mu}{\mskip5mu}{\mskip5mu}%
  \mathbin{\mathrm{smod}}%
  \mathchoice{\mskip1mu}{\mskip1mu}{\mskip5mu}{\mskip5mu}%
}
\newcommand{\tabCaptionSpace}{\vspace{1ex}}
\newcommand{\gb}{Gr\"obner basis\xspace}
\newcommand{\gbs}{Gr\"obner bases\xspace}
\newcommand{\logicsize}{\fontsize{8}{9}\selectfont}
\newcommand{\qfnia}{{\logicsize\ensuremath{\mathsf{QF\_NIA}}}\xspace}
\newcommand{\qfmia}{{\logicsize\ensuremath{\mathsf{QF\_MIA}}}\xspace}
\newcommand{\qfbv}{{\logicsize\ensuremath{\mathsf{ QF\_BV}}}\xspace}
\newcommand{\qfff}{{\logicsize\ensuremath{\mathsf{QF\_FF}}}\xspace}
\newcommand{\FFT}{$\mathcal{T}_\mathtt{FF}$\xspace}
\newcommand{\TheoryBV}{$\mathcal{T}_\mathtt{BV}$\xspace}
\newcommand{\NIA}{\qfnia}
\newcommand{\OurLogic}{\qfmia}
\newcommand{\INT}{$\mathcal{T}_\mathtt{Int}$\xspace}
\newcommand{\Exp}{\ensuremath{\mathsf{Exp}}\xspace}
\newcommand{\Int}{\ensuremath{\mathsf{Int}}\xspace}
\newcommand{\Bool}{\ensuremath{\mathsf{Bool}}\xspace}
\newcommand{\Vars}{\ensuremath{X}\xspace}
\newcommand{\LVars}{\ensuremath{\mathcal{X}}\xspace}
\newcommand{\LVar}{\ensuremath{\mathsf{x}}\xspace}
\newcommand{\interp}{\ensuremath{\mathcal{I}}\xspace}
\newcommand{\toPoly}{\texttt{toPoly}}
\newmathnoun{\REqsnoc}{R^{\approx}}
\newmathnoun{\RNEqsnoc}{R^{\not \approx}}
\newmathnoun{\REqs}{R^{=}_{c}}
\newmathnoun{\RNEqs}{R^{\neq}_{c}}
\newmathnoun{\Bounds}{\texttt{B}}
\newmathnoun{\IntEqs}{R^{\approx}_{\infty}}
\newmathnoun{\IntNEqs}{R^{\not \approx}_{\infty}}
\newmathnoun{\RParam}{R^{\bowtie}_{n}}
\newmathnoun{\RParamInt}{R^{\bowtie}_{c}}
\newmathnoun{\RParamOnlyInt}{R^{\bowtie}_{\infty}}
\newmathnoun{\RMod}{R_{\text{n}}}
\newmathnoun{\RInt}{R_{\infty}}
\newmathnoun{\REqsI}{R^{\approx}_{c_i}}
\newmathnoun{\REqsN}{R^{\approx}_{n}}
\newmathnoun{\REqsNO}{R^{\approx}_{n_1}}
\newmathnoun{\REqsNi}{R^{\approx}_{n_i}}
\newmathnoun{\REqsNj}{R^{\approx}_{n_j}}
\newmathnoun{\RNEqsN}{R^{\not \approx}_{n}}
\newmathnoun{\RNEqsI}{R^{\not \approx}_{c_i}}
\newcommand{\simplify}{\ensuremath{\mathit{simp}}\xspace}
\newmathnoun{\sat}{\ensuremath{\mathsf{sat}}\xspace}
\newmathnoun{\unsat}{\ensuremath{\mathsf{unsat}}\xspace}
\newmathnoun{\unknown}{\ensuremath{\mathsf{unknown}}\xspace}
\newmathnoun{\CalcBds}{\mathtt{CalcBds}}
\newcommand{\SMTLib}{SMT-LIB\xspace}
\newmathnoun{\Reduce}{\mathtt{Reduce}}
\newmathnoun{\branch}{
  \mathchoice{\mskip1mu}{\mskip1mu}{\mskip5mu}{\mskip5mu}%
  \mathbin{\mathrm{\lvert\,\lvert}}%
  \mathchoice{\mskip1mu}{\mskip1mu}{\mskip5mu}{\mskip5mu}%
  }
\newmathnoun{\GB}{\mathtt{
GB}_{n, \le}}
\newmathnoun{\WeightedGB}{\mathtt{
GB}_{\mathit{wtd}(n,B)}}
\newmathnoun{\lm}{\mathit{lm}}
\newmathnoun{\lcm}{\mathit{lcm}}
\newmathnoun{\leadt}{\mathit{lt}}
\newmathnoun{\relu}{\mathtt{ReLU}}
\newmathnoun{\coef}{\mathtt{coef}}
\newmathnoun{\lowerbnd}{\mathtt{lb}}
\newmathnoun{\upperbnd}{\mathtt{ub}}
\newmathnoun{\tail}{\mathit{tail}}
\newmathnoun{\pnew}{e'}
\newmathnoun{\spoly}{\mathit{spoly}}
\newmathnoun{\calcBds}{\texttt{CalcBds}}
\newmathnoun{\impC}{\vDash_c}
\newmathnoun{\impN}{\vDash_n}
\newmathnoun{\nimpN}{\nvDash_n}
\newmathnoun{\nimpC}{\nvDash_c}
\newmathnoun{\nimpInt}{\nvDash_{\Int}}
\newmathnoun{\LiftPoly}{\mathtt{LP}_n^B}
\newmathnoun{\inv}{\texttt{inv}}
\newmathnoun{\liftN}{\textit{lift}_{n,B}}
\renewmathnoun{\inf}{\textit{inf}}
\newmathnoun{\lowerN}{\textit{lower}_{n,B}}
\newcommand{\dbk}[1]{\ensuremath{\llbracket{#1}\rrbracket}\xspacemm}
\newcommand{\Dbk}[1]{\ensuremath{\left\llbracket{#1}\right\rrbracket}\xspacemm}
\newcommand{\benchFfBvSp}{\texttt{f/b(s)}\xspacemm}
\newcommand{\benchFfBvMp}{\texttt{f/b(m)}\xspacemm}
\newcommand{\benchFfFfSp}{\texttt{f/f(s)}\xspacemm}
\newcommand{\benchFfFfMp}{\texttt{f/f(m)}\xspacemm}
\newcommand{\benchBvFfMp}{\texttt{b/f(m)}\xspacemm}
\newcommand{\ModExp}{\texttt{ModExp}\xspacemm}
\newcommand{\IntExp}{\texttt{IntExp}\xspacemm}
\newcommand{\GEQ}{\texttt{GEQ}\xspacemm}
\newcommand{\nGEQ}{\texttt{nGEQ}\xspacemm}
\newcommand{\LEQ}{\texttt{LEQ}\xspacemm}
\newcommand{\nLEQ}{\texttt{nLEQ}\xspacemm}
\newcommand{\OneGB}{\texttt{UnsatOne}\xspacemm}
\newcommand{\UnsatOne}{\texttt{UnsatOne}\xspacemm}
\newcommand{\UnsatDiseq}{\texttt{UnsatDiseq}\xspacemm}
\newcommand{\InconBds}{\texttt{UnsatBds}\xspacemm}
\newcommand{\UnsatBds}{\texttt{UnsatBds}\xspacemm}
\newcommand{\LiftEq}{\texttt{LiftEq}\xspacemm}
\newmathnoun{\LiftEqH}{\texttt{LiftEqH}}
\newcommand{\LiftDiseq}{\texttt{LiftDiseq}\xspacemm}
\newcommand{\LowerEq}{\texttt{LowerEq}\xspacemm}
\newcommand{\LowerDiseq}{\texttt{LowerDiseq}\xspacemm}
\newmathnoun{\ConstrBds}{\texttt{ConstrBds}}
\newcommand{\InfEq}{\texttt{InfEq}\xspacemm}
\newcommand{\RngLift}{\texttt{RngLift}\xspacemm}
\newcommand{\PrimeSqr}{\texttt{ZeroOrOne}\xspacemm}
\newmathnoun{\HeuristicGB}{H_{\mathtt{GB}}}
\newmathnoun{\HeuristicILP}{H_{\mathtt{ILP}}}
\newmathnoun{\HeuristicILPAprox}{H_{\mathtt{ILPaprox}}}
\newmathnoun{\PhiAprox}{\phi_{\mathit{aprox}}}
\newmathnoun{\PhiBoth}{\phi_{(\mathit{aprox})?}}
\newmathnoun{\slog}{\text{slog}}
\newmathnoun{\OrderBd}{<_{\mathit{Bd}}}
\newmathnoun{\Bd}{\mathit{Bd}}
\newmathnoun{\ND}{\mathit{ND}}
\newmathnoun{\OrderND}{<_{\mathit{ND}_n}}
\newmathnoun{\OrderI}{<_{I_n}}
\newnoun{\mm}{math mode}
\newmathnoun{\Split}{\mathtt{Split}}
\newmathnoun{\LiftLower}{\mathtt{Exchange}}
\newmathnoun{\CheckUnsat}{\mathtt{CheckUnsat}}
\newmathnoun{\Branch}{\mathtt{Branch}}
\newmathnoun{\lmp}{\dbk{m_{p_i}}}
\newmathnoun{\lmg}{\dbk{m_{g_i}}}
\newmathnoun{\lme}{\dbk{m_{g_i}}}
\def\orcidID#1{\smash{\href{http://orcid.org/#1}{\protect\raisebox{-1.25pt}{\protect\includegraphics{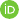}}}}}
\begin{document}

\title{
  Integer Reasoning Modulo Different Constants \\ in SMT
}
\titlerunning{
  Integer Reasoning Modulo Different Constants in SMT
}

\ifanonymous
  \author{}
  \institute{}
\else
  \author{%
    Elizaveta Pertseva\inst{1}\orcidID{0000-0001-9950-672X} \and
    Alex Ozdemir\inst{1}\orcidID{0000-0002-0181-6752} \and
    Shankara Pailoor\inst{2}\orcidID{0000-0002-9253-9585} \and
    Alp Bassa\inst{2}\orcidID{0000-0002-9685-7361} \and
    \\
    Sorawee Porncharoenwase\inst{4\thanks{work done while at Veridise}}\orcidID{0000-0003-3900-5602} \and
    Işil Dillig\inst{2,3}\orcidID{0000-0001-8006-1230} \and
    Clark Barrett\inst{1}\orcidID{0000-0002-9522-3084}
  }
  \institute{%
     Stanford University (\email{pertseva@stanford.edu})
     \and
     Veridise
     \and
        The University of Texas at Austin 
     \and 
     Amazon Web Services
  }
  \authorrunning{%
    E. Pertseva et al.
  }
\fi

\maketitle

\begin{abstract}
%

This paper presents a new refutation procedure for multimodular systems of integer constraints that commonly arise when verifying cryptographic protocols. These systems, involving polynomial equalities and disequalities modulo different constants, are challenging for existing solvers due to their inability to exploit multimodular structure. To address this issue, our method partitions constraints by modulus and uses lifting and lowering techniques to share information across subsystems, supported by algebraic tools like weighted Gröbner bases. 
Our experiments show that the proposed method outperforms existing state-of-the-art solvers in verifying cryptographic implementations related to Montgomery arithmetic and zero-knowledge proofs.

\end{abstract}

\margincommentprimer

\section{Introduction}%
\label{sec:intro}

Throughout history,
cryptosystems have been defined using modular arithmetic.
The first symmetric cipher---%
the Caesar cipher---%
used arithmetic modulo the alphabet size.
The first public key exchange
(Diffie-Hellman~\cite{diffie1976new})
used arithmetic modulo a large prime.
The first digital signature
(RSA~\cite{rivest1978method})
used arithmetic modulo a large biprime.
More recently,
cryptosystems that \textit{compute}
on secret data, such as 
homomorphic encryption~\cite{gentry2009fully},
multiparty computation~\cite{yao1982protocols},
and zero-knowledge proofs (ZKPs)~\cite{goldwasser1989knowledge}, 
often perform computation modulo a prime.

The relationship between integer arithmetic systems
with \textit{different} moduli plays
a central role in \textit{implementing} cryptography.
Generally, this is because the cryptosystem is defined using
a modulus that is different from the one natively supported by the computational model.
For example, microprocessors efficiently
perform arithmetic modulo powers of two
(e.g., $2^{16}, 2^{32}, 2^{64}, \dots$),
but elliptic-curve cryptosystems require arithmetic modulo
$\approx$256-bit primes.
To bridge the gap,
implementations use
techniques such as
Montgomery and Barrett reduction~\cite{montgomery1985modular,barrett1986implementing}.
Similar problems
(and solutions) arise in other contexts,
such as when using ZKPs.
ZKPs often only support arithmetic
modulo a large prime~\cite{walfish15cacm} and must then find ways to express and prove any properties
that are not natively defined modulo that prime.

%
%
Thus, when the correctness of a cryptosystem is expressed as a logical formula, such
as a Satisfiability Modulo Theories (SMT)
formula,  it often contains equations with different moduli.
%
%
Some
of the equations
express the cryptographic
\textit{specification}
(e.g., modulo the prime $2^{255}-19$),
and others
express the \textit{operations} performed
by the implementation
(e.g., modulo $2^{64}$).  We call such sets of constraints \emph{multimodular systems}.
%

Unfortunately, such
multimodular systems are hard for existing SMT solvers.
One approach is to use a theory that explicitly supports the modulus operator,
such as integers or bit-vectors.
In practice, this leads to poor performance because the solvers
for these theories are designed for \textit{general} modular reduction (i.e., where the modulus could be a variable)
and are not optimized
for the special case of many constraints with constant moduli.
Another approach is to combine theories
optimized for different constant moduli,
such as bit-vectors (modulo a power of two)
and finite fields (modulo a prime).
This also performs poorly because existing combination mechanisms
cannot exchange sufficiently powerful lemmas between the theories.

\begin{figure}[t]
\vspace{-0.2in}
  \centering
  \newcommand{\yoff}{-1.6}
  \newcommand{\modulesize}{0.75cm}
  \begin{tikzpicture}[scale=0.8]
  \path
    (0, 0) node[circle,draw,minimum size=\modulesize,align=center,inner sep=0] (Z) {\large $\ZZ$}
    (-2.5,\yoff) node[circle,draw,minimum size=\modulesize,align=center,inner sep=0] (Z2) {\large $\ZZ_2$}
    (   0,\yoff) node[circle,draw,minimum size=\modulesize,align=center,inner sep=0] (Z7) {\large $\ZZ_7$}
    ( 2.5,\yoff) node {\Large$\cdots$}

    (Z) edge[dotted,->, bend left=0.3cm] (Z2)
    (Z) edge[dotted,->, bend left=0.3cm] (Z7)
    (Z2) edge[dotted, ->, bend left=0.3cm] (Z)
    (Z7) edge[dotted, ->, bend left=0.3cm] (Z)

    (Z) edge[->,loop right] (Z)
    (Z2) edge[->,loop right] (Z2)
    (Z7) edge[->,loop right] (Z7)

    (4,-0.95) coordinate (guide)
    (guide) +(0.5,0.6) node[right] (alg) {algebraic reasoning}
    (guide) +(0.5,0.0) node[right] (lift) {lemma exchange}
    (lift.west) +(-0.75,0) edge[->,dotted] (lift)
    (alg.west) +(-0.75,0) edge[->] (alg)
    (lift.west) +(-0.95,0.4) coordinate (guidel)
  ;
  \draw
    (guidel |- alg.north) +(0,0.1) rectangle (alg.east |- lift.south)
  ;
  \end{tikzpicture}
  \caption{%
    Overview of our refutation procedure.
    Integer-reasoning  interacts with modules for reasoning about
    equations modulo constants, e.g.
 (in this figure, modulo 2 and 7).
    %
  }
  \label{fig:architecture}
  \vspace{-0.2in}
\end{figure}
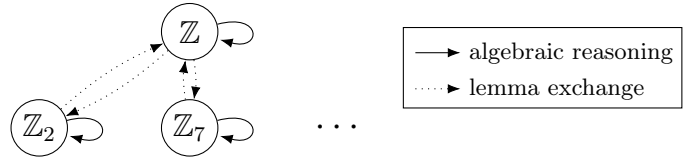

The primary contribution of this paper is a novel refutation procedure for multimodular systems.
The procedure, illustrated in Figure~\ref{fig:architecture}, partitions
constraints into \emph{subsystems} based on their moduli and then does local reasoning within each subsystem to detect conflicts.
\emph{Global reasoning} is done by exchanging lemmas across subsystems.  This is facilitated by two key
techniques: \emph{lifting}, which promotes modular constraints to the integer
domain, and \emph{lowering}, which projects integer constraints into specific
moduli.
%
%
The derivation and exchange of lemmas is further supported by two complementary strategies.
First, range analysis is used to verify whether a literal can be soundly lifted
or lowered without altering the satisfiability of the overall system.
Second, the procedure identifies additional literals \emph{implied} by a given
subsystem using algebraic techniques, such as weighted Gröbner bases and integer
linear programming, to uncover constraints that enhance cross-subsystem
reasoning.

We apply our procedure
to unsatisfiable benchmarks based on three kinds of cryptographic implementations.
The implementations include
Montgomery arithmetic~\cite{montgomery1985modular},
non-native field arithmetic for a
ZKP~\cite{succinct2024goldilocks,o1labs2023nonnative}, 
and
multi-precision bit-vector arithmetic for a ZKP~\cite{kosba2018xjsnark}.
We implement our procedure in cvc5~\cite{cvc5},
and show that it significantly outperforms prior solvers
on these benchmarks.
%
To summarize, our contributions are:
\begin{enumerate}[nosep]
  \item a refutation procedure for multimodular systems;
  \item two algorithms for finding shareable lemmas, one
    using
    a weighted monomial order
    and the other using
    integer linear constraints; and
  \item multimodular benchmarks
    from cryptographic verification applications.
\end{enumerate}

The rest of the paper is organized as follows. We
give a motivating example in Section~\ref{sec:motivating_example},
provide background in Section~\ref{sec:bg}, introduce a logic for multimodular systems in Section~\ref{sec:fragment},
explain our refutation procedure in Section~\ref{sec:dp} and its
implementation in Section~\ref{sec:implementation},
present benchmarks and experiments in Section~\ref{sec:experiments},
discuss related work in Section~\ref{sec:relwork},
and conclude in Section~\ref{sec:discuss}.

\section{Motivating Example}%
\label{sec:motivating_example}

Consider the following system of constraints $\Phi$,
which is based on code from a zero-knowledge proof
library written by Succinct Labs~\cite{succinct2024goldilocks}:
\begin{align*}
  &\Phi \triangleq&
  xy &\equiv r_1 + c_1p \mod q
     & \land &
     & 0 \le r_1, c_1 < p
     &&\land
  \\
     &&
  r_1y &\equiv r_2 + c_2p \mod q
     & \land &
     & 0 \le r_2, c_2 < p
     &&\land
   \\
     &&
  x+r_2 &\equiv r_3 + c_3p \mod q
     & \land &
     & 0 \le r_3, c_3 < p
  &&
\end{align*}
Here, $p$ and $q$ are concrete primes of $32$ and $256$ bits respectively,
but their specific values are not important.
%
%
$\Phi$ is designed to ensure a correctness condition
$C \triangleq (x+xy^2\equiv r_3 \mod p)$,
assuming bounds on the inputs $x$ and $y$: $B \triangleq (0 \le x, y < p)$.
Formally, proving $\Phi$ is correct is equivalent to proving the
validity of $(\Phi \land B) \Rightarrow C$,
or to proving
the unsatisfiability of:
\begin{equation}
  \Phi \land B \land \lnot C
  \label{eq:motivating_example}
\end{equation}

It is easy enough to refute Formula (\ref{eq:motivating_example}) by hand.
First,
observe that the range constraints in $B$ and $\Phi$
ensure that none of the equivalences
in $\Phi$ can overflow mod $q$---%
so the equivalences hold over the integers.
Second,
observe that said equivalences must also hold mod $p$:
\[
  xy \equiv r_1 \mod p
  \quad\land\quad
  r_1y \equiv r_2 \mod p
  \quad\land\quad
  x+r_2 \equiv r_3 \mod p
\]
Third,  these equivalences 
imply $x+xy^2\equiv r_3 \pmod p$,
which contradicts~$\lnot C$.

However, existing SMT solvers fail to do this refutation.
For example, when formula~(\ref{eq:motivating_example})
is encoded in \qfnia using explicit modular reductions, state-of-the-art SMT solvers (including  cvc5, z3, MathSAT, and Yices) fail to solve it. And when 
the formula is encoded in \qfbv using 512-bit bit-vectors, none of cvc5, z3, nor bitwuzla can solve it.\setcounter{footnote}{0}\footnote{All tests were run with a time limit of 20 minutes and  a memory limit of 8GB.}
\qfff solvers (cvc5 and Yices)
do not apply because they cannot encode a variable modulo more than one prime. 

The key ingredients in the manual refutation were
lifting equalities from a modular space into the integers---%
and then lowering them back to a different modular space.
In this paper, we show how to design a procedure
that performs this lifting and lowering automatically.

\section{Background}%
\label{sec:bg}

In this section,
we define notation and provide a brief overview of
algebra~\cite{mccoy1948rings},
ideals~\cite{david1991ideals},
and
SMT~\cite{BT18}.
%
More details can be found in the cited work.
\subsection{Algebra}

\paragraph{Sets, Intervals, and Functions}
%
%
Let $\ZZ$ be the integers, \ZZNN the non-negative integers, and $\ZZ^{+}$ the positive integers. Let $\ZZ^{+}_{\infty}$ be $\ZZ^{+}\cup\{\infty\}$,
and let $\ZZ_n$ be the non-negative integers less than $n$.
%
$\Vars$ denotes the set of variables $\{x_1, \dots, x_k\}$.
%
For a set $S$ and some $t$, the notation $S$,$t$ abbreviates $S \cup \{t\}$.
$[i,j]$ denotes the closed integer interval from $i$ to $j$.
Interval intersection is defined in the usual way: $[a,b] \cap [c,d] = [\max(a,c), \min(b,d)]$.  
%
For an interval, pair, or sequence $t$, we denote by $t_i$  the $i^{\mathit{th}}$ element of $t$, e.g., $(a,b)_1 = a$.
%
%
We use two variants of the modulo function.
$a \bmod n$, for $a \in \ZZ$ and $n \in \ZZ^{+}$,
is the unique $r \in \ZZ_n$ such that $a = qn + r$ for some $q\in\ZZ$
(as in SMT-LIB~\cite{BarFT-SMTLIB}).
The signed variant $a \smod n$ is defined as $a \bmod n$ if that value is at
most $\frac{n}{2}$ and $(a \bmod n) - n$ otherwise.

\paragraph{Polynomial Rings} 
%
Let $R$ be a ring~\cite{dummit_and_foot}.
We overload $\ZZ$ to also denote the integer ring
and $\ZZ_n$ to also denote the ring over $\{0, \dots, n-1\}$, with addition and multiplication modulo $n$.
Both $\ZZ$ and $\ZZ_n$ are principal ideal rings (PIR), and if $n$ is prime, $\ZZ_n$ is also a \emph{field}.
Let $R[\Vars]$ denote the ring of polynomials with variables in $\Vars$ and coefficients in $R$.
%
%
%
In this paper, we focus on the following rings: (1) $\ZZ[\Vars]$, which is the ring of polynomials with integer coefficients, and (2) $\ZZ_n[\Vars]$, which is the ring of polynomials with integer coefficients modulo $n$.
%
%
%
In ring $R[\Vars]$, a \emph{monomial} is a polynomial of the form
$x^{e_1}_1\cdots x_k^{e_{k}}$, with $e_i\in \ZZNN$.
When $e_i=0$ for every $i$, we denote the monomial as $1$.
A \emph{term} is a monomial multiplied by a coefficient.
Polynomials are written as a sum of terms with distinct monomials.

\paragraph{Monomial Orders}
A \emph{monomial order} $\le$ is a total order on monomials
that satisfies the following properties:
($i$) $1 \le m$ for every monomial $m$;
and ($ii$) for all monomials $m_1, m_2, m$, if $m_1\leq m_2$, then $m_1m \leq m_2m$.
Examples of monomial orders for a monomial of the form $x^{e_1}_1\cdots x_k^{e_{k}}$ include the following:
\emph{lexicographic order} compares monomials by (lexicographically) comparing their exponent tuples $(e_1,\dots,e_k)$,
\emph{graded reverse lexicographic order} by comparing $(\sum_{i=1}^ke_i, -e_k, \ldots, -e_1 )$,
and \emph{weighted reverse lexicographic order} by comparing
$(\sum_{i=1}^kw_ie_i, -e_k, \dots, -e_1)$
for a fixed tuple of weights $(w_1 \hdots w_k)$.
Given a monomial order,
the \emph{leading monomial} of a polynomial $p$, denoted $\lm(p)$,
is the largest monomial occurring in $p$ with respect to the monomial order.
The leading term, denoted $\leadt(p)$, is that monomial's term.
%
%

\subsection{Ideals}
For a set of polynomials $S = \{f_1, \dots, f_n\} \subset R[\Vars]$,
$I(S) = \{g_1f_1 + \dots + g_nf_n \mid g_i \in R[\Vars]\}$ is the \emph{ideal generated by $S$}.
%
%
In order to disambiguate which ring $R$ is meant in the definition of an ideal,
we use the notation $I_n(S)$, with $n\in\ZZ^{+}_{\infty}$.  The meaning of $I_n(S)$ is either the ideal generated by $S$ with $g_i \in \ZZ_n[\Vars]$, when $n\in\ZZ^{+}$, or the ideal generated by $S$ with $g_i \in \ZZ[\Vars]$, when $n = \infty$.

A \textit{solution} to the polynomial system $S \subset R[\Vars]$,
is  $\vec a \in R^{k}$ such that for all $f \in S$, $f(a_1, \dots, a_k)=0$.
The set of all solutions is called the \emph{variety} of $S$,
denoted  $\mathcal{V}(S)$.
As above, we use the subscript $n$ to distinguish among rings.
Thus $\mathcal{V}_n(S)$ for $n\in\ZZ^{+}$
is a subset of $\ZZ_n^k$
and
$\mathcal{V}_{\infty}(S)$
is a subset of $\ZZ^k$.
If $1\in I_n(S)$, then $\mathcal{V}_n(S) = \emptyset$, i.e., $S$ has no solution.
However, the converse does not hold.

One incomplete test for ideal membership is \emph{reduction}.
%
For the polynomials $p$, $g$, and $r$ $\in R[\Vars]$, where $R$ is a PIR, $p$ \emph{reduces} to $r$ modulo $g$, written $p \rightarrow_g r$, if some term $t$ of $p$ is divisible by $\leadt(g)$ with $r = p -\frac{t}{\leadt(g)} g$ \cite{eder2021efficient}. 
If $R$ is a field, then $p \rightarrow_g r$, if some term $t$ of $p$ is divisible by $\lm(g)$.
Reduction is also defined for a set of polynomials $S$, written as $p \rightarrow_S r$. $p$ reduces to $r$ modulo $S$ if there is a sequence of reductions from $p$ to $r$, each modulo some polynomial in $S$, and no further reduction of $r$ modulo $S$ is possible.
If $p$ reduces to 0 modulo $S$ then $p$ belongs to the ideal generated by $S$.
However, once again, the converse does not hold.

\paragraph{\gbs} A \emph{\gb}~\cite{buchberger} is a set of polynomials with special properties, including that reduction is a complete test for ideal membership: 
$p$ reduces to 0 modulo a \gb iff $p$ belongs to the ideal generated by the \gb.
There exist numerous algorithms for computing \gbs, including Buchberger's algorithm~\cite{buchberger}, F4~\cite{f4}, and F5~\cite{f5}.
%
%
%
%
%
We use $\GB$ with $n\in \ZZ_{\infty}^{+}$ to refer to a \gb computation. 
The subscript $n$ indicates which ring the \gb is computed in ($n\in \ZZ^{+}$ means $\ZZ
_n[\Vars]$ and $n = \infty$ means $\ZZ[\Vars]$), while
$\le$ indicates which monomial order to use. 
In this paper, we assume \gbs are \emph{strong} and \emph{reduced}, meaning that \GB is always deterministic, producing a single unique basis.

\subsection{SMT}
In addition to the algebraic domains above, we also work in the logical setting of many-sorted first-order logic with
equality~\cite{enderton2001mathematical}.
$\Sigma$ denotes a signature with a set of sort symbols (including \Bool), a symbol family $\approx_\sigma$  with sort $\sigma \times \sigma \rightarrow \Bool$ for all sorts $\sigma \in \Sigma$,\footnote{We drop the $\sigma$ subscript when it is clear from context.} and a set of interpreted function symbols.
We assume the usual definitions of well-sorted $\Sigma$-terms and literals, and refer to $\Sigma$-terms of sort \Bool as formulas.
To distinguish logical $\Sigma$-terms from algebraic terms in polynomials (defined above), we write \emph{$\Sigma$-term} to refer to the former, where $\Sigma$ is the signature.
A \emph{theory} is a pair $\mathcal{T} = (\Sigma, \textbf{I})$, where $\Sigma$ is a
signature and $\textbf{I}$ is a class of $\Sigma$-interpretations.
A \emph{logic} is a theory together with a syntactic restriction on formulas.
A formula $\phi$ is \emph{satisfiable} if it evaluates to \texttt{true}
in some interpretation in \textbf{I}.
Otherwise,
$\phi$ is \emph{unsatisfiable}.

The CDCL($\mathcal{T}$) framework of SMT aims to determine if a formula $\phi$ is \emph{satisfiable}. 
At a high level, a \emph{core} module explores the propositional abstraction of $\phi$ and forwards literals corresponding to the current propositional assignment to the \emph{theory solver}.
A \emph{theory solver}, specialized for a particular theory $\mathcal{T}$, checks if there exists an interpretation that satisfies the received set of literals. 
We focus on three main theories.
The theory of finite fields (defined in~\cite{CAV:OKTB23}), which we refer to as \FFT, reasons about finite fields, i.e., rings $\ZZ_n[\Vars]$ where $n$ is prime. 
We also make use of the standard SMT-LIB~\cite{BarFT-SMTLIB} theories of bit-vectors, which we denote \TheoryBV, and integer arithmetic, which we denote \INT. 
For two \INT terms $s$, $t$, we abbreviate the literal $\neg (s \approx t)$ as $s\not\approx t$. 
We use $\bowtie$ to refer to an operator in the set $\{\approx, \not\approx\}$.
\NIA refers to the \SMTLib logic that uses the theory \INT and restricts formulas to be quantifier-free.
%


\section{A Multimodular Logic}%
\label{sec:fragment}

\lstset{
  basicstyle=\itshape,
  xleftmargin=3em,
  literate={->}{$\rightarrow$}{2}
           {α}{$\alpha$}{1}
           {δ}{$\delta$}{1}
}

\begin{figure}[t]    
\begin{subfigure}[b]{0.4\textwidth}
  \begin{tabular}{@{}l@{\hskip 10pt}l@{\hskip 10pt}}
    \toprule
     Symbol & Arity \\
    \midrule
    & \\
    $n \in \ZZ $ &  \Int \\
    $-,+,\times$ &  \Int   $\times$ \Int $\rightarrow$ \Int  \\
    $\bmod$ & 
    \Int   $\times$ \Int $\rightarrow$ \Int   \\
    $\approx$ & \Int   $\times$ \Int $\rightarrow$ \Bool   \\
    $\leq,\geq$ & \Int $\times$ \Int $\rightarrow$ \Bool  \\
    & \\
    \bottomrule
  \end{tabular} 
  \caption{Signature used by \OurLogic.}
  \label{fig:signature}
\end{subfigure}
\hspace{0.05\textwidth}%
\begin{subfigure}[b]{0.54\textwidth}
\[
\begin{aligned}
  \texttt{Op} &\rightarrow \times \; | \; \texttt{+} \; | \; \texttt{-} \\
  \texttt{Exp} &\rightarrow ( \texttt{Exp} \; \texttt{Op} \; \texttt{Exp} )\; |  \; \texttt{Var} \;  | \; \texttt{Int} \\
  \texttt{BExp} &\rightarrow \texttt{Var} \; \texttt{$\leq$} \; \texttt{Int} \;
  | \; \texttt{Var} \; \texttt{$\geq$} \; \texttt{Int}\\
\texttt{EqExp}  &\rightarrow \texttt{Exp} \; \approx \; 0 \; | \;
    \texttt{Exp} \;  \bmod \; \texttt{Int} \; \approx \; 0\\
  \texttt{Atom}
   &\rightarrow \texttt{BExp} \; | \; \texttt{EqExp} \\
  \texttt{Literal} &\rightarrow \texttt{Atom} \; | \; \neg \texttt{Atom}
\end{aligned}
\]
\caption{\OurLogic grammar; \texttt{Int} $\in\ZZ$ and \texttt{Var} $\in \Vars$.}
\label{fig:fragment}
\end{subfigure}

\caption{The signature and grammar for \OurLogic, a fragment of \NIA.}
\vspace{-0.2in}
\end{figure}

%
%
%
Previous work on verifying arithmetic modulo large primes
\cite{CAV:OKTB23, CAV:OPBFBD24}
encodes constraints using \FFT.
However, the signature of \FFT does not support non-prime moduli or constraints
that share variables and use different moduli, limiting the range of
problems that can be encoded.
Instead, we encode multimodular constraints directly in \INT. 
%
%
However, we restrict the syntax of \INT by defining a logic called \OurLogic (multimodular integer arithmetic), which is
a fragment of \qfnia. 
%
Importantly, \OurLogic is not \textit{semantically} weaker than \qfnia;
it includes integer polynomials and predicates,
so all of \qfnia can be encoded in \OurLogic via standard rewrites.
Rather,
\OurLogic is a syntactic restriction of \NIA
that is designed to make the multimodular structure of queries
clearer
so that our procedure can leverage that structure.

The subset of the signature of \INT used by \OurLogic is shown in Figure~\ref{fig:signature}.  From now on, we use $\Sigma$ to denote this signature.  A grammar for the syntactic fragment of \qfmia is shown in Figure~\ref{fig:fragment}. We assume a set $\LVars = \{\LVar_1,\dots,\LVar_k\}$ of logical variables of sort \Int and define corresponding categories of $\Sigma$-terms as follows.
A \OurLogic \emph{expression} (produced by \texttt{Exp}) is either an integer constant, a variable (in $\LVars$), or an application of one of the operators $-$, $+$, or $\times$ to two expressions.  For simplicity, we often represent multiplication with juxtaposition (e.g., $ab$ instead of $a\times b$) and leave out parentheses when clear from context (i.e., when they can be inferred from standard operator precedence rules).
A \OurLogic \emph{atom} (produced by \texttt{Atom}) is an inequality (between a variable and a constant), an equality between an expression and 0, or an equality between an expression modulo some integer constant and zero.
A \OurLogic \emph{literal} (produced by \texttt{Literal}) is either an atom or the negation of an atom.  From now on, unless otherwise noted, expressions, atoms, and literals are \OurLogic expressions, atoms, and literals.  We also assume  that interpretations are \INT-interpretations and all notions of satisfiability are modulo \INT. 
%
Since we need to work in both the algebraic and the logical domains, we assume a bijection from logical variables in $\LVars$ to algebraic variables in $\Vars$ and define an operator \dbk{\cdot} that takes a logical $\Sigma$-term and returns the corresponding polynomial in $\ZZ[\Vars]$ (distributing multiplication and combining like terms as necessary to obtain a sum of terms, each with a unique monomial).

\section{Refutation Procedure}%
\label{sec:dp}

In this section, we describe our refutation procedure.
First we describe our approach at a high level, and then
we discuss its key technical ingredients.

\subsection{Key Ideas}


\label{subsec:overview}

A naive approach to solving a multimodular system of constraints is to use a standard encoding in \qfnia, introducing an auxiliary variable for each modular constraint.
For instance, $x \equiv y \ (\textsf{mod} \  n)$ would be encoded as $x \approx y+n\cdot k$ using an auxiliary integer variable $k$.
While sound,
this naive approach scales poorly,
as we show experimentally in Section~\ref{sec:experiments}.
Our key insight is that this limitation can often be overcome by \emph{partitioning} the original system into a \emph{set}
of different subsystems, one for each specific modulus.
%
Reasoning in each subsystem can be done efficiently, and if any subsystem is
unsatisfiable, then so is the original set of constraints.
However, since the converse is not true, our procedure seeks to exchange as much
information as possible between the different subsystems, with the goal of  improving our ability to detect unsatisfiable constraints.
%
To enable the exchange of information between the different subsystems, we employ the concepts of
\emph{lifting} and \emph{lowering}.

\begin{definition}[Liftable]
  Let $C$ be a set of \OurLogic literals.
  A literal of the form $e \bmod n \bowtie 0 $ is liftable (in $C$)
  if $C \cup  \{ e \bmod n \bowtie 0 \}$
  is equisatisfiable to
  $C \cup \{ e \mod n \bowtie 0,  e \bowtie 0\}$.
\end{definition}
\noindent
In other words, a constraint is liftable if adding the constraint without the
modulus $n$ maintains equisatisfiability.

\begin{definition}[Lowerable] Let $C$ be a set of \OurLogic literals. A literal the form $e \bowtie 0$
is \emph{lowerable} (in $C$) with respect to $n$ if $C \cup \{ e\bowtie 0 \}$ and $C \cup \{ e \bowtie 0, e \bmod n \bowtie 0 \}$ are equisatisfiable.
\end{definition}
\noindent
In other words, a constraint is \emph{lowerable} w.r.t. $n$
if adding it with a modular reduction maintains equisatisfiability.

\begin{remark}
    \label{remark:new_definition}
    If the literal $e \bmod n \bowtie 0$ is implied by $C$, then the lifting definition reduces to: $C$ and $C \cup \{e \bowtie 0\}$ must be equisatisfiable. Similarly, if the  literal  $e \bowtie 0$ is implied by $C$, it is lowerable w.r.t. $n$ if $C$ and $C\cup \{e  \bmod n \bowtie 0\}$ are equisatisfiable. In the remainder of the paper, we rely on these simpler versions of the definitions.
\end{remark}

Lifting provides a way for each subsystem containing modular constraints to share information in a common language without adding new variables.
%
Lowering adds constraints to individual subsystems and can often result in significant simplifications: when we lower an equation with respect to $n$, all integer constants divisible by $n$ can be replaced by $0$.
For example, lowering $\LVar_1 - 6\LVar_2 \approx 0$ with respect to $6$
adds a new equality $\LVar_1 \approx  0$ to the subsystem with modulus $6$.  We denote by $\simplify_n(e)$ the result of replacing every integer constant $c$ in $e$ by $c \smod n$ \footnote{We use $\smod$ instead of $\bmod$ to reduce the magnitude of coefficients of $\simplify_n(e)$ and increase the likelihood  that $\simplify_n(e)$ is liftable according to Lemma~\ref{lem:eq_lift}.} and simplifying. 

\newcommand{\CEqsN}{\ensuremath{\mathcal{C}_n^=}\xspace}
\newcommand{\CNEqsN}{\ensuremath{\mathcal{C}_n^{\ne}}\xspace}

As expected, not all constraints are liftable or lowerable.
In order to facilitate the inference of liftable and lowerable constraints,
our method splits the constraint system into the following subsystems,
for $n \in \PosZZ$:
\begin{itemize}[leftmargin=*]
\item
  {\bf Modulus-$n$ equality subsystems}
  are sets of expressions $\{e_1, \dots, e_m\}$. Each $e_i$ represents the constraint $e_i\bmod n \approx 0$ when $n\neq \infty$
  or the constraint $e_i \approx 0$ when $n = \infty$.
  We write $R_n^{\approx}$ to denote the set of expressions.
\item
  {\bf Modulus-$n$ disequality subsystems}
  are sets of expressions $\{e_1, \dots, e_m\}$. Each $e_i$ represents the constraint $e_i \bmod n  \not \approx 0$ when $n\neq \infty$
  or the constraint $e_i \not \approx 0$ when $n = \infty$.
  We write $R_n^{\not \approx}$ to denote the set of expressions.
\item {\bf Variable bounds:}
  Our method also maintains a mapping $B$ from each variable in $\LVars$ to its lower and upper bound (with $-\infty/\infty$ denoting unbounded variables).
  As we will see shortly, this bound information is crucial for identifying lowerable and liftable equations.
\end{itemize}

\begin{figure}[t]

\begin{center}
\AxiomC{ $\bowtie \ \in \{ =, \neq \}$}
  \AxiomC{$(e \bmod n \bowtie 0) \in C$}
\LeftLabel{\ModExp}
\BinaryInfC{$C:= C \setminus \{ e \bmod n \bowtie 0\} \quad \RMod^{\bowtie} := \RMod^{\bowtie}, \simplify_n(e)$}
  \DisplayProof
\end{center}
\begin{center}
\AxiomC{ $\bowtie \ \in \{ =, \neq \}$}
 \AxiomC{$ (e\bowtie  0) \in C$}
\LeftLabel{\IntExp}
\BinaryInfC{ $C:= C\setminus \{e \bowtie 0\} \quad R_{\infty}^{\bowtie} := R_{\infty}^{\bowtie} , e$}
  \DisplayProof
\end{center}
\begin{center}
\AxiomC{$(\LVar\geq n) \in C$}
\LeftLabel{\GEQ}
\UnaryInfC{$C:= C\setminus \{\LVar \geq n\} \quad B(\LVar)_1$ := $\max(B(\LVar)_1, n)$}
  \DisplayProof
\end{center}

\begin{center}
\def\defaultHypSeparation{\hskip .01in}
\AxiomC{$(\LVar \leq n) \in C$ }
\LeftLabel{\LEQ}
\UnaryInfC{$C:= C\setminus \{\LVar \leq c\} \quad B(\LVar)_2$ := $\min(B(\LVar)_2, n) $}
  \DisplayProof
\end{center}

\begin{center}
\AxiomC{$\neg(x\geq n) \in L$}
\LeftLabel{\nGEQ}
\UnaryInfC{$C:= C\setminus \{\neg(\LVar \geq n)\} \quad B(\LVar)_2 := \min(B(\LVar)_2, n-1)$}
  \DisplayProof
\end{center}

\begin{center}
\AxiomC{$\neg(\LVar\leq n) \in L$}
\LeftLabel{\nLEQ}
\UnaryInfC{$C:= C\setminus \{\neg(\LVar \leq n)\ \quad B(\LVar)_1:= \max(B(\LVar)_1, n+1)$}
  \DisplayProof
\end{center}
\caption{Encoding rules for a multimodular system $C$, where $e$ is an  expression and $n\in\ZZ^{+}_{\infty}$.}
\label{fig:split}
\end{figure}

Given a multimodular system $C$, we assume that it is encoded as a tuple $(B, \IntEqs, \IntNEqs, R_{n_1}^{\approx}, R_{n_1}^{\not \approx}, \dots, R_{n_k}^{\approx}, R_{n_k}^{\not \approx})$, where $\{n_1,\dots,n_k\}$ is the set of all integer constants greater than $1$ appearing anywhere in a literal in $C$.
For example, for a multimodular system  $\{2x \bmod 3 \approx  0, x < 5 \}$, the resulting tuple would be $(B,  \IntEqs, \IntNEqs,  R_{2}^{\approx}, R_{2}^{\not \approx}, R_{3}^{\approx}, R_{3}^{\not \approx},R_{5}^{\approx}, R_{5}^{\not \approx} \}$.
A set of rules for accomplishing the encoding is included in Figure~\ref{fig:split}.

Going forward, we use $C$ to refer both to the original set of constraints and to the tuple encoding.  
%
We also define $\CalcBds(B,e)$ as a function that returns the maximum and minimum possible values that expression $e$ can take when evaluated at the variable assignments permitted by the variable bounds map $B$, based on standard interval arithmetic \cite{intervalarithmetic}.  For example, given an equality $\LVar_1 \LVar_2 \bmod 6 \approx 0$ and variable bounds $B = \{\LVar_1 : [0,6], \LVar_2: [0,6]$\}, $\CalcBds(B, \LVar_1\LVar_2)$ would return the  interval $[0,36]$. 
Using this machinery, we now state the following lemmas that help identify liftable and lowerable constraints. We include the proofs in Appendix~\ref{sec:lemmaproofs}. Recall that $I_n$ computes the ideal generated by a set of polynomials.

\begin{lemma}
Let $C$ be a multimodular system with bounds $B$ and modulus-$n$ equality subsystem $\REqsN$, with $n\in\ZZ^{+}$.
Then, an equality $e \bmod n \approx 0$ is liftable in $C$ if
(1) $\dbk{e}
\in I_n(\dbk{\REqsN})$
and
(2) $\CalcBds(B, e) \subseteq [1-n, n-1]$.
\label{lem:eq_lift}
\end{lemma}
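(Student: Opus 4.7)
The plan is to reduce the claim to a single-direction equisatisfiability argument using Remark~\ref{remark:new_definition}, and then combine the algebraic hypothesis (1) with the range hypothesis (2) to pin down the value of $e$ exactly. Since the literal $e \bmod n \approx 0$ is already present in $\REqsN$, it is implied by $C$, so by Remark~\ref{remark:new_definition} it suffices to show that $C$ and $C \cup \{e \approx 0\}$ are equisatisfiable. The direction ``$C \cup \{e \approx 0\}$ satisfiable $\Rightarrow$ $C$ satisfiable'' is immediate, since dropping a conjunct can only enlarge the model set.

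For the interesting direction, I would take an arbitrary \INT-interpretation $\interp$ that satisfies $C$ and argue that $\interp$ must also satisfy $e \approx 0$. Because $\REqsN$ sits inside $C$ and is interpreted modulo $n$, we have $\interp(e_i) \equiv 0 \pmod n$ for every $e_i \in \REqsN$. Hypothesis (1) then tells us that $\dbk{e}$ is a $\ZZ_n[\Vars]$-combination of $\dbk{\REqsN}$; lifting the coefficients coordinatewise to $\ZZ[\Vars]$ gives an identity of the form $\dbk{e} = \sum_i \tilde g_i \, \dbk{e_i} + n\,h$ in $\ZZ[\Vars]$, for some integer polynomials $\tilde g_i$ and $h$. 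Evaluating this identity under $\interp$ and using that each $\interp(e_i) \equiv 0 \pmod n$, we conclude that $\interp(e) \equiv 0 \pmod n$.

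At this point hypothesis (2) closes the argument. The soundness of interval arithmetic (which is how $\CalcBds$ is defined) guarantees $\interp(e) \in \CalcBds(B,e) \subseteq [1-n, n-1]$. The only integer in the range $[1-n, n-1]$ that is congruent to $0$ modulo $n$ is $0$ itself, so $\interp(e) = 0$ and hence $\interp \models e \approx 0$. This shows that every model of $C$ is already a model of $C \cup \{e \approx 0\}$, which is stronger than equisatisfiability and completes the lift.

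The only subtle point I anticipate is the careful handling of the transition from the ideal membership statement in $\ZZ_n[\Vars]$ to an identity in $\ZZ[\Vars]$, since $\dbk{e}$ is defined as an integer polynomial while $I_n$ lives in $\ZZ_n[\Vars]$. The standard convention (which the paper implicitly adopts) is that $\dbk{e} \in I_n(\dbk{\REqsN})$ means the coefficient-wise reduction of $\dbk{e}$ modulo $n$ lies in that ideal, which is exactly what makes the lifted identity with a residual $n\,h$ term available; once this is spelled out, the rest of the argument is essentially a one-line evaluation followed by an elementary pigeonhole observation on the interval $[1-n, n-1]$.
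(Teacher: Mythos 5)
Your proof is correct and follows essentially the same route as the paper's: reduce to equisatisfiability via Remark~\ref{remark:new_definition}, show that in any model of $C$ the value of $e$ is both divisible by $n$ (from ideal membership) and confined to $[1-n,n-1]$ (from $\CalcBds$), hence zero. The only blemish is your opening claim that $e \bmod n \approx 0$ is implied because $e$ is ``already present in $\REqsN$''---that is not a hypothesis of the lemma---but your second paragraph correctly derives the implication from hypothesis (1) alone (indeed more carefully than the paper, by lifting the $\ZZ_n[\Vars]$ identity to $\ZZ[\Vars]$ with the residual $n\,h$ term), so nothing is actually missing.
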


\noindent
This lemma is useful in two ways.
First, if $e$ is in $\REqsN$, then certainly, $\dbk{e}$ is in $I_n(\dbk{\REqsN})$;
thus, checking whether the constraint represented by
 an element of $\REqsN$ is liftable reduces to computing $\CalcBds(B,e)$,
which can be done in linear time.
%
Second, this lemma gives a way to find \emph{additional} liftable
equalities that are not part of the original constraint system by identifying
polynomials that are in the ideal of $\dbk{\IntEqs}$.
While listing all the polynomials in an ideal is infeasible,
later subsections (Sections~\ref{subsec:gb} and~\ref{subsec:ilp}) explore effective methods to identify useful polynomials that are in
the ideal.
The next lemma states that disequalities are always liftable.

\begin{lemma}
Let $C$ be a multimodular system with modulus-$n$ disequality subsystem $\RNEqsN$, with $n\in\ZZ^{+}$.
Then a disequality $e\bmod n \not \approx 0$ is liftable in $C$ w.r.t. $n$ if $e\in \RNEqsN$.
\label{lem:diseq_lift}
\end{lemma}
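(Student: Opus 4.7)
The plan is to exploit Remark~\ref{remark:new_definition} to reduce the statement to a very simple semantic observation. Since we are told that $e \in \RNEqsN$, the literal $e \bmod n \not\approx 0$ is one of the constraints encoded in $C$, so it is trivially implied by $C$. By the remark, it therefore suffices to show that $C$ and $C \cup \{e \not\approx 0\}$ are equisatisfiable.

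One direction is immediate: any model of $C \cup \{e \not\approx 0\}$ is in particular a model of $C$. For the other direction, I would argue by contrapositive on a pointwise basis: fix any \INT-interpretation \interp satisfying $C$, and show that $\interp(e) \ne 0$. Suppose for contradiction that $\interp(e) = 0$. Then $\interp(e) \bmod n = 0 \bmod n = 0$, contradicting the fact that \interp must satisfy the literal $e \bmod n \not\approx 0$ (which is encoded by $e \in \RNEqsN$). Hence $\interp$ also satisfies $e \not\approx 0$, so \interp witnesses satisfiability of $C \cup \{e \not\approx 0\}$.

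There is essentially no obstacle; the content is the single arithmetic fact that $0 \bmod n = 0$ for every $n \in \ZZ^{+}$, together with the agreement between the tuple encoding of $C$ (as fixed in Section~\ref{subsec:overview}) and the semantics of its corresponding set of literals. The only subtlety worth stating explicitly in the write-up is citing Remark~\ref{remark:new_definition} to justify using the simplified version of the liftability definition, since without it one would have to also reason about $C \cup \{e \bmod n \not\approx 0\}$ on both sides, which would add no content but obscure the argument.
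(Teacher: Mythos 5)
Your proposal is correct and matches the paper's own argument: both reduce (via the simplified definition from Remark~\ref{remark:new_definition}) to showing equisatisfiability of $C$ and $C \cup \{e \not\approx 0\}$, note the right-to-left direction is trivial, and observe that any model of $C$ forces $e^{\interp}$ to be non-divisible by $n$ and hence nonzero. There is no gap.
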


\noindent
This lemma states that all of the original disequalities are liftable. However, unlike the equality case, we cannot infer \emph{additional}
disequalities using ideals, as disequalities are not preserved under the operations used to construct the elements of an ideal.
The next two lemmas are dual to Lemmas~\ref{lem:eq_lift}
and~\ref{lem:diseq_lift}, but are for lowerability instead of liftability.

\begin{lemma}
  Let $C$ be a multimodular system with integer equalities $\IntEqs$.
  If $e$ is an expression and $\dbk{e} \in  I_{\infty}(\dbk{\IntEqs})$, then $e\approx0$ is lowerable w.r.t. every $n\in\ZZ^{+}$.
  \label{lem:eq_lower}
\end{lemma}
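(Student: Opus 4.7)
The plan is to reduce the statement to the simpler form given by Remark~\ref{remark:new_definition}. First I will observe that the hypothesis $\dbk{e} \in I_{\infty}(\dbk{\IntEqs})$ means there exist polynomials $g_1, \dots, g_m \in \ZZ[\Vars]$ and expressions $f_1, \dots, f_m \in \IntEqs$ with $\dbk{e} = \sum_{i=1}^m g_i \dbk{f_i}$ in $\ZZ[\Vars]$. Under any \INT-interpretation \interp satisfying $C$, each literal $f_i \approx 0$ holds (since it belongs to $\IntEqs \subseteq C$), so $e$ evaluates to $\sum_i g_i \cdot 0 = 0$. Hence $C$ entails $e \approx 0$, which activates Remark~\ref{remark:new_definition}: it suffices to show that for every $n \in \ZZ^{+}$, the sets $C$ and $C \cup \{e \bmod n \approx 0\}$ are equisatisfiable.

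Next I will handle the two directions of equisatisfiability. The reverse direction is immediate since $C \cup \{e \bmod n \approx 0\} \supseteq C$. For the forward direction, any \interp satisfying $C$ evaluates $e$ to $0$ (by the argument above), and therefore evaluates $e \bmod n$ to $0 \bmod n = 0$; so \interp also satisfies $e \bmod n \approx 0$. Notice that no bounds hypothesis is needed, in contrast to Lemma~\ref{lem:eq_lift}: going from an exact integer equality down to a modular equality can never overflow, whereas the reverse direction must rule out wraparound.

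The main subtlety is bridging the algebraic and logical layers cleanly. The ideal $I_{\infty}(\dbk{\IntEqs})$ is defined in $\ZZ[\Vars]$ using the polynomial images $\dbk{\cdot}$, but \emph{lowerability} is a statement about \INT-interpretations of logical $\Sigma$-terms. I will explicitly invoke the bijection between $\LVars$ and $\Vars$ introduced in Section~\ref{sec:fragment}, together with the homomorphism property of evaluation, to justify the transfer: whenever \interp evaluates every $\dbk{f_i}$ to $0$ at the corresponding point in $\ZZ^{k}$, it also evaluates $\dbk{e} = \sum_i g_i \dbk{f_i}$ to $0$. Once this correspondence is stated, the rest of the proof is a one-line computation.
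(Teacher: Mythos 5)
Your proposal is correct and follows essentially the same route as the paper's proof: both observe that ideal membership makes $e \approx 0$ entailed by $C$, reduce to the simplified equisatisfiability condition, dispatch the right-to-left direction as trivial, and note that an interpretation forcing $e^{\interp} = 0$ also forces $e \bmod n$ to $0$. Your version merely spells out the linear-combination argument and the algebraic-to-logical transfer that the paper leaves implicit.
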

\begin{lemma}
    Let $C$ be a multimodular system with integer disequalities $\IntNEqs$ and bounds $B$.
    Then, if $n\in\ZZ^{+}$, a disequality $e \not \approx 0$ is lowerable in $C$ w.r.t. to $n$ if (1) $e \in \IntNEqs$ and (2) $\CalcBds(B,e) \subseteq [1-n,n-1]$ 
\label{lem:diseq_lower}
\end{lemma}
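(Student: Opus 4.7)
The plan is to invoke Remark~\ref{remark:new_definition}: since $e \not\approx 0$ is already in $C$ (by assumption (1), $e \in \IntNEqs$), it is implied by $C$, so it suffices to show the simpler equisatisfiability, namely that $C$ and $C \cup \{e \bmod n \not\approx 0\}$ have exactly the same models. One direction is immediate: dropping the added literal preserves satisfaction. So the real content is to show that every $\Sigma$-interpretation $\interp$ satisfying $C$ also satisfies $e \bmod n \not\approx 0$.

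First I would extract from $\interp\vDash C$ two concrete facts: (a) $\interp(e) \ne 0$, because $e \in \IntNEqs$ corresponds to the literal $e\not\approx 0$ in $C$; and (b) $\interp(e) \in [1-n, n-1]$, by hypothesis (2) together with the soundness of $\CalcBds$ as an interval-arithmetic overapproximation of the possible values of $e$ under the variable ranges in $B$ (which $\interp$ must respect, since the bound literals are in $C$).

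The key elementary observation is then that the only multiple of $n$ lying in the closed interval $[1-n, n-1]$ is $0$: any nonzero integer multiple of $n$ has absolute value at least $n$, and hence lies outside $[1-n, n-1]$. Combining this with (a) and (b) gives $\interp(e) \bmod n \neq 0$, which is exactly what it means for $\interp$ to satisfy $e \bmod n \not\approx 0$.

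The only nontrivial step is the appeal to soundness of $\CalcBds$, and this is a routine consequence of its definition via standard interval arithmetic; the rest is a one-line integer observation about the interval $[1-n,n-1]$. Note that the argument is essentially dual to the proof of Lemma~\ref{lem:eq_lift}, where the same interval bound was used to rule out nonzero integer multiples of $n$ when lifting equalities.
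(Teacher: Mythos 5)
Your proposal is correct and matches the paper's proof essentially line for line: reduce via the implied-literal form of lowerability to showing every model of $C$ satisfies $e \bmod n \not\approx 0$, then combine $e^{\interp} \neq 0$ with $e^{\interp} \in [1-n,n-1]$ (from soundness of $\CalcBds$) to conclude $e^{\interp}$ is not divisible by $n$. No gaps.
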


\subsection{Refutation Calculus}
\label{subsec:calculus}

Next, we  leverage the notions of liftability and lowerability defined in  the previous subsection to formulate our \emph{refutation calculus} (presented in Fig.~\ref{fig:calculus}).
The rules in our calculus serve four main roles.
First, they establish whether a specific subsystem is unsatisfiable.
Second, they attempt to tighten existing bounds and learn new equalities from these bounds.
Third, they leverage Lemmas~\ref{lem:eq_lift}--\ref{lem:diseq_lower} to
exchange information between different subsystems via lifting and lowering.
Finally, they simplify unliftable equalities via branching.

\begin{figure}[H]
 \begin{center}
 \def\defaultHypSeparation{\hskip .01in}
  \AxiomC{$1 \in I_n(\dbk{\REqsN})$}
  \LeftLabel{\OneGB}
  \UnaryInfC{\unsat}
  \DisplayProof
\quad 
  \AxiomC{ $ e \in \RNEqsN $}
 \AxiomC{$\dbk{e} \in I_n(\dbk{\REqsN})$}
  \LeftLabel{\UnsatDiseq}
  \BinaryInfC{\texttt{unsat}}
  \DisplayProof
\end{center}
\begin{center}
\def\defaultHypSeparation{\hskip .01in}
 \AxiomC{ $\; B(\LVar_i)_1 > B(\LVar_i)_2$}
 \LeftLabel{\InconBds}
 \UnaryInfC{\unsat}
 \DisplayProof
\end{center}
\begin{center}
\begingroup
\footnotesize
\def\defaultHypSeparation{\hskip .01in}
  \AxiomC{ $\dbk{e} \in I_{\infty}(\dbk{\IntEqs})$}
  \AxiomC{$e = a \LVar_i + e'$}
  \AxiomC{$x_i \notin e'$}
  \AxiomC{$B(\LVar_i)_1 \leq B(\LVar_i)_w$}
 \LeftLabel{\ConstrBds}
 \QuaternaryInfC{$B(\LVar_i):= \CalcBds(B,-\frac{1}{a}(e-a\LVar_i)) \cap B(\LVar_i) $}
 \DisplayProof
\endgroup
\end{center}

\begin{center}
  \AxiomC{ $B(\LVar_i)_1=B(\LVar_i)_2$}
  \AxiomC{$ \dbk{\LVar_i - B(\LVar_i)_1}  \notin I_{\infty}(\dbk{\IntEqs})$}
 \LeftLabel{\InfEq}
 \BinaryInfC{$\IntEqs:= \IntEqs, \LVar_i - B(\LVar_i)_1$ }
 \DisplayProof
\end{center}

\begin{center}
\def\defaultHypSeparation{\hskip .01in}
 \AxiomC{$n \neq \infty$}
 \AxiomC{$\dbk{e} \in I_n(\dbk{\REqsN})$}
 \AxiomC{$\CalcBds(B,e) \subseteq [1-n,n-1]$}
 \AxiomC{$ \dbk{e} \notin I_{\infty}(\dbk{\IntEqs}) $}
 \LeftLabel{\LiftEq}
 \QuaternaryInfC{$\IntEqs := \IntEqs, e$ }
 \DisplayProof
 \end{center}

\begin{center}
\def\defaultHypSeparation{\hskip .01in}
\AxiomC{$n\neq \infty$}
\AxiomC{$e\in \RNEqsN$}
\AxiomC{$e \notin \IntNEqs$}
\LeftLabel{\LiftDiseq}
\TrinaryInfC{$\IntNEqs := \IntNEqs,e$ }
\DisplayProof
\end{center}
\begin{center}
\def\defaultHypSeparation{\hskip .01in}
\AxiomC{$\dbk{e} \in I_{\infty}(\dbk{\IntEqs})$}
\AxiomC{$n \neq \infty$}
 \AxiomC{$\dbk{\simplify_n(e)} \notin I_{n}(\dbk{\REqsN})$}
\LeftLabel{\LowerEq}
\AxiomC{$\REqsN \in C$}
\QuaternaryInfC{$\REqsN := \REqsN, \simplify_n(e)$ }
\DisplayProof
\end{center}

\begin{center}
\begingroup
\footnotesize
\def\defaultHypSeparation{\hskip .01in}
  \AxiomC{ $e \in$ \IntNEqs}
  \alwaysNoLine
  \UnaryInfC{$n \neq \infty$ {\hskip .03in}
 $\CalcBds(B,e) \subseteq [1-n,n-1]$ {\hskip .03in} $ \simplify_n(e) \notin$ \RNEqsN {\hskip .03in} $\RNEqsN \in C$}
 \LeftLabel{\LowerDiseq}
 \alwaysSingleLine
 \UnaryInfC{$\RNEqsN := \RNEqsN, \simplify_n(e)$ }
 \DisplayProof
\endgroup
\end{center}
\begin{center}
\def\defaultHypSeparation{\hskip .01in}
  \AxiomC{$ \dbk{e} \in I_{n}(\dbk{\REqsN})$}
  \AxiomC{$n \neq \infty$}
  \alwaysNoLine
  \def\extraVskip{2pt}
  \BinaryInfC{$\CalcBds(B,e) \not\subseteq [1-n,n-1]$ {\hskip .03in} $\CalcBds(B,e) \subseteq [1-2n, 2n-1]$ }
  \def\extraVskip{2pt}
  \UnaryInfC{ $ \dbk{e-n} \notin I_{\infty}(\dbk{\IntEqs})$ {\hskip .03in} $ \dbk{e} \notin I_{\infty}(\dbk{\IntEqs}) $ {\hskip .03in} $ \dbk{e+n} \notin I_{\infty}(\dbk{\IntEqs})$}
\LeftLabel{\RngLift}
\alwaysSingleLine
\UnaryInfC{$\IntEqs := \IntEqs, e - n \branch \IntEqs := \IntEqs, e \branch \IntEqs := \IntEqs, e+n$}
  \DisplayProof
\end{center}
\begin{center}
\def\defaultHypSeparation{\hskip .01in}
  \AxiomC{$ \dbk{e}  \in I_{n}(\dbk{\REqsN})$}
  \AxiomC{$ \dbk{e} = s^2 -s$}
   \alwaysNoLine
  \def\extraVskip{2pt}
  \BinaryInfC{ $n \neq \infty$ {\hskip .03in} \texttt{IsPrime}($n$)  {\hskip .03in} $ \dbk{s} \notin I_{n}(\dbk{\REqsN})$ {\hskip .03in} $\dbk{s-1} \notin I_{n}(\dbk{\REqsN}) $ }
  \LeftLabel{\PrimeSqr}
  \alwaysSingleLine
  \UnaryInfC{ $\REqsN := \REqsN, s \branch \REqsN := \REqsN, s-1$}
  \DisplayProof
\end{center}
\vspace{-0.2in}
\caption{ Derivation rules.  $e,s$ are  expressions, $a\in \ZZ$, and $n\in\ZZ^{+}_{\infty}$.}
\label{fig:calculus}
\end{figure}

We present the calculus as rules that modify \emph{configurations},
as is common in SMT procedures \cite{sheng2022reasoning, liang2014dpll}.
Here, a configuration is the representation of the system of constraints 
$C$ as the tuple $(B, \IntEqs, \IntNEqs, R_{n_1}^{\approx}, R_{n_1}^{\not \approx}, \dots, R_{n_k}^{\approx}, R_{n_k}^{\not \approx})$ as described
in Section~\ref{subsec:overview}.
The rules are presented in \emph{guarded assignment form},
where the premises describe the conditions on the current configuration under which the rule can be
applied,
and the conclusion is either \unsat
or indicates how the configuration is modified.
A rule may have multiple alternative
conclusions separated by \branch. 
An application of a rule is \emph{redundant} if it does not change the
configuration in any way.
A configuration other than \unsat is \emph{saturated} if every possible
application is redundant. 
A \emph{derivation tree} is a tree where each node is a configuration and its children, if any, are obtained by a non-redundant application of a rule of the calculus. A derivation tree is closed
if all of its leaves are \unsat.  A \emph{derivation} is a sequence of derivation trees in which each element in the sequence (after the first) is obtained by expanding a single leaf node in the previous tree. We explain each class of rules in the calculus in more detail below.
%
%


\paragraph{Checking Unsatisfiability}
\UnsatOne is used to conclude unsatisfiability using algebraic techniques: As explained in Section~\ref{sec:bg},  if some ideal \dbk{\REqsN} contains the polynomial 1, this  indicates that the constraints represented by \REqsN have no common solution in the modulus-$n$ subsystem. Hence, we can conclude that the whole system is unsatisfiable.
\UnsatDiseq checks if any of the polynomials in the ideal of \dbk{\REqsN} match an expression in \RNEqsN. 
%
%
Finally, \InconBds concludes \unsat if some variable's lower bound exceeds its upper bound.

\paragraph{Tightening Bounds}
\ConstrBds tightens bounds based on equations in \IntEqs.
Suppose $e$ is an expression, with $\dbk{e} \in I_{\infty}(\dbk{\IntEqs})$, consisting of the sum of $a\LVar_i$ and some other expression $e'$.
%
\ConstrBds uses $e$ to compute new bounds for $\LVar_i$ and then intersects these with the current bounds for $\LVar_i$.
%
%
%
 \InfEq uses the bounds information to infer new equalities: When a variable's upper and lower bounds become equal, we can obtain a new equality.

\paragraph{Lifting/Lowering }
The lifting and lowering rules are directly based on the lemmas stated in Section~\ref{subsec:overview}. 
The rules \LiftEq and \LiftDiseq \emph{lift}
(dis)equalities from $\RMod^{\bowtie}$ to $\RInt^{\bowtie}$, by relying on Lemmas~\ref{lem:eq_lift}~and~\ref{lem:diseq_lift}. 
We also check entailment (polynomial ideal containment for equalities and set containment for disequalities) to avoid adding redundant information.
Dually,  the rules  \LowerEq and \LowerDiseq \emph{lower} (dis)equalities from  from $\RInt^{\bowtie}$ to  $\RMod^{\bowtie}$ by relying on Lemmas~\ref{lem:eq_lower}~and~\ref{lem:diseq_lower}. 
\paragraph{Branching Rules} When encountering equations that are \emph{almost} liftable, we rely on branching rules to either lift or simplify them. For example, if $\CalcBds(B,x) = [0,7]$, $x \bmod 6 \approx 0$ is \emph{almost} liftable, as its range exceeds the liftable range only slightly. 
\RngLift branches on possible values for $e$ that are within $|n|$ of the liftable range.
\PrimeSqr detects when the
value of a variable can only be 0 or 1;
it applies only to prime moduli
because other moduli have zero divisors.

\label{subsec:correctness}
We state \emph{soundness} and \emph{termination} theorems for our  calculus and include the proofs in Appendix~\ref{ap:sound_term}.
However, the calculus is \emph{not} complete---a saturated leaf in a derivation tree
does not necessarily mean that the constraints at the root of the tree are satisfiable.
%
%
%
%
%
\begin{theorem} Soundness: If $T$ is a closed derivation tree with root node $C$, then $C$ is unsatisfiable in \INT.
\label{thm:soundness}
\end{theorem}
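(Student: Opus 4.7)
The plan is to prove soundness by induction on the structure of the derivation tree $T$, using the standard contrapositive formulation: if the root configuration $C$ were satisfiable in \INT, then some leaf of $T$ would also be satisfiable, contradicting closedness. First I would formalize what it means for a configuration $(B, \IntEqs, \IntNEqs, \REqsNO, \RNEqsN, \dots)$ to be satisfied by an assignment $\sigma : \LVars \to \ZZ$: $\sigma$ must respect every variable bound in $B$, must satisfy $\dbk{e}(\sigma) = 0$ (resp. $\neq 0$) over $\ZZ$ for every $e \in \IntEqs$ (resp. $\IntNEqs$), and analogously $\dbk{e}(\sigma) \equiv 0 \pmod{n}$ (resp. $\not\equiv 0$) for every $e$ in the modulus-$n$ equality (resp. disequality) subsystem. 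The inductive step reduces to a case analysis showing that each rule is \emph{sound} in one of three senses: (i) a rule concluding \unsat can only fire on an unsatisfiable configuration; (ii) a non-branching rule sends satisfiable configurations to satisfiable configurations; (iii) a branching rule sends every satisfiable premise to at least one satisfiable branch.

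The ``easy'' cases I would dispatch first. \UnsatOne and \UnsatDiseq are immediate from the definition of ideal membership: if $1 \in I_n(\dbk{\REqsN})$ the modular subsystem is inconsistent, and if the same polynomial lies in $\REqsN$'s ideal and in $\RNEqsN$, the (dis)equality requirements contradict each other. \InconBds is trivial. For \ConstrBds and \InfEq, soundness follows because $\dbk{e} \in I_{\infty}(\dbk{\IntEqs})$ forces $e(\sigma) = 0$ over $\ZZ$, so one can solve for $\LVar_i$ using interval arithmetic on $-(e - a\LVar_i)/a$ and restrict the range accordingly. The lifting and lowering rules \LiftEq, \LiftDiseq, \LowerEq, \LowerDiseq are direct invocations of Lemmas~\ref{lem:eq_lift}--\ref{lem:diseq_lower}: each states precisely that the added constraint is equisatisfiable, given the ideal-membership and range-containment premises checked in the rule.

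The main obstacle will be the two branching rules. For \RngLift, I would argue as follows: any satisfying $\sigma$ evaluates $e$ to an integer divisible by $n$ (since $\dbk{e} \in I_n(\dbk{\REqsN})$), while the premise $\CalcBds(B,e) \subseteq [1-2n, 2n-1]$ forces $e(\sigma) \in \{-n, 0, n\}$; hence $\sigma$ satisfies exactly one of the three conclusion configurations obtained by adding $e-n$, $e$, or $e+n$ to $\IntEqs$. The disequality side conditions $\dbk{e \pm n}, \dbk{e} \notin I_{\infty}(\dbk{\IntEqs})$ only ensure non-redundancy and do not affect soundness. For \PrimeSqr, primality of $n$ means $\ZZ_n$ is a field, so $s(s-1) \equiv 0 \pmod n$ implies $s(\sigma) \equiv 0$ or $s(\sigma) \equiv 1 \pmod n$; thus any satisfying $\sigma$ satisfies one of the two augmented subsystems. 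The trickiest book-keeping in both branching cases is verifying that the same $\sigma$ still satisfies all other components of the configuration, which is immediate since the non-targeted subsystems and bounds are unchanged.

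With each rule shown sound, the induction closes: starting from a hypothetical model of the root and walking down the tree, at every internal node the model extends to a model of some child, yielding a model at some leaf. Since every leaf of a closed tree is \unsat (and \unsat configurations have no models by the cases (i) above), this is a contradiction, so $C$ is unsatisfiable in \INT.
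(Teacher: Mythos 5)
Your proposal is correct and follows essentially the same route as the paper's proof: a rule-by-rule local soundness argument, with \UnsatOne/\UnsatDiseq justified by ideal properties, the bounds rules by interval arithmetic, the lift/lower rules by Lemmas~\ref{lem:eq_lift}--\ref{lem:diseq_lower}, and the two branching rules by exactly the divisibility-plus-range and no-zero-divisors arguments you give. The only detail the paper makes explicit that you gloss over is that the lowering rules insert $\simplify_n(e)$ rather than $e$, so one must also note that $\smod$ preserves residue classes modulo $n$; this is immediate and does not affect the validity of your plan.
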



%
\begin{theorem} 
\label{thm:termination}
 Termination: Every derivation starting from a finite configuration $C$ where every variable is bounded is finite.
\end{theorem}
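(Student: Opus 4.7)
The plan is to define a well-founded progress measure on configurations that strictly decreases with every non-redundant rule application. Since every rule branches into at most three alternatives, showing finite depth along every path plus finite branching yields finiteness of the entire derivation tree by König's lemma.

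The measure is a lexicographic tuple of three components. The first is $\mu_B := \sum_i (B(\LVar_i)_2 - B(\LVar_i)_1)$, the total width of the variable bounds; since $C$ is assumed bounded, $\mu_B$ is a nonnegative integer initially. No rule ever widens a bound, and each non-redundant application of \ConstrBds strictly tightens some $B(\LVar_i)$, strictly decreasing $\mu_B$. The second component tracks the ideal $I_{\infty}(\dbk{\IntEqs}) \subseteq \ZZ[\Vars]$ and, for each modulus $n \in \{n_1, \dots, n_k\}$ (the finite set of moduli fixed by the initial configuration), the ideal $I_n(\dbk{\REqsN}) \subseteq \ZZ_n[\Vars]$, each ordered by reverse ideal inclusion. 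These orderings are well-founded because both $\ZZ[\Vars]$ and $\ZZ_n[\Vars]$ are Noetherian by the Hilbert basis theorem. The explicit non-containment side conditions on \InfEq, \LiftEq, \LowerEq, \PrimeSqr, and \RngLift guarantee that each non-redundant application strictly enlarges one of these ideals.

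The third component is the pair $(|\IntNEqs|, \sum_n |\RNEqsN|)$ of disequality-set cardinalities; for this to be bounded I will argue that every disequality expression ever added to the configuration lies within a finite universe determined by $C$. The argument is by induction on derivation length. Let $M$ be the finite set of monomials appearing in the initial configuration and $C_0$ a bound on the magnitudes of initial coefficients. \LiftDiseq only copies an expression from some $\RNEqsN$ into $\IntNEqs$, preserving both monomials and coefficients. \LowerDiseq applies $\simplify_n$, which replaces each coefficient $c$ by $c \smod n$ without altering the monomials, so the result still has monomials in $M$ with coefficients bounded in magnitude by $\max(C_0, \max_i n_i / 2)$. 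The side conditions $e \notin \IntNEqs$ and $\simplify_n(e) \notin \RNEqsN$ then ensure that each application strictly enlarges its target set within this finite universe.

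Combining the three components lexicographically yields a well-founded measure that strictly decreases on every non-redundant rule application; the rules \UnsatOne, \UnsatDiseq, and \InconBds produce \unsat leaves, which do not extend the path. The main obstacle is the third component: unlike the bound and ideal measures, the finiteness of the disequality-expression universe is not immediate from the rule structure and requires the syntactic invariant about $\simplify_n$ and \LiftDiseq sketched above.
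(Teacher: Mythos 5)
Your proposal is correct and follows essentially the same route as the paper: a lexicographic measure combining bound tightness, ideals under reverse inclusion (Noetherianity of $\ZZ[\Vars]$ and $\ZZ_n[\Vars]$), and a finiteness argument for the universe of disequality expressions reachable via $\simplify_n$, with finite branching handled separately. The only cosmetic fix needed is in your third component: the raw cardinalities $(|\IntNEqs|, \sum_n |\RNEqsN|)$ \emph{increase} under \LiftDiseq and \LowerDiseq, so you should instead count the complement within your finite universe (as the paper does with its $\ND_n$ quantities) so that the component strictly decreases.
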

\subsection{Refutation Algorithm}
\label{subsec:alg}
\begin{algorithm}[t]
        \caption{Overview of the Refutation Procedure}
        \label{alg:overview}
        \KwIn{A set of constraints $C$}
        \KwOut{\unsat/\unknown}
        \Fn{{\sc Refute}$(C)$}{
        \DontPrintSemicolon
        $(B, \IntEqs, \IntNEqs \hdots \REqsN, \RNEqsN) \leftarrow \Split(C)$ \tcp*[l]{Figure~\ref{fig:split}} 
        \While{$\LiftLower((B, \IntEqs, \IntNEqs \hdots \REqsN, \RNEqsN))$\tcp*[l]{Section~\ref{subsec:calculus}-\ref{subsec:ilp}}} {\If{$\mathtt{CheckUnsat}((B, \IntEqs, \IntNEqs \hdots \REqsN, \RNEqsN))$\tcp*[l]{Section~\ref{subsec:calculus}}}{\KwRet{\unsat}}}
        \uIf{$S \leftarrow \Branch((B, \IntEqs, \IntNEqs \hdots \REqsN, \RNEqsN))$ 
        }
        {\For{$c_i \in S$}{\uIf{{\sc Refute}$(c_i) == \unknown$}{\KwRet{\unknown}}} \KwRet{\unsat}}
        {\KwRet {\unknown}}
        }
\end{algorithm}

Our implementation of the calculus follows the following strategy.  It first performs lifting via the \LiftEq and \LiftDiseq rules to exhaustion (i.e., until these rules no longer apply) and then attempts to refute each subsystem using the \UnsatOne and \UnsatDiseq rules for each \REqsN, $n \neq \infty$. If this fails, it applies the bounds-related rules  (\ConstrBds, \InfEq, \UnsatBds) and lowering rules (\LowerEq and \LowerDiseq), also to exhaustion. It then once again attempts refutation, but with $n=\infty$. If that also fails, it applies a branching rule, if applicable (trying first \PrimeSqr, then \RngLift), and then the entire strategy repeats.

Algorithm~\ref{alg:overview} presents an algorithmic view of the overall process.
Given constraints $C$, the procedure starts by encoding $C$ into a configuration using the $\Split$ routine, which creates the tuple encoding described in Section~\ref{subsec:overview} and applies the rules from Figure~\ref{fig:split}.
%
The main refutation procedure consists of the while loop in lines 3--7 and works as follows. 
First, it exchanges expressions between the different subsystems via lifting and lowering and tightens bounds or learns new equations from bounds (line 3).
Since lifting requires finding additional polynomials that are in the ideal, the implementation of the \LiftLower procedure utilizes the methods described in Sections~\ref{subsec:gb} and ~\ref{subsec:ilp} to find additional polynomials corresponding to exchangeable expressions.
The methods can be used individually or in combination, and we defer the discussion of their ordering to Section~\ref{sec:experiments}.
Next, at line 5, the algorithm invokes \CheckUnsat to attempt to refute any of the subsystems.
If \CheckUnsat returns true, the algorithm terminates with \unsat.
Otherwise,  the algorithm attempts to apply a branching rule (lines 8-12).  If successful, it recursively calls the {\sc Refute} procedure on a new set of constraints and returns \unsat if \emph{all} recursive calls perform successful refutation. Since our algorithm is intended only for refutation rather than model construction, it returns \unknown in all other cases. 
%
%
%

\subsection{Finding Liftable Equalities via Weighted \gbs}
\label{subsec:gb}
Recall that several rules in our refutation calculus require finding liftable or lowerable constraints that correspond to a polynomial in
the ideal of some existing set of polynomials.
As explained earlier, finding \emph{additional} liftable/lowerable constraints
is useful because they can make it easier to prove
unsatisfiability.
However, since ideals are
infinite, we need some algorithm for selecting which polynomials to target.
%
In this section, we present a method for computing ``useful'' polynomials---those that are likely to correspond to liftable constraints. We focus on the liftability of equalities only, because
all equalities are already lowerable, and disequalities do not form an ideal and thus are easy to enumerate.

Based on Lemma~\ref{lem:eq_lift}, a polynomial's bound endpoints are the most important indicators of liftability. 
As a result, we target  polynomials that we call \emph{near-zero polynomials}, or polynomials with a bound whose endpoints have small absolute value.  
Based on this definition, a polynomial with a bound $[-5,5]$ is \emph{closer to zero} than a polynomial with a bound of  $[-50,50]$ or $[49,50]$.

Since we cannot enumerate the ideal, one possibility is to
construct a basis for the ideal that contains polynomials that are more likely to refer to liftable constraints.
As discussed in Section~\ref{sec:bg}, one possible basis
is a \gb, which uses a monomial order.
A \gb is computed via an algorithm that often eliminates larger monomials with respect to the order.
Our idea is to compute a \gb using a carefully chosen monomial order
in which monomials that are closer to zero are smaller than those farther from zero.
The insight is that polynomials with near-zero monomials
are generally (though not always) more likely to be liftable.
To achieve this, our monomial order must encode information about bounds on the monomials.
To this end, we first take the maximum of the absolute values of the lower and upper bounds on each variable.
%
%
Additionally, since weighted monomial orders are determined by the dot product
of variable exponents and assigned weights, we apply logarithmic scaling to
ensure that the influence of a bound is appropriately adjusted by each variable's
degree.
Based on this intuition, we use a \emph{weighted reverse lexicographic}
monomial order (Sec.~\ref{sec:bg}),
with weights:

\footnotetext{The $\epsilon$ serves to avoid $\log(0)$ which is undefined.}

\begin{equation}
  \mathtt{weights} \leftarrow
  \left[
    \log \left(
      \max
      \left(
        \lvert B(\LVar_i)_1\rvert,
        \lvert B(\LVar_i)_2\rvert
      \right)
      + \epsilon\footnotemark
    \right)\,
    \mid \LVar_i \in \LVars
  \right]
\label{eq:weights}
\end{equation}
When applying the \LiftEq rule in the refutation calculus
our algorithm computes a \gb according to the order defined by
(\ref{eq:weights})
and only checks the liftability of the polynomials in this \gb.

Ideally, when we apply the \LiftEq  rule, we want to find a
\emph{complete} set of liftable equations---that is,
any other liftable constraint should be implied by the ones inferred by our
technique.
Our weighted \gb method does not have this completeness guarantee in general,
but the following theorem states a weaker completeness guarantee that it \emph{does} provide:

\begin{theorem}
  Let $C$ be a constraint system containing a modulus-$n$ equality subsystem
  \REqsN and  variable bounds $B$ s.t. for all $\LVar_i \in \LVars$, $0 \in B(\LVar_i)$.
  Let $ G = GB_{n,\le}(\dbk{\REqsN})$ be a \gb computed with a weighted reverse
  lexicographical order with weights from Equation~\ref{eq:weights}. 
  Finding liftable equalities derived from $I(\REqsN)$ by computing $G$ is complete if every generator $\dbk{e}\in G$ that has a leading monomial $\dbk{m}$ s.t. $\CalcBds(B,m)\subseteq[1-n,n-1]$ corresponds to a liftable expression $e \bmod n$ in $C$. 
  %


  \label{thm:lift:gb:partial_complete}
\end{theorem}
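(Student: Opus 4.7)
The plan is to fix a liftable polynomial $p \in I_n(\dbk{\REqsN})$ and show that $p$ lies in the ideal generated by those $e \in G$ whose leading monomials have bounds inside $[1-n, n-1]$; by the theorem's hypothesis every such $e$ is liftable, giving the desired completeness. The proof will analyze the sequence of reductions $p = p_0 \to p_1 \to \cdots \to 0$ modulo the Gröbner basis $G$ and argue that only ``small-leading-monomial'' generators are ever invoked.

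The first ingredient is a preparatory lemma: if $p$ is liftable then $\lm(p)$ itself has bound in $[1-n, n-1]$. Because $0 \in B(\LVar_j)$ for every variable, each term $c_i m_i$ of $p$ has an interval bound containing $0$, and the Minkowski-sum structure of $\CalcBds$ then forces $|\CalcBds(p)|_\infty \geq |c_i|\,|\CalcBds(m_i)|_\infty \geq |\CalcBds(m_i)|_\infty$ for every monomial $m_i$ of $p$ (using that integer coefficients satisfy $|c_i|\geq 1$). Hence every monomial appearing in a liftable $p$, and in particular $\lm(p)$, already has bound within $[1-n, n-1]$.

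The central ingredient is to tie the weighted reverse lex order to interval bounds. With weights $w_j = \log(M_j + \epsilon)$ where $M_j = \max(|B(\LVar_j)_1|, |B(\LVar_j)_2|)$, the weighted degree of a monomial $m$ is essentially $\log |\CalcBds(m)|_\infty$. In the regime where all weights are non-negative (the natural regime for the order to be a legal monomial order), I will use two monotonicity facts: (i) if $\lm(e) \mid \lm(p_i)$ then $|\CalcBds(\lm(e))|_\infty \leq |\CalcBds(\lm(p_i))|_\infty$, because divisibility implies a non-increase of weighted degree; and (ii) since each reduction step strictly lowers the leading monomial in the order, $|\CalcBds(\lm(p_{i+1}))|_\infty \leq |\CalcBds(\lm(p_i))|_\infty$. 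Inducting along the reduction then shows every $e_i$ actually used satisfies $\CalcBds(\lm(e_i)) \subseteq [1-n, n-1]$, so by the theorem's hypothesis is liftable.

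The last step is to promote the syntactic representation $p = \sum_i g_i e_i$, with every $e_i$ liftable, into the desired semantic implication: the lifted integer equalities $e_i \approx 0$ together with this representation force $p \equiv 0 \bmod n$, and the liftability of $p$ itself then yields $p \approx 0$ over $\ZZ$. The main obstacle I foresee lies in the edge case where some $M_j \in \{0,1\}$, because then $w_j$ can fail to be positive and the clean weight-to-bound correspondence needed for monotonicity (i) degrades; those cases will require either a side argument showing that the affected monomials contribute trivially to any reduction, or an explicit non-triviality assumption on the variable bounds so that the weights in Equation~(\ref{eq:weights}) truly induce a legal weighted monomial order.
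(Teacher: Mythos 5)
Your proposal is correct and follows essentially the same route as the paper's proof: the same preparatory lemma that liftability of a polynomial forces liftability of its leading monomial (via $0 \in B(\LVar_i)$), the same identification of the weighted degree with $\log$ of the interval-bound magnitude, and the same induction along the reduction sequence showing every generator invoked has a small leading monomial. Your flagged edge case about weights failing to be positive when $\max(|B(\LVar_j)_1|,|B(\LVar_j)_2|) \in \{0,1\}$ is a legitimate subtlety that the paper's proof glosses over with the bare assertion that the weights are non-negative.
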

   Completeness implies that the ideal generated by the generators of $G$, which correspond to liftable expressions, contains all polynomials corresponding to liftable expressions from $I_n(\dbk{\REqsN})$. 

Intuitively, completeness means
  $ \{\dbk{p} : \CalcBds(B,p) \subseteq [1-n,n-1] \land \dbk{p} \in I_n(\dbk{\REqsN}) \} \subseteq I_n(\{\dbk{g} : \CalcBds(B,g) \subseteq[1-n,n-1] \land \dbk{g} \in G\})$.
A detailed proof can be found in Appendix~\ref{ap:proof:gb}.
This theorem is useful because it allows us to identify cases when additional heuristics for identifying liftable polynomials might be useful.

\subsection{Finding Liftable Equalities via Integer Linear Constraints}
\label{subsec:ilp}

Now we present another method, complementary to the one in Section~\ref{subsec:gb},
for finding liftable constraints.
This new method is motivated by the following observation:
In many cases, liftable equalities can be obtained from \emph{linear combinations} of existing  expressions---i.e., they are of the form $e' = \sum_{i=1}^l a_i e_i $ for existing expressions $e_i$ and constant coefficients $a_i$.
The method proposed here aims to find these unknown coefficients $a_1, \ldots, a_n$ by setting up an auxiliary \emph{integer linear constraint}. If this linear constraint  is feasible, then the inferred equality is guaranteed to satisfy the conditions from Lemma~\ref{lem:eq_lift}.

To emphasize that the coefficients $a_i$ are the unknowns being solved for, we depict them using a bold font below.
Our goal is to encode the condition $\CalcBds(B,\pnew) \subset [1-n,n-1]$ in the generated constraint.
Since \CalcBds depends on \pnew's coefficients,
we express those coefficients in terms of the $\mathbf{a}_i$.
Let $\dbk{m_1}, \dots, \dbk{m_t}$
be all the monomials present in \dbk{\REqsN},
including the constant monomial $1 \in \ZZ[X]$.
Moreover, let $\dbk{e_i}$ have coefficients $c_{i,j}$,
such that $\dbk{e_i} = \sum_{j=1}^t c_{i,j}\dbk{m_j}$.
Then,
the coefficient of $\dbk{m_j}$ in $\dbk{e'}$,
denoted $\coef_j$,
is the linear polynomial,
\( \coef_j \triangleq \sum_{i=1}^t \textbf{a}_i c_{i,j} \).

We can now express the relevant bounds.
The lower and upper bounds on each \textit{monomial}
$m_j$ can be concretely computed
as
\(u_j \triangleq \CalcBds(B,m_j)_1\)
and
\(l_j \triangleq \CalcBds(B,m_j)_0\).
Then, the lower and upper bounds on $\pnew$ itself
are respectively:
\begin{align*}
  \lowerbnd &\triangleq
  \textstyle\sum_{j=1}^{t} (
    l_j \cdot \relu(\coef_j)
    -
    u_j \cdot \relu(-\coef_j)
  )
  \\
  \upperbnd &\triangleq
  \textstyle\sum_{j=1}^{t} (
    u_j \cdot \relu(\coef_j)
    -
    l_j \cdot \relu(-\coef_j)
  ),
\end{align*}

\noindent
where $\relu(a)$ is $\max(a,0)$. \relu constraints capture how negating a monomial affects its upper and lower bounds: if \(\coef_j > 0\), \(u_j\) contributes to the upper bound of \(\pnew\); otherwise, it contributes to the lower bound. These constraints can be encoded using binary variables when the upper and lower bounds are known \cite{tsay2021partition}. Since the operations are performed modulo \(n\), all coefficients must lie within the interval \([1-n, n-1]\).
Our  encoding $\Phi$ is then:
\begin{align*}
  \Phi \triangleq \quad \lowerbnd > -n \quad\land\quad \upperbnd < n
  \quad\land\quad\textstyle\sum_{i=1}^k\left(\relu(\coef_j) + \relu(-\coef_j)\right) > 0
\end{align*}
The first two conditions ensure  \pnew is  liftable,
and the last ensures that $\pnew \neq 0$.

\begin{theorem}
  Let $C$ be a constraint system containing modulus-$n$ equality subsystem \REqsN and variable bounds $B$. $\Phi$ is satisfiable iff there exists a linear combination of the form 
  $\pnew = \sum_{i=1}^l a_i e_i $, $a_i \in \ZZ $ and $e_i\in \REqsN$,
   s.t. \pnew is liftable in $C$ and $\pnew \neq 0$.
\label{thm:lift:ilp:partial_complete}
\end{theorem}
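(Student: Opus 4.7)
The plan is to prove each direction of the biconditional separately, with the key technical observation being that $\lowerbnd$ and $\upperbnd$ as defined in the ILP encoding equal $\CalcBds(B,\pnew)_0$ and $\CalcBds(B,\pnew)_1$ respectively when interval arithmetic is applied term-by-term. Once that identity is in hand, the rest is essentially a translation between the two formulations of the bound condition in Lemma~\ref{lem:eq_lift}, together with the observation that any integer linear combination of elements of $\REqsN$ automatically lies in $I_n(\dbk{\REqsN})$.

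For the ($\Leftarrow$) direction, I would start from a witness $\pnew = \sum_{i=1}^{l} a_i e_i$ with $a_i\in\ZZ$. Since $\pnew$ is an integer linear combination of elements of $\REqsN$, we get $\dbk{\pnew}\in I_n(\dbk{\REqsN})$ for free. Combined with liftability (i.e., the sufficient conditions of Lemma~\ref{lem:eq_lift}), this yields $\CalcBds(B,\pnew)\subseteq[1-n,n-1]$. Setting $\mathbf{a}_i := a_i$ inside $\Phi$, each $\coef_j$ becomes the actual coefficient of $m_j$ in $\pnew$, so by the interval-arithmetic identity above, $\lowerbnd \ge 1-n > -n$ and $\upperbnd \le n-1 < n$. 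The third conjunct follows because $\pnew\neq 0$ forces some $\coef_j\neq 0$, and then $\relu(\coef_j)+\relu(-\coef_j)=|\coef_j|>0$.

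For the ($\Rightarrow$) direction, given a satisfying assignment $\mathbf{a}_1,\dots,\mathbf{a}_l$ to $\Phi$, I would define $\pnew := \sum_i \mathbf{a}_i e_i$. Membership $\dbk{\pnew}\in I_n(\dbk{\REqsN})$ again holds automatically. The inequalities $\lowerbnd>-n$ and $\upperbnd<n$, which over the integers amount to $\lowerbnd\geq 1-n$ and $\upperbnd\leq n-1$, combine with the interval-arithmetic identity to give $\CalcBds(B,\pnew)\subseteq[1-n,n-1]$, so Lemma~\ref{lem:eq_lift} produces liftability. The third conjunct ensures $\pnew$ is not the zero polynomial because some $\coef_j$ is nonzero.

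The main obstacle is establishing the interval-arithmetic identity, namely that the term $\coef_j m_j$ contributes exactly $l_j\cdot\relu(\coef_j) - u_j\cdot\relu(-\coef_j)$ to $\CalcBds(B,\pnew)_0$ and dually to the upper bound. A case split on $\sign(\coef_j)$ suffices: when $\coef_j\geq 0$, $\coef_j\cdot[l_j,u_j]=[\coef_j l_j,\coef_j u_j]$ and the $\relu(-\coef_j)$ term vanishes; when $\coef_j<0$, $\coef_j\cdot[l_j,u_j]=[-(-\coef_j)u_j,-(-\coef_j)l_j]$ and the $\relu(\coef_j)$ term vanishes. Summing these contributions over $j$ recovers the expressions for $\lowerbnd$ and $\upperbnd$. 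The only subtlety is that $\CalcBds$ must be applied to $\pnew$ written in its expanded form $\sum_j \coef_j m_j$ (i.e., after combining like terms across the $e_i$), matching the structure assumed by the encoding; this is guaranteed since the monomials $m_1,\ldots,m_t$ are exactly the distinct monomials occurring across $\REqsN$.
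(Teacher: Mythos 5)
Your proposal is correct and follows essentially the same route as the paper's proof: both directions hinge on the identity $\lowerbnd = \CalcBds(B,\pnew)_1$ and $\upperbnd = \CalcBds(B,\pnew)_2$ established by a sign case-split on each $\coef_j$, with ideal membership coming for free from $\pnew$ being a $\ZZ$-linear combination of $\REqsN$ and the nonzero condition handled via the $\relu$-sum conjunct. Your version is if anything slightly more careful than the paper's (explicitly noting the integrality step from strict to non-strict inequalities and the need to take $\CalcBds$ on the expanded form $\sum_j \coef_j m_j$).
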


In the case that $\Phi$ is satisfiable, we seek to find all linearly independent solutions, as different solutions correspond to different liftable equations, all of which may be useful for proving unsatisfiability. Thus, we add constraints ruling out linear combinations of solutions found so far and iterate until $\phi$ is unsatisfiable.

\section{Implementation}
\label{sec:implementation}
We implemented our refutation procedure in the cvc5 SMT solver \cite{cvc5} as an alternative to the existing nonlinear integer solver.
%
%
We use Singular v4.4.0 \cite{DGPS} for the algebraic components of our procedure and glpk \cite{glpk} v4.6.0 for solving the integer linear constraints problems from Section~\ref{subsec:ilp}.
%
%
Below we describe the key details in our implementation.


 \paragraph{Rewrites} Our implementation  disables all rewrites for the $\bmod$ operator but retains the remaining cvc5 rewrites and pre-processing passes. 
 \paragraph{Ideal Membership Check}
    To check if a polynomial version of an expression is in $I_{n}(\dbk{\REqsN})$
    we compute a \gb (GB) of \dbk{\REqsN} using \emph{graded reverse lexicographic order}   unless \dbk{\REqsN} is already a GB.
    (Recall that reduction by a GB is a complete ideal membership test.)
    To prevent GB computation from becoming a bottleneck, we impose a 30-second timeout.
    If a timeout occurs, we only check inclusion in the set instead of the ideal.
  \paragraph{Liftable Equalities via Integer Linear Constraints}
    Experimental results showed that integer linear constraint solvers  perform poorly on problems with large integer constants.  
    To address this issue, our implementation  uses an approximate version where we scale constants using signed log defined as $\slog_2(a) = \frac{|a|}{a}\log_2(a)$ and use a 30-second timeout. Details of our approximation, including empirical results that motivate this approximation can be found in Appendix~\ref{sec:aprox_imp}. All experimental results reported in the paper use this relaxation of the  encoding. 
    


\section{Experiments}%
\label{sec:experiments}
Our experiments are designed to answer two key empirical questions
\begin{enumerate*}
  \item How does our refutation procedure compare to the state of the art?
  \item Do our lifting algorithms (Sec.~\ref{subsec:gb} and~\ref{subsec:ilp}) improve performance?
\end{enumerate*}
All of our expriments are run on a cluster with Intel Xeon E5-2637 v4 CPUs. Each run is limited to one physical core, 8GB memory, and 20 minutes.
\subsection{Benchmarks}%
\label{sec:benchmarks}
The benchmarks used in our experimental evaluation consist of logical formulas encoding the correctness of simulating arithmetic operations in a given \textit{specification} domain using arithmetic in a corresponding \textit{base} domain. Each domain is either a finite field (\FFp or \FFq) or a bit-vector domain (\BV). Implementations that represent values in the specification domain as multiple \textit{limbs} in the base domain are referred to as \textit{multiprecision} (\texttt{m}), while those using a single representation are classified as \textit{single precision} (\texttt{s}).
%
%
%
\begin{table}[t]
  \centering
  \renewcommand{\tabcolsep}{1em}
  \begin{tabular}{llll}
    \toprule
    Family & Spec. & Base  & Reference \\
    \midrule
    \benchFfBvSp & $\FFp$ & $\BV$  & Montgomery arithmetic~\cite{montgomery1985modular}\\
    \benchFfBvMp & $\FFp$ & $\BV$  & Montgomery arithmetic~\cite{montgomery1985modular}\\
    \benchFfFfSp & $\FFp$ & $\FFq$   & Succinct labs~\cite{succinct2024goldilocks,goldilocks2015hamburg}\\
    \benchFfFfMp & $\FFp$ & $\FFq$  & o1-labs~\cite{o1labs2023nonnative}\\
    \benchBvFfMp & $\BV$ & $\FFp$  & xjSnark~\cite{kosba2018xjsnark}\\
    \bottomrule
  \end{tabular}%
  \tabCaptionSpace%
  \caption{
    Our benchmarks,
    which verify arithmetic implementations with
    various
    specification domains,
    base domains and numeric precisions.
  }
  \label{tab:benchmarks}
  \vspace{-0.2in}
\end{table}

As summarized in Table~\ref{tab:benchmarks},  our benchmarks fall under the following categories:
%
%
%
%
\begin{itemize}[leftmargin=*]
\item
  \benchFfBvSp 
  encode the correctness of single-precision
  \textit{Montgomery arithmetic}~\cite{montgomery1985modular},
  which implements arithmetic modulo $p$
  using bit-vector operations,
  without mod-$p$ reductions, and is common in prime field CPU implementations.
  The benchmarks verify the correctness of the \texttt{REDC} subroutine
  and the end-to-end correctness of Montgomery's approach for
  evaluating  expressions mod $p$.
  %
\item
  \benchFfBvMp encode correctness of multi-precision Montgomery arithmetic.
\item
  \benchFfFfSp encode implementations
  of arithmetic modulo the 31-bit Goldilocks prime~\cite{goldilocks2015hamburg}
  modulo a 255-bit prime (the order of the BLS 12-381 elliptic curve~\cite{bls12381,barreto2003constructing}).
  The benchmarks model a Succinct labs
  implementation~\cite{succinct2024goldilocks}
  that is used for recursive zero-knowledge proofs (ZKPs).
\item
  \benchFfFfMp encode the correctness of implementations
  of arithmetic modulo one 255-bit prime
  using arithmetic modulo another 255-bit prime.
  Multiple limbs are used
  to avoid unintended overflow in the base domain.
  The benchmarks model an
  o1-labs implementation~\cite{o1labs2023nonnative}
 used for recursive ZKPs.
\item
  \benchBvFfMp encode correctness of multi-precision implementations of bit-vector arithmetic
  modulo a 255-bit prime.
  The benchmarks
  model techniques from the xjSnark ZKP
  compiler~\cite{kosba2018xjsnark}
  that are used to check RSA signatures.
\end{itemize}

\paragraph{Determinism}
We include \textit{determinism}~\cite{CAV:OPBFBD24}
benchmarks in addition to correctness
for families with a finite field base domain.
%
Determinism is a weaker property than correctness
but 
 rules out most bugs in practice~\cite{chaliasos2024sok}. %
%
%
%
We do not include determinism \benchFfFfMp benchmarks
because they are correct but \textbf{non}deterministic.
 Additional benchmark statistics are included in Appendix~\ref{ap:bench_stats}.

\subsection{State of the Art Comparison}
\paragraph{Baselines}
Our benchmarks are SMT problems in \qfmia (Sec.~\ref{sec:fragment}).
Since all variables have finite bounds,
our benchmarks can also be encoded in \qfbv.
Our determinism benchmarks contain only
variable bounds and
equations modulo one prime
and can thus be encoded in \qfff
using standard \FF encodings of range constraints~\cite{kosba2018xjsnark}.
As a result, the baselines for our work
are existing solvers
for \qfnia, \qfbv, and \qfff.
We compare against
Yices v2.6.5~\cite{yices22},
cvc5 v1.2.2~\cite{cvc5},
bitwuzla v0.5.0~\cite{niemetz2023bitwuzla},
and
z3 v4.13.1~\cite{z3},
which are state of the art for these logics.
We evaluate bitwuzla with and without abstraction~\cite{niemetz2024scalable}
and cvc5's \qfff solver with and without split \gbs~\cite{CAV:OPBFBD24}.
%



\begin{table}[t]
\centering
\setlength{\tabcolsep}{3pt}
\begin{tabular}{lccccccccccc}
  \toprule
  \multicolumn{1}{c}{} & \multicolumn{6}{c}{Family} &  \multicolumn{1}{c}{} \\
  \cmidrule(rl){2-7} 
  \multicolumn{1}{c}{}  & \benchFfBvSp & \benchFfBvMp & \multicolumn{2}{c}{\benchFfFfSp} & \benchFfFfMp & \multicolumn{2}{c}{\benchBvFfMp} & \multicolumn{1}{c}{}\\
  \cmidrule(rl){2-2} \cmidrule(rl){3-3} \cmidrule(rl){4-5} \cmidrule(rl){6-6} \cmidrule(rl){7-8}
  Solver/Logic  &  cor & cor & cor & det & cor & cor & det  & Total  \\
  \cmidrule(r){1-1} \cmidrule(rl){2-8} \cmidrule(l){9-9} 
  \textbf{ours {\tiny QF\_MIA }} & \textbf{53} &  14 & \textbf{10} & \textbf{84} & \textbf{50} & \textbf{24} & 20 & \textbf{255} \\
  z3 {\tiny QF\_NIA }        & 49  & 21      & 3          & 34 & 40    &      22 & 21     & 190          \\ 
  cvc5  {\tiny QF\_NIA }   &  40  & 35 & 0        & 19      & \textbf{50}        & 14 & 8    &   166       \\ 
  yices  {\tiny QF\_NIA }   &  32  & 24 & 0         & 2     & 10 & 13    &     8      &   89       \\ 
    bitwuzla{\footnotesize (abst.)} {\tiny QF\_BV } & 49 &    \textbf{49}         & 0           & 2 & 25    &     9 & 8      & 142          \\
      bitwuzla {\tiny QF\_BV }   & 49   &    \textbf{49}     & 0          & 0 & 0    &     9 & 8      & 115          \\
       cvc5 {\tiny QF\_BV }       & 43   &  48    & 0          & 0 & 0    &     9 & 7      & 107           \\
  z3 {\tiny QF\_BV }          & 48   &  37  & 0          & 0 & 0    &     9 & 8     & 102           \\

  cvc5 {\tiny QF\_FF }   & -- & --  & -- & 1                      & -- & --          &  \textbf{23}   &   24           \\
     yices {\tiny QF\_FF }   & -- & --  & -- & 0                      & -- & --          &  20   &   20          \\
  cvc5   {\footnotesize(split GB) }{\tiny QF\_FF }   & --  & -- & -- & 2                   & --        &  --  &  9   &    11        \\

  \cmidrule(r){1-1} \cmidrule(rl){2-9}
  \# Benchmarks   & 72 &58 & 10 & 98 & 50 & 24 & 24 &         336            \\
  \bottomrule
  \end{tabular}
  \tabCaptionSpace
  \caption{Solved benchmarks. Here, cor stands for correctness and det for determinism. -- indicates the benchmark family cannot be encoded into the logic.}
  \label{tab:results}
  \vspace{-0.2in}
\end{table}

\begin{table}[t]
\centering
\setlength{\tabcolsep}{3pt}
\begin{tabular}{lcccccccccccc}
        \toprule
        \multicolumn{1}{c}{} & \multicolumn{6}{c}{Family} &  \multicolumn{1}{c}{} \\
        \cmidrule(rl){2-8} 
        \multicolumn{1}{c}{}  & \benchFfBvSp & \benchFfBvMp & \multicolumn{2}{c}{\benchFfFfSp} & \benchFfFfMp & \multicolumn{2}{c}{\benchBvFfMp} & \multicolumn{1}{c}{}\\
        \cmidrule(rl){2-2} \cmidrule(rl){3-3} \cmidrule(rl){4-5} \cmidrule(rl){6-6} \cmidrule(rl){7-8}
        Ablation  &  cor & cor & cor & det & cor & cor & det  & Total &  Uniq.   \\
        \cmidrule(r){1-1} \cmidrule(rl){2-10} 
        \cmidrule(rl){9-10} 
        \textbf{Weighted GB} & 53 &  14 & \textbf{10} & \textbf{84} & \textbf{50} & \textbf{24} & 20  & \textbf{255} & 11 \\
         Unweighted GB & 41 & \textbf{25} & \textbf{10} & 61 & \textbf{50} & 17   & 20 & 223 & 1 \\ 
        Lin. Constraints  & 37 & 18 & 0 & 7 & 0 & 7  & 3 & 72 & 3 \\
        Weighted GB \& Lin. & \textbf{54} &  23 & \textbf{10} & 73 & \textbf{50} & \textbf{24} & 20 & 254 & \textbf{12}  \\
           \cmidrule(r){1-1}
           \cmidrule(rl){2-10}
           \# Benchmarks  & 72 &58 & 10 & 98 & 50 & 24 & 24 &         336            \\
         \bottomrule

    \end{tabular}
    \tabCaptionSpace
  \tabCaptionSpace
  \caption{Results with different lifting algorithms}
  \label{tab:ablation}
\end{table} 

\begin{figure}[t]
\vspace{-0.1in}
  \centering
\includegraphics[width=0.9\textwidth]{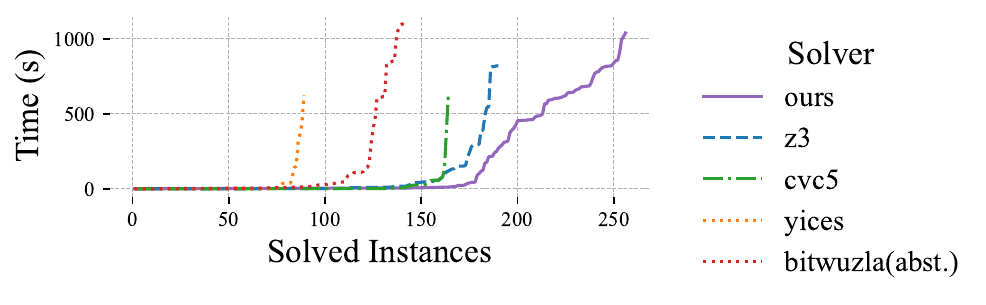}
\vspace{-0.15in}
    \caption{Benchmarks solved over time for top 5 solvers: ours (\OurLogic), z3 (\qfnia), cvc5 (\qfnia), yices (\qfnia), and bitwuzla w/ abstractions (\qfbv)}
\label{fig:cactus}
\vspace{-0.2in}
\end{figure}
\paragraph{Comparison}Table~\ref{tab:results} shows the number of \unsat benchmarks solved by family, type, logic, and tool, and Figure~\ref{fig:cactus} shows the performance for the top 5 solvers. 
Here,
`ours' stands for the best version of our solver: one with a lifting algorithm based on a weighted \gb (Section~\ref{subsec:gb}). 
This configuration outperforms existing tools on 5 out of the 7 categories, as well as in total benchmarks solved. 
It is also the most efficient and has the most unique solves, with 76 benchmarks not solved by any other solver. 
Of the 81 benchmarks our tool fails to solve, 27 remain unsolved by any solver. 
For these 81 benchmarks, 10 run out of memory, 29 time out, and 42 return \unknown.
Our solver performs the worst on the \benchFfBvMp family; it accounts for 34 of the \unknown benchmarks. 
We believe the poor performance in this domain is due to our algorithm's inability to find liftable equalities necessary to detect unsatisfiability, even though such equalities exist in the ideal. 
%
%


%
%
%

\subsection{Evaluation of Lifting Methods}
\label{subsec:lift_eval}

Recall that our method relies on identifying liftable equations, with Sections~\ref{subsec:gb} and~\ref{subsec:ilp} introducing two  potentially complementary approaches for this purpose. We now present the results of an evaluation comparing these lifting methods. As summarized in Table~\ref{tab:ablation}, the weighted \gb method achieves the best overall performance\textit{}. However, on the \benchFfBvMp benchmark family, the unweighted \gb method outperforms the weighted variant, supporting our hypothesis that the weighted \gb's inability to find certain liftable equalities contributes to its weaker performance in this category.\textit{}
Lifting based on integer linear constraints performs the poorest, identifying almost no liftable equalities required for refutation. Nevertheless, when combined with the weighted \gb method (in cases where the \gb method is not guaranteed to be complete  per Theorem~\ref{thm:lift:gb:partial_complete}), this hybrid approach has the most unique solves (12).



\section{Related work}%
\label{sec:relwork}

There are many SMT theory solvers for different kinds of modular integer
arithmetic.
Some solvers target theories of 
non-linear integer
arithmetic with an explicit modulus operator,
such as the theory of integers
\unskip~\cite{abraham2021deciding,caviness2012quantifier,cimatti2018incremental,marechal2016polyhedral,franzle2006efficient,tung2016rasat,weispfenning1997quantifier,jovanovic2013solving,moura2013model,jovanovic2011cutting,jovanovic2017solving,corzilius2015smt,bjorner2024arithmetic},
and the theory of bit-vectors
\unskip~\cite{%
  niemetz2024scalable,
  niemetz2020ternary,
  graham2020solving,
  brummayer2009boolector,
  niemetz2019towards}.
These solvers can be used to solve multimodular systems,
but generally perform poorly because they are designed to support
\textit{general} modular reduction
(i.e., reduction modulo a variable)
rather than reduction by constants.
There are solvers that efficiently support reduction by a single
constant or \textit{class} of constants.
For example,
finite field solvers
\unskip~\cite{%
LPAR:HKK23,
hadermsthesis,
hadersmt22,
CAV:OKTB23,
CAV:OPBFBD24}
support reduction modulo primes,
while bit-vector solvers support reduction modulo powers of two.
But, none of these can reason simultaneously about equations modulo a large
prime and powers of two.
The Omega test reasons about equations
modulo arbitrary constants---but the equations must be linear~\cite{pugh1991omega}.
Our procedure supports non-linear equations modulo arbitrary constants.

Some of the above solvers use \gbs: an algebraic tool
for understanding polynomial systems~\cite{buchberger1965algorithmus}.
For example, most similar to our refutation procedure, cvc5's finite field solver~\cite{CAV:OKTB23,CAV:OPBFBD24} also relies on a \gb to detect unsatisfiability of constraint subsystems. However, unlike our approach, it does not separate bounds or apply lifting, lowering, or weighted ordering techniques to share expressions across subsystems.

Many computer algebra systems (CASs)
implement algorithms for computing different kinds of \gbs
\unskip~\cite{bosma1997magma,
cocoalib,
zimmermann2018computational,
meurer2017sympy,
gap,
heck1993introduction,
DGPS,
eisenbud2001computations,
eder2021efficient,
davenport1992axiom,
wolfram1991mathematica}.
Following in this vein, our procedure uses
strong \gbs ~\cite{norton2001strong},
and our implementation uses the Singular CAS
\unskip~\cite{DGPS}.

Formal methods for modular arithmetic
have a long history of applications to cryptography.
Interactive theorem provers (ITPs) have been used to verify many cryptographic
implementations
\unskip~\cite{%
schwabe2021coq,
philipoom2018correct,
erbsen2020simple},
including the Fiat cryptography library,
which is used in all major web-browsers
\unskip~\cite{erbsen2018systematic}.
ITPs have also been used to verify zero-knowledge proofs (ZKPs)
\unskip~\cite{leo,leover,coda,fournet2016certified}.
But ITP-based verification requires significant manual effort and expertise.
SMT solvers for finite fields
\unskip~\cite{%
PLDI:PCWRVMCGFD23,
CAV:OWBB23}
and static analyses
\unskip~\cite{%
circomspect,
ecne,
zkap}
have been used to verify ZKPs
automatically, but these tools are limited by challenging
tradeoffs between
scalability and generality.
%

\section{Conclusion and Future Work}%
\label{sec:discuss}
In this paper we presented a novel refutation procedure for multimodular constraints and two algorithms for sharing lemmas: one based on a weighted \gb and another utilizing integer linear constraints. Our experiments demonstrate improvement over state-of-the-art solvers on benchmarks that arise from verifying cryptographic implementations. They also show the promise of lifting algorithms, both individually and in combination. Nevertheless, substantial future work remains.

First, as discussed in Sections~\ref{subsec:gb} and~\ref{subsec:ilp}, none of our lifting algorithms are complete. Exploring solutions with more completeness guarantees could boost the performance of our method. Second, as shown in Appendix~\ref{sec:aprox_imp}, lifting using linear integer constraints shows poor performance due to the limitations of existing solvers. A custom solver tailored to our encoding could lead to better results.
Third, our method focuses on unsatisfiable benchmarks and does not address model construction for satisfiable instances. However, we believe that leveraging the subsystem structure could also be beneficial for restricting the search space of possible assignments for satisfiable multimodular problems. 
Finally, our method treats \gb computation as a black box. Exploring more iterative methods based on s-polynomials could also boost performance.

\paragraph{Acknowledgements}
  We thank
  Ben Sepanski
  and
  Kostas Ferles
  for helpful conversations.
  We acknowledge funding from
  NSF grant number 2110397,
  the Stanford Center for Automated Reasoning,
  and the Simons foundation.

\vspace{1em}

{\fontsize{9pt}{11pt}\selectfont
\paragraph{Disclosure of Interest}Shankara Pailoor, Alp Bassa, and Işil Dillig are employed at Veridise.  
Alex Ozdemir and Sorawee Porncharoenwase previously worked at Veridise.  
Sorawee Porncharoenwase is currently employed at Amazon Web Services (AWS).  
Clark Barrett is an Amazon Scholar. 
}
\vspace*{-1em}
\begin{flushleft}
\footnotesize
\setlength{\parskip}{0pt}
\setlength{\itemsep}{0pt}
\bibliographystyle{abbrv}
\bibliography{refs}

\begin{thebibliography}{10}

\bibitem{cocoalib}
J.~Abbott and A.~M. Bigatti.
\newblock {CoCoALib}: A {C++} library for computations in commutative algebra... and beyond.
\newblock In {\em International Congress on Mathematical Software}, 2010.

\bibitem{abraham2021deciding}
E.~{\'A}brah{\'a}m, J.~H. Davenport, M.~England, and G.~Kremer.
\newblock Deciding the consistency of non-linear real arithmetic constraints with a conflict driven search using cylindrical algebraic coverings.
\newblock {\em Journal of Logical and Algebraic Methods in Programming}, 119, 2021.

\bibitem{cvc5}
H.~Barbosa, C.~W. Barrett, M.~Brain, G.~Kremer, H.~Lachnitt, M.~Mann, A.~Mohamed, M.~Mohamed, A.~Niemetz, A.~N{\"{o}}tzli, A.~Ozdemir, M.~Preiner, A.~Reynolds, Y.~Sheng, C.~Tinelli, and Y.~Zohar.
\newblock cvc5: {A} versatile and industrial-strength {SMT} solver.
\newblock In {\em TACAS}, 2022.

\bibitem{barreto2003constructing}
P.~S. Barreto, B.~Lynn, and M.~Scott.
\newblock Constructing elliptic curves with prescribed embedding degrees.
\newblock In {\em SCN}, 2003.

\bibitem{BarFT-SMTLIB}
C.~Barrett, P.~Fontaine, and C.~Tinelli.
\newblock {The Satisfiability Modulo Theories Library (SMT-LIB)}.
\newblock {\tt www.SMT-LIB.org}, 2016.

\bibitem{BT18}
C.~Barrett and C.~Tinelli.
\newblock Satisfiability modulo theories.
\newblock In E.~M. Clarke, T.~A. Henzinger, H.~Veith, and R.~Bloem, editors, {\em Handbook of Model Checking}, pages 305--343. Springer International Publishing, 2018.

\bibitem{barrett1986implementing}
P.~Barrett.
\newblock Implementing the rivest shamir and adleman public key encryption algorithm on a standard digital signal processor.
\newblock In {\em CRYPTO}, 1986.

\bibitem{bjorner2024arithmetic}
N.~Bj{\o}rner and L.~Nachmanson.
\newblock Arithmetic solving in z3.
\newblock In {\em CAV}, 2024.

\bibitem{bosma1997magma}
W.~Bosma, J.~Cannon, and C.~Playoust.
\newblock The {Magma} algebra system {I}: The user language.
\newblock {\em Journal of Symbolic Computation}, 24(3-4):235--265, 1997.

\bibitem{bls12381}
S.~Bowe.
\newblock {BLS12-381}: New zk-snark elliptic curve construction, Mar. 2017.
\newblock \url{https://electriccoin.co/blog/new-snark-curve/}.

\bibitem{brummayer2009boolector}
R.~Brummayer and A.~Biere.
\newblock Boolector: An efficient {SMT} solver for bit-vectors and arrays.
\newblock In {\em TACAS}, 2009.

\bibitem{buchberger1965algorithmus}
B.~Buchberger.
\newblock {\em Ein Algorithmus zum Auffinden der Basiselemente des Restklassenringes nach einem nulldimensionalen Polynomideal}.
\newblock PhD thesis, University of Innsbruck, 1965.

\bibitem{buchberger}
B.~Buchberger.
\newblock A theoretical basis for the reduction of polynomials to canonical forms.
\newblock {\em SIGSAM Bulletin}, 1976.

\bibitem{caviness2012quantifier}
B.~F. Caviness and J.~R. Johnson.
\newblock {\em Quantifier elimination and cylindrical algebraic decomposition}.
\newblock Springer Science \& Business Media, 2012.

\bibitem{chaliasos2024sok}
S.~Chaliasos, J.~Ernstberger, D.~Theodore, D.~Wong, M.~Jahanara, and B.~Livshits.
\newblock {SoK}: What don't we know? understanding security vulnerabilities in {SNARKs}.
\newblock In {\em USENIX Security}, 2024.

\bibitem{leo}
C.~Chin, H.~Wu, R.~Chu, A.~Coglio, E.~McCarthy, and E.~Smith.
\newblock Leo: A programming language for formally verified, zero-knowledge applications, 2021.
\newblock {P}reprint at \url{https://ia.cr/2021/651}.

\bibitem{cimatti2018incremental}
A.~Cimatti, A.~Griggio, A.~Irfan, M.~Roveri, and R.~Sebastiani.
\newblock Incremental linearization for satisfiability and verification modulo nonlinear arithmetic and transcendental functions.
\newblock {\em ACM TOCL}, 19(3), 2018.

\bibitem{leover}
A.~Coglio, E.~McCarthy, E.~Smith, C.~Chin, P.~Gaddamadugu, and M.~Dellepere.
\newblock Compositional formal verification of zero-knowledge circuits, 2023.
\newblock \url{https://ia.cr/2023/1278}.

\bibitem{corzilius2015smt}
F.~Corzilius, G.~Kremer, S.~Junges, S.~Schupp, and E.~{\'A}brah{\'a}m.
\newblock {SMT-RAT}: an open source {C++} toolbox for strategic and parallel {SMT} solving.
\newblock In {\em SAT}, 2015.

\bibitem{circomspect}
F.~Dahlgren.
\newblock It pays to be {C}ircomspect.
\newblock \url{https://blog.trailofbits.com/2022/09/15/it-pays-to-be-circomspect/}, 2022.
\newblock Accessed: 15 October 2023.

\bibitem{davenport1992axiom}
J.~Davenport.
\newblock The axiom system, 1992.

\bibitem{david1991ideals}
C.~David.
\newblock {Ideals, Varieties, and Algorithms-An Introduction to Computational Algebraic Geometry and Commutative Algebra}.
\newblock {\em Undergraduate Texts in Mathematics}, 1991.

\bibitem{DGPS}
W.~Decker, G.-M. Greuel, G.~Pfister, and H.~Sch\"onemann.
\newblock {\sc Singular} {4-4-0} --- {A} computer algebra system for polynomial computations.
\newblock \url{http://www.singular.uni-kl.de}, 2024.

\bibitem{diffie1976new}
W.~Diffie and M.~E. Hellman.
\newblock New directions in cryptography.
\newblock {\em {IEEE} Transactions on Information Theory}, 22(6), 1976.

\bibitem{dummit_and_foot}
D.~S. Dummit and R.~M. Foote.
\newblock {\em Abstract algebra}, volume~3.
\newblock Wiley Hoboken, 2004.

\bibitem{yices22}
B.~Dutertre.
\newblock Yices 2.2.
\newblock In {\em CAV}, 2014.

\bibitem{eder2021efficient}
C.~Eder and T.~Hofmann.
\newblock Efficient {Gr{\"o}bner} bases computation over principal ideal rings.
\newblock {\em Journal of Symbolic Computation}, 103:1--13, 2021.

\bibitem{eisenbud2001computations}
D.~Eisenbud, D.~R. Grayson, M.~Stillman, and B.~Sturmfels.
\newblock {\em Computations in algebraic geometry with Macaulay 2}, volume~8.
\newblock Springer Science \& Business Media, 2001.

\bibitem{enderton2001mathematical}
H.~B. Enderton.
\newblock {\em A mathematical introduction to logic}.
\newblock Elsevier, 2001.

\bibitem{erbsen2018systematic}
A.~Erbsen, J.~Philipoom, J.~Gross, R.~Sloan, and A.~Chlipala.
\newblock Systematic generation of fast elliptic curve cryptography implementations.
\newblock Technical report, MIT, 2018.

\bibitem{erbsen2020simple}
A.~Erbsen, J.~Philipoom, J.~Gross, R.~Sloan, and A.~Chlipala.
\newblock Simple high-level code for cryptographic arithmetic: With proofs, without compromises.
\newblock {\em ACM SIGOPS Operating Systems Review}, 54(1), 2020.

\bibitem{f4}
J.~C. Faugére.
\newblock A new efficient algorithm for computing {Gröbner} bases (f4).
\newblock {\em Journal of Pure and Applied Algebra}, 139(1):61--88, 1999.

\bibitem{f5}
J.~C. Faugére.
\newblock A new efficient algorithm for computing {Gr\"{o}bner} bases without reduction to zero (f5).
\newblock In {\em ISSAC}. ACM, 2002.

\bibitem{fournet2016certified}
C.~Fournet, C.~Keller, and V.~Laporte.
\newblock A certified compiler for verifiable computing.
\newblock In {\em CSF}, 2016.

\bibitem{franzle2006efficient}
M.~Fr{\"a}nzle, C.~Herde, T.~Teige, S.~Ratschan, and T.~Schubert.
\newblock Efficient solving of large non-linear arithmetic constraint systems with complex boolean structure.
\newblock {\em Journal on Satisfiability, Boolean Modeling and Computation}, 1(3-4), 2006.

\bibitem{gap}
{GAP} {\textendash} {G}roups, {A}lgorithms, and {P}rogramming, {V}ersion 4.13dev.
\newblock \href{https://www.gap-system.org}{\texttt{https://www.gap-system.org}}, this year.

\bibitem{gentry2009fully}
C.~Gentry.
\newblock Fully homomorphic encryption using ideal lattices.
\newblock In {\em STOC}, 2009.

\bibitem{goldwasser1989knowledge}
S.~Goldwasser, S.~Micali, and C.~Rackoff.
\newblock The knowledge complexity of interactive proof systems.
\newblock {\em {SIAM} Journal of Computation}, 18(1):186--208, 1989.

\bibitem{graham2020solving}
S.~Graham-Lengrand, D.~Jovanovi{\'c}, and B.~Dutertre.
\newblock Solving bitvectors with {MCSAT}: explanations from bits and pieces.
\newblock In {\em IJCAR}, 2020.

\bibitem{hadermsthesis}
T.~Hader.
\newblock Non-linear {SMT}-reasoning over finite fields, 2022.
\newblock MS Thesis (TU Wein).

\bibitem{LPAR:HKK23}
T.~Hader, D.~Kaufmann, and L.~Kov{\'{a}}cs.
\newblock {SMT} solving over finite field arithmetic.
\newblock In {\em LPAR}, 2023.

\bibitem{hadersmt22}
T.~Hader and L.~Kovács.
\newblock Non-linear {SMT}-reasoning over finite fields.
\newblock In {\em SMT}, 2022.
\newblock Extended Abstract.

\bibitem{goldilocks2015hamburg}
M.~Hamburg.
\newblock Ed448-goldilocks, a new elliptic curve, 2015.
\newblock \url{https://ia.cr/2015/625}.

\bibitem{heck1993introduction}
A.~Heck and W.~Koepf.
\newblock {\em Introduction to MAPLE}, volume 1993.
\newblock 1993.

\bibitem{intervalarithmetic}
T.~Hickey, Q.~Ju, and M.~H. Van~Emden.
\newblock Interval arithmetic: From principles to implementation.
\newblock {\em ACM}, 48(5):1038–1068, Sept. 2001.

\bibitem{jovanovic2017solving}
D.~Jovanovi{\'c}.
\newblock Solving nonlinear integer arithmetic with {MCSAT}.
\newblock In {\em VMCAI}, 2017.

\bibitem{jovanovic2013solving}
D.~Jovanovi{\'c} and L.~De~Moura.
\newblock Solving non-linear arithmetic.
\newblock {\em ACM Communications in Computer Algebra}, 46(3/4), 2013.

\bibitem{jovanovic2011cutting}
D.~Jovanovi{\'c} and L.~d. Moura.
\newblock Cutting to the chase solving linear integer arithmetic.
\newblock In {\em CADE}, 2011.

\bibitem{kosba2018xjsnark}
A.~Kosba, C.~Papamanthou, and E.~Shi.
\newblock {xJsnark}: A framework for efficient verifiable computation.
\newblock In {\em IEEE S\&P}, 2018.

\bibitem{liang2014dpll}
T.~Liang, A.~Reynolds, C.~Tinelli, C.~Barrett, and M.~Deters.
\newblock A {DPLL(T)} theory solver for a theory of strings and regular expressions.
\newblock In {\em CAV}. Springer.

\bibitem{coda}
J.~Liu, I.~Kretz, H.~Liu, B.~Tan, J.~Wang, Y.~Sun, L.~Pearson, A.~Miltner, I.~Dillig, and Y.~Feng.
\newblock Certifying zero-knowledge circuits with refinement types, 2023.
\newblock \url{https://ia.cr/2023/547}.

\bibitem{glpk}
A.~Makhorin.
\newblock {GNU} linear programming kit version 4.6.0.
\newblock http://www.gnu.org/software/glpk/glpk.html, 2024.

\bibitem{marechal2016polyhedral}
A.~Mar{\'e}chal, A.~Fouilh{\'e}, T.~King, D.~Monniaux, and M.~P{\'e}rin.
\newblock Polyhedral approximation of multivariate polynomials using handelman’s theorem.
\newblock In {\em VMCAI}, 2016.

\bibitem{mccoy1948rings}
N.~H. McCoy.
\newblock {\em Rings and ideals}, volume~8.
\newblock American Mathematical Soc., 1948.

\bibitem{meurer2017sympy}
A.~Meurer, C.~P. Smith, M.~Paprocki, O.~{\v{C}}ert{\'\i}k, S.~B. Kirpichev, M.~Rocklin, A.~Kumar, S.~Ivanov, J.~K. Moore, S.~Singh, et~al.
\newblock Sympy: symbolic computing in python.
\newblock {\em PeerJ Computer Science}, 3:e103, 2017.

\bibitem{montgomery1985modular}
P.~L. Montgomery.
\newblock Modular multiplication without trial division.
\newblock {\em Mathematics of computation}, 44(170):519--521, 1985.

\bibitem{z3}
L.~d. Moura and N.~Bj{\o}rner.
\newblock Z3: An efficient {SMT} solver.
\newblock In {\em TACAS}, 2008.

\bibitem{moura2013model}
L.~d. Moura and D.~Jovanovi{\'c}.
\newblock A model-constructing satisfiability calculus.
\newblock In {\em VMCAI}, 2013.

\bibitem{niemetz2020ternary}
A.~Niemetz and M.~Preiner.
\newblock Ternary propagation-based local search for more bit-precise reasoning.
\newblock In {\em FMCAD}, 2020.

\bibitem{niemetz2023bitwuzla}
A.~Niemetz and M.~Preiner.
\newblock Bitwuzla.
\newblock In {\em CAV}, 2023.

\bibitem{niemetz2019towards}
A.~Niemetz, M.~Preiner, A.~Reynolds, Y.~Zohar, C.~Barrett, and C.~Tinelli.
\newblock Towards bit-width-independent proofs in {SMT} solvers.
\newblock In {\em CADE}, 2019.

\bibitem{niemetz2024scalable}
A.~Niemetz, M.~Preiner, and Y.~Zohar.
\newblock Scalable bit-blasting with abstractions.
\newblock In {\em CAV}, 2024.

\bibitem{norton2001strong}
G.~H. Norton and A.~Sǎlǎgean.
\newblock Strong {Gr{\"o}bner} bases for polynomials over a principal ideal ring.
\newblock {\em Bulletin of the Australian Mathematical Society}, 64(3):505--528, 2001.

\bibitem{o1labs2023nonnative}
{o1-labs}.
\newblock Foreign field multiplication gate, 2024.
\newblock \url{https://github.com/o1-labs/rfcs/blob/eeb8070c9901c611c9a557464022bbf9237900b9/0006-ffmul-revised.md}.

\bibitem{CAV:OKTB23}
A.~Ozdemir, G.~Kremer, C.~Tinelli, and C.~Barrett.
\newblock Satisfiability modulo finite fields.
\newblock In {\em CAV}, 2023.

\bibitem{CAV:OPBFBD24}
A.~Ozdemir, S.~Pailoor, A.~Bassa, K.~Ferles, C.~Barrett, and I.~Dillig.
\newblock Split {Gr{\"o}bner Bases} for satisfiability modulo finite fields.
\newblock In {\em CAV}, 2024.

\bibitem{CAV:OWBB23}
A.~Ozdemir, R.~S. Wahby, F.~Brown, and C.~Barrett.
\newblock Bounded verification for finite-field-blasting.
\newblock In {\em CAV}, 2023.

\bibitem{PLDI:PCWRVMCGFD23}
S.~Pailoor, Y.~Chen, F.~Wang, C.~Rodr{\'\i}guez, J.~Van~Geffen, J.~Morton, M.~Chu, B.~Gu, Y.~Feng, and I.~Dillig.
\newblock Automated detection of under-constrained circuits in zero-knowledge proofs.
\newblock In {\em PLDI}, 2023.

\bibitem{philipoom2018correct}
J.~Philipoom.
\newblock {\em Correct-by-construction finite field arithmetic in Coq}.
\newblock PhD thesis, Massachusetts Institute of Technology, 2018.

\bibitem{pugh1991omega}
W.~Pugh.
\newblock The {Omega} test: a fast and practical integer programming algorithm for dependence analysis.
\newblock In {\em {SC}}, 1991.

\bibitem{rivest1978method}
R.~L. Rivest, A.~Shamir, and L.~Adleman.
\newblock A method for obtaining digital signatures and public-key cryptosystems.
\newblock {\em Communications of the {ACM}}, 21(2):120--126, 1978.

\bibitem{schwabe2021coq}
P.~Schwabe, B.~Viguier, T.~Weerwag, and F.~Wiedijk.
\newblock A {Coq} proof of the correctness of {X25519} in {TweetNaCl}.
\newblock In {\em CSF}, 2021.

\bibitem{sheng2022reasoning}
Y.~Sheng, A.~N{\"o}tzli, A.~Reynolds, Y.~Zohar, D.~Dill, W.~Grieskamp, J.~Park, S.~Qadeer, C.~Barrett, and C.~Tinelli.
\newblock Reasoning about vectors using an {SMT} theory of sequences.
\newblock In {\em IJCAR}, 2022.

\bibitem{succinct2024goldilocks}
{Succinct Labs}.
\newblock Gnark {Plonky2} recursive verifier: The goldilocks field implementation, 2024.
\newblock \url{https://github.com/succinctlabs/gnark-plonky2-verifier/tree/7025b2efd67b5ed30bd85f93c694774106d21b3d/goldilocks}.

\bibitem{tsay2021partition}
C.~Tsay, J.~Kronqvist, A.~Thebelt, and R.~Misener.
\newblock Partition-based formulations for mixed-integer optimization of trained relu neural networks.
\newblock {\em NeurIPS}, 2021.

\bibitem{tung2016rasat}
V.~X. Tung, T.~V. Khanh, and M.~Ogawa.
\newblock {raSAT}: An {SMT} solver for polynomial constraints.
\newblock In {\em IJCAR}, 2016.

\bibitem{walfish15cacm}
M.~Walfish and A.~J. Blumberg.
\newblock Verifying computations without reexecuting them.
\newblock {\em Communications of the {ACM}}, 58(2):74–84, 2015.

\bibitem{ecne}
F.~Wang.
\newblock Ecne: Automated verification of {ZK} circuits, 2022.
\newblock \url{https://0xparc.org/blog/ecne}.

\bibitem{weispfenning1997quantifier}
V.~Weispfenning.
\newblock Quantifier elimination for real algebra—the quadratic case and beyond.
\newblock {\em Applicable Algebra in Engineering, Communication and Computing}, 8(2), 1997.

\bibitem{zkap}
H.~Wen, J.~Stephens, Y.~Chen, K.~Ferles, S.~Pailoor, K.~Charbonnet, I.~Dillig, and Y.~Feng.
\newblock Practical security analysis of zero-knowledge proof circuits, 2023.
\newblock \url{https://ia.cr/2023/190}.

\bibitem{wolfram1991mathematica}
S.~Wolfram.
\newblock {\em Mathematica: a system for doing mathematics by computer}.
\newblock Addison Wesley Longman Publishing Co., Inc., 1991.

\bibitem{yao1982protocols}
A.~C. Yao.
\newblock Protocols for secure computations.
\newblock In {\em FOCS}, 1982.

\bibitem{zimmermann2018computational}
P.~Zimmermann, A.~Casamayou, N.~Cohen, G.~Connan, T.~Dumont, L.~Fousse, F.~Maltey, M.~Meulien, M.~Mezzarobba, C.~Pernet, et~al.
\newblock {\em Computational mathematics with SageMath}.
\newblock SIAM, 2018.

\end{thebibliography}
\end{flushleft}

\appendix
\section{Proofs of lifting and lowering lemmas.}
\label{sec:lemmaproofs}
\setcounter{lemma}{0}
\begin{lemma}
Let $C$ be a multimodular system with bounds $B$ and modulus-$n$ equality subsystem $\REqsN$, with $n\in\ZZ^{+}$.
Then, an equality $e \bmod n = 0$ is liftable in $C$ if
(1) $\dbk{e}
\in I_n(\dbk{R_n^=})$
and
(2) $\CalcBds(B,e) \subseteq [1-n, n-1]$.
\end{lemma}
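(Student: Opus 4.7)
The plan is to invoke Remark~\ref{remark:new_definition} to reduce the liftability claim to showing that $C$ and $C \cup \{e \approx 0\}$ are equisatisfiable. This reduction is valid because condition~(1), together with the semantics of $\REqsN$ (each $e_i \in \REqsN$ represents $e_i \bmod n \approx 0$), immediately gives that $e \bmod n \approx 0$ is implied by $C$: any assignment satisfying $C$ satisfies every equation in $\REqsN$, and any polynomial in the ideal generated by a set of polynomials vanishes on every common zero of that set (here, in $\ZZ_n$).

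One direction of the equisatisfiability is trivial: any model of $C \cup \{e \approx 0\}$ is automatically a model of $C$. For the nontrivial direction, fix a satisfying assignment $\alpha$ of $C$; I need to show that $e$ evaluates to $0$ in $\ZZ$ under $\alpha$. First, by the implication described above, $e(\alpha) \equiv 0 \pmod n$, i.e.\ $e(\alpha)$ is an integer multiple of $n$. Second, by condition~(2) and the soundness of $\CalcBds$ (an over-approximation via interval arithmetic), the integer $e(\alpha)$ lies in the interval $[1-n, n-1]$. The only multiple of $n$ in $[1-n, n-1]$ is $0$, so $e(\alpha) = 0$, and $\alpha$ satisfies $e \approx 0$ as required.

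The main obstacle, though it is mild, is articulating cleanly the link between algebraic ideal membership in $\ZZ_n[\Vars]$ and semantic implication in the multimodular setting. Specifically, I would need to justify that if $\dbk{e} = \sum_i g_i \cdot \dbk{e_i}$ for some $g_i \in \ZZ_n[\Vars]$ and $e_i \in \REqsN$, then for any integer assignment $\alpha$ with $e_i(\alpha) \equiv 0 \pmod n$ for all $i$, we have $e(\alpha) \equiv 0 \pmod n$. This follows by lifting the $g_i$ to $\ZZ[\Vars]$ representatives and observing that the resulting integer equation $e = \sum_i \tilde g_i e_i + n \cdot h$ (for some $h \in \ZZ[\Vars]$) preserves divisibility by $n$ when evaluated at $\alpha$. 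Once this bridge between the algebraic and semantic worlds is in place, the remainder of the proof is a short interval-arithmetic argument.
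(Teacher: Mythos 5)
Your proposal is correct and follows essentially the same route as the paper's proof: reduce via the remark to equisatisfiability, use ideal membership to get $e(\alpha) \equiv 0 \pmod n$, and use the $\CalcBds$ condition to conclude $e(\alpha) = 0$ since $0$ is the only multiple of $n$ in $[1-n,n-1]$. The only difference is that you spell out the bridge from ideal membership in $\ZZ_n[\Vars]$ to semantic implication (lifting the cofactors $g_i$ to integer representatives), which the paper simply asserts.
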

\begin{proof}
  Since $\dbk{e} \in I_n(\dbk{R_n^=})$, $e \bmod n = 0$ is implied by $C$.
  Thus, we simply must show that $C$ and $C \cup \{e = 0\}$ are equisatisfiable.  The right-to-left direction is trivial.  For the other direction, suppose $C$ is satisfiable.  Since
  $e \bmod n = 0$ is entailed by $C$ and since $\CalcBds(B,e)\subseteq [1-n, n-1]$, it follows that in any interpretation $\interp$ of $C$, $e^{\interp}$ (i.e., the value assigned by $\interp$ to $e$) must be divisible by $n$ and in the range $[1-n,n-1]$, meaning it must be equal to zero.  Thus, the same interpretation also satisfies $e = 0$.\qed
\end{proof}
\begin{lemma}
Let $C$ be a multimodular system with modulus-$n$ disequality subsystem $R_n^{\neq}$, with $n\in\ZZ^{+}$.
Then a disequality $e\bmod n \neq 0$ is liftable in $C$ w.r.t. $n$ if $e\in R_n^{\neq}$.
\end{lemma}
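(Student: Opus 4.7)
The plan is to reduce liftability to a simple equisatisfiability argument, exactly parallel to the proof of Lemma~\ref{lem:eq_lift}, but much shorter because no bound analysis is needed. Since $e \in R_n^{\not\approx}$, the literal $e \bmod n \not\approx 0$ is already present in $C$, so in particular it is entailed by $C$. By Remark~\ref{remark:new_definition}, liftability then reduces to showing that $C$ and $C \cup \{e \not\approx 0\}$ are equisatisfiable.

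The right-to-left direction is immediate: any interpretation satisfying the stronger set also satisfies $C$. For the left-to-right direction, I would take an arbitrary \INT-interpretation $\interp$ satisfying $C$ and argue that it must also satisfy $e \not\approx 0$. Since $e \bmod n \not\approx 0 \in C$, we have $e^{\interp} \bmod n \neq 0$. But $0 \bmod n = 0$ for any $n \in \ZZ^{+}$, so $e^{\interp} = 0$ would contradict the premise. Hence $e^{\interp} \neq 0$, and $\interp$ satisfies $C \cup \{e \not\approx 0\}$.

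There is essentially no obstacle here; the only thing to be careful about is the logical direction (a modular disequality is strictly stronger than an integer disequality, which is precisely why lifting disequalities never requires bound hypotheses, in contrast to Lemma~\ref{lem:eq_lift}). This also explains why the dual lowering statement for disequalities (Lemma~\ref{lem:diseq_lower}) does require a range hypothesis, whereas this lifting statement does not.
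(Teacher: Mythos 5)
Your proof is correct and follows the same route as the paper's: reduce liftability (via the remark) to equisatisfiability of $C$ and $C \cup \{e \not\approx 0\}$, note the right-to-left direction is trivial, and for the other direction observe that $e^{\interp} \bmod n \neq 0$ rules out $e^{\interp} = 0$. The concluding remarks about why no bound hypothesis is needed match the paper's discussion as well.
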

\begin{proof}
  Suppose $e \in \RNEqsN$.  This means that $C$ implies $e \bmod n \ne 0$, so we must simply show that $C$ and $C \cup \{e \neq 0\}$ are equisatisfiable.  The right-to-left direction is trivial.  For the other direction, suppose $C$ is satisfied by some interpretation $\interp$.  We know that $e^{\interp}$ cannot be divisible by n.  So in particular, it cannot be 0.  Thus $\interp$ also satisfies $e \neq 0$.\qed
\end{proof}
\begin{lemma}
  Let $C$ be a multimodular system with integer equalities $\IntEqs$.
  If $e$ is an expression and $\dbk{e} \in  I_{\infty}(\dbk{\IntEqs})$, then $e=0$ is lowerable with respect to every $n\in\ZZ^{+}$.
\end{lemma}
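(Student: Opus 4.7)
The plan is to reduce the statement, via Remark~\ref{remark:new_definition}, to showing that $C$ and $C \cup \{e \bmod n \approx 0\}$ are equisatisfiable, and then to derive this equisatisfiability from the ideal-membership hypothesis by a direct semantic argument.

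First, I would observe that the hypothesis $\dbk{e} \in I_{\infty}(\dbk{\IntEqs})$ means, by the definition of $I_{\infty}$ from Section~\ref{sec:bg}, that $\dbk{e} = \sum_{i} g_i \dbk{f_i}$ for some polynomials $g_i \in \ZZ[\Vars]$ and expressions $f_i \in \IntEqs$. Therefore, in any \INT-interpretation $\interp$ that satisfies $C$ (and hence satisfies every equation $f_i \approx 0$ from $\IntEqs$), we have $e^{\interp} = 0$. In other words, the literal $e \approx 0$ is already implied by $C$. This lets me invoke the simpler reformulation of lowerability from Remark~\ref{remark:new_definition}: it now suffices to show that $C$ and $C \cup \{e \bmod n \approx 0\}$ are equisatisfiable.

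For the actual equisatisfiability, the right-to-left direction is immediate since adding a literal can only restrict the set of models. For the left-to-right direction, let $\interp$ be any \INT-interpretation satisfying $C$. By the observation above, $e^{\interp} = 0$, and hence $e^{\interp} \bmod n = 0$ for every $n \in \ZZ^{+}$. Therefore $\interp$ also satisfies the literal $e \bmod n \approx 0$, and so $\interp \models C \cup \{e \bmod n \approx 0\}$. Since $n$ was arbitrary, this holds for every $n \in \ZZ^{+}$.

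I do not expect a serious obstacle here: the result is essentially definitional once ideal membership is unpacked. The only subtle point to be careful about is the direction of entailment from algebraic ideal membership in $\ZZ[\Vars]$ to semantic entailment over \INT, which follows because the coefficients $g_i$ in the ideal combination lie in $\ZZ[\Vars]$ and therefore evaluate to integers under any assignment. This is exactly the standard soundness of reduction for integer polynomials, and it is the same fact already relied upon implicitly in Lemma~\ref{lem:eq_lift}, so no new machinery is needed.
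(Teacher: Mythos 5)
Your proposal is correct and follows essentially the same route as the paper's proof: both reduce to equisatisfiability of $C$ and $C \cup \{e \bmod n \approx 0\}$ via the observation that ideal membership in $I_{\infty}(\dbk{\IntEqs})$ makes $e \approx 0$ entailed by $C$, and then note that $e^{\interp} = 0$ trivially gives $e^{\interp} \bmod n = 0$. Your explicit unpacking of the ideal combination $\dbk{e} = \sum_i g_i \dbk{f_i}$ is a slightly more detailed justification of the entailment step that the paper simply asserts.
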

\begin{proof}
Since $\dbk{e} \in I_{\infty}(\dbk{\IntEqs})$, $e=0$ is implied by $C$.
  Thus, we simply must show that $C$ and $C \cup \{e \bmod n = 0\}$ are equisatisfiable.  The right-to-left direction is trivial.  For the other direction, suppose $C$ is satisfiable.  Since
  $e = 0$ is entailed by $C$, it follows that in any interpretation $\interp$ of $C$, $e^{\interp} = 0$.  Thus, the same interpretation also satisfies $e \bmod n = 0$ for every $n\in\ZZ^{+}$.\qed
\end{proof}
\begin{lemma}
    Let $C$ be a multimodular system with integer disequalities $\IntNEqs$ and bounds $B$.
    Then, if $n\in\ZZ^{+}$, a disequality $e \neq 0$ is lowerable in $C$ w.r.t to $n$ if (1) $e \in \IntNEqs$ and (2) $\CalcBds(B,e) \subseteq [1-n,n-1]$ 
\end{lemma}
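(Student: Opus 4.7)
The plan is to mirror exactly the template used in the three preceding lemma proofs, since this lemma is the disequality-dual of Lemma~\ref{lem:eq_lift} (and the lowering analogue of Lemma~\ref{lem:diseq_lift}). First I would observe that hypothesis (1), $e \in \IntNEqs$, means the literal $e \not\approx 0$ is already implied by $C$. By Remark~\ref{remark:new_definition}, it therefore suffices to establish equisatisfiability of $C$ and $C \cup \{e \bmod n \not\approx 0\}$.

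Next I would dispatch the two directions of equisatisfiability. The right-to-left direction is trivial, since every model of the augmented set is a model of $C$. For the left-to-right direction, I take an arbitrary \INT-interpretation $\interp$ satisfying $C$ and show $\interp \vDash e \bmod n \not\approx 0$. Since $e \in \IntNEqs$, $\interp$ forces $e^{\interp} \neq 0$. By hypothesis (2) and the soundness of interval arithmetic used in $\CalcBds$, $e^{\interp} \in [1-n, n-1]$. Combining these facts, $e^{\interp}$ is a nonzero integer with $|e^{\interp}| < n$.

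The only step requiring a brief arithmetic observation is concluding from $0 < |e^{\interp}| < n$ that $n \nmid e^{\interp}$, hence $e^{\interp} \bmod n \neq 0$; this is exactly the dual of the key observation in the proof of Lemma~\ref{lem:eq_lift}, where the same range bound was used to force a multiple of $n$ to equal zero. Here we use its contrapositive flavor to rule out $e^{\interp}$ being a nonzero multiple of $n$.

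I do not anticipate any genuine obstacle: the argument is essentially a one-line contrapositive of the lifting lemma for equalities, and no machinery beyond what the previous three proofs used (Remark~\ref{remark:new_definition} plus the soundness of $\CalcBds$) is needed. The proof will run to only a few sentences and can be placed in Appendix~\ref{sec:lemmaproofs} immediately after the proof of Lemma~\ref{lem:eq_lower}, preserving the stylistic parallelism of the four lifting/lowering lemmas.
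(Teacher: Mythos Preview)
Your proposal is correct and matches the paper's own proof essentially step for step: the paper also invokes that $e \in \IntNEqs$ makes $e \not\approx 0$ implied by $C$, reduces to equisatisfiability, dismisses the right-to-left direction as trivial, and for the other direction uses $e^{\interp} \ne 0$ together with $e^{\interp} \in [1-n,n-1]$ to conclude $e^{\interp}$ is not divisible by $n$.
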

\begin{proof}
Since $e \in \IntNEqs$, $e\ne 0$ is implied by $C$.
  Thus, we simply must show that $C$ and $C \cup \{e \bmod n \ne 0\}$ are equisatisfiable.  The right-to-left direction is trivial.  For the other direction, suppose $C$ is satisfiable.  Since $e \ne 0$ is entailed by $C$, in any interpretation $\interp$ of $C$, $e^{\interp} \ne 0$.  We also know that $e^\interp \in [1-n,n-1]$.  Thus, $e^\interp$ cannot be divisible by $n$.  Thus $\interp$ also satisfies $e \bmod n \ne 0$.\qed
\end{proof}
\label{ap:lemma_proofs}
\section{Proofs of soundness and termination of our calculus}
\setcounter{theorem}{0}
\begin{theorem} Soundness: If $T$ is a closed derivation tree with root node $C$, then $C$ is unsatisfiable in \INT.
\end{theorem}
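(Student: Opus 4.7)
\medskip

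\noindent
\textbf{Proposal.} The plan is to prove soundness by induction on the structure of the closed derivation tree $T$, combined with a case analysis on the rule applied at each internal node. The key invariant I would establish is: for every configuration $C'$ appearing in $T$, if the subtree rooted at $C'$ is closed, then $C'$ is unsatisfiable in \INT. Since every leaf of $T$ is \unsat by assumption, applying the invariant at the root yields the theorem. Below I sketch the case analysis at an internal node $C'$ with children obtained by some rule application.

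First I would handle the three rules that directly derive \unsat. For \UnsatOne, the premise gives $1 \in I_n(\dbk{\REqsN})$, which means no assignment can simultaneously zero all polynomials in $\REqsN$ modulo $n$, so the subsystem, and hence $C'$, is unsatisfiable. For \UnsatDiseq, the premise $\dbk{e} \in I_n(\dbk{\REqsN})$ forces $e \bmod n \approx 0$ in every model of $C'$, contradicting $e \in \RNEqsN$. For \InconBds, the bounds on $\LVar_i$ are infeasible, so $C'$ has no model.

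Next I would handle the rules that modify the configuration with a single successor. For these I need to argue \emph{equisatisfiability} between $C'$ and its unique child $C''$, so that unsatisfiability of $C''$ (from the inductive hypothesis) lifts to $C'$. \ConstrBds and \InfEq just tighten the bounds map or add an equality on a variable whose value is already forced by $\IntEqs$, both clearly preserving models. The four exchange rules (\LiftEq, \LiftDiseq, \LowerEq, \LowerDiseq) are exactly the content of Lemmas~\ref{lem:eq_lift}--\ref{lem:diseq_lower}, which guarantee that lifting or lowering the specified (dis)equality preserves satisfiability under the rule's side conditions (the ideal-membership and range premises match the hypotheses of the relevant lemma). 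In each case I would point to the appropriate lemma and note that the new literal is implied by the current configuration, so adding it does not alter the set of models.

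Finally I would handle the two branching rules \RngLift and \PrimeSqr. Here the argument is: if every child is unsatisfiable (by induction), then so is $C'$, because the disjunction of the child configurations is implied by $C'$. For \RngLift, the premises $\dbk{e} \in I_n(\dbk{\REqsN})$ and $\CalcBds(B,e) \subseteq [1-2n, 2n-1]$ force any model to satisfy $e \bmod n \approx 0$ and $|e| < 2n$, so $e \in \{-n, 0, n\}$, exhausting the three branches. For \PrimeSqr, the premises $\dbk{e} = s^2 - s \in I_n(\dbk{\REqsN})$ and $n$ prime imply $s(s-1) \equiv 0 \pmod{n}$ in every model, and since $\ZZ_n$ is a field it has no zero divisors, so $s \approx 0$ or $s \approx 1$, matching the two branches. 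The main obstacle is precisely this last step: one must be careful that the interval and primality arguments genuinely cover all models, and that the premises of each rule line up exactly with the hypotheses of the lifting/lowering lemmas invoked, so that the equisatisfiability claims are not accidentally weaker than needed.
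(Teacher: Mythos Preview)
Your proposal is correct and follows essentially the same approach as the paper: a local soundness argument for each rule, with the \unsat rules justified by ideal and bound properties, the single-successor rules by equisatisfiability via Lemmas~\ref{lem:eq_lift}--\ref{lem:diseq_lower} (the paper additionally notes that $\simplify_n$ is sound because $\smod$ preserves congruence classes, which you should mention for \LowerEq and \LowerDiseq), and the branching rules by a coverage argument. One small slip: in your \PrimeSqr case, the conclusion is $s \equiv 0 \pmod n$ or $s - 1 \equiv 0 \pmod n$ (the rule adds to $\REqsN$, not $\IntEqs$), not $s \approx 0$ or $s \approx 1$ over the integers.
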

\begin{proof}
The proof reduces to a local soundness argument for each of the calculus rules. 
 \OneGB and \UnsatDiseq are sound by properties of ideals \cite{david1991ideals}.
 (Recall $1\in I(f_1,\hdots, f_n)$ implies $f_1,\hdots, f_n$ have no common zeros.) 
 Soundness of \InconBds, \ConstrBds, and \InfEq follows from basic arithmetic reasoning about bounds and intervals~\cite{intervalarithmetic}. 
 As mentioned earlier, \LiftEq and  \LiftDiseq are sound due to Lemmas~\ref{lem:eq_lift} and ~\ref{lem:diseq_lift}.
 Since $\smod$ preserves modulus congruence classes (i.e., $(a \smod n )\bmod n \equiv a \bmod n$), the  $\simplify_n$ rewrite is sound in a  modulus-$n$ subsystem. 
In combination with lemmas~\ref{lem:eq_lower} and \ref{lem:diseq_lower}, this makes rules  \LowerEq, and \LowerDiseq also sound.  
\RngLift is sound because if $e \bmod n \approx 0$ holds, then for some fresh variable $\LVar$, $e \approx n\LVar$ must hold. If $\CalcBds(B,e) \subseteq (1-2n, 2n-1)$, then the possible value of $\LVar$ must be within the interval $[-1,1]$ and we can do a case split on the three possible values of $\LVar$.
\PrimeSqr is sound because if $n$ is prime, then we are in a field, and thus, there are no zero divisors (no nonzero terms that when multiplied are equal to zero).  Then, $ s(s-1) \bmod n \approx 0$ implies that either $s\bmod n \approx 0$ or $(s-1)\bmod n \approx 0$ holds. Since all rules are sound, it follows the procedure is sound.
%
\qed
\end{proof}

\begin{theorem} 
  Termination: Every derivation starting from a finite configuration $C$ where every variable is bounded is finite.
\end{theorem}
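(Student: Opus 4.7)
The plan is to decompose the termination argument into two parts: first, showing that the derivation tree is finitely branching, and second, showing that every path in it is finite. By K\"onig's lemma, these together imply the derivation is finite.

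For finite branching, I would simply inspect the rules in Figure~\ref{fig:calculus}. Only \RngLift and \PrimeSqr are branching rules, and each produces at most three children. All other rules either produce a single successor configuration or close the branch with \unsat.

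For finiteness of each path, I would define a partial order on configurations and show each non-redundant rule application strictly increases the configuration in this order, then bound the order's height on reachable configurations. Specifically, I would use the componentwise order in which (i) the bounds map $B'$ refines $B$ iff $B'(\LVar) \subseteq B(\LVar)$ for every $\LVar \in \LVars$, and (ii) each polynomial set in the configuration grows (i.e., $\IntEqs$, $\IntNEqs$, and every $\REqsN$, $\RNEqsN$ can only be enlarged). A case analysis on the rules then shows: \ConstrBds and the bound-handling rules strictly shrink some $B(\LVar)$; \InfEq, \LiftEq, \LiftDiseq, \LowerEq, \LowerDiseq, \RngLift, and \PrimeSqr strictly enlarge one of the polynomial sets (and the premises of each include a non-redundancy check guaranteeing strictness); and \OneGB, \UnsatDiseq, \InconBds close the branch.

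The main obstacle will be bounding the height of this order, i.e., showing that only finitely many distinct configurations are reachable from a given bounded initial configuration $C$. The bound-component is easy: since every variable in $C$ has finite bounds and bounds can only tighten to tighter integer intervals, only finitely many bound maps occur along any path. The harder argument is to bound the polynomial-set components. Here I would argue that all polynomials ever added are drawn from a finite pool. For this, I would (a) observe that the set of monomials appearing anywhere on a branch is bounded, because $\simplify_n$ does not create new monomials and every polynomial added by the lifting/lowering/branching rules has either its monomial support contained in that of an earlier polynomial (\LowerEq, \LowerDiseq, \LiftDiseq, \RngLift, \PrimeSqr) or bounded \CalcBds together with bounded variable ranges (\LiftEq), which, because variables are bounded, forces both a bounded total degree and bounded coefficients and hence only finitely many possible polynomials; (b) observe that in every set the coefficients are bounded, since $\simplify_n$ forces coefficients into $[-n/2, n/2]$ for modular sets and the \LiftEq premise $\CalcBds(B,e)\subseteq[1-n,n-1]$ bounds coefficients for integer-equality additions; (c) conclude that each polynomial set is a subset of a finite candidate pool, so it can grow only finitely often.

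Combining these three observations, the order has finite height on reachable configurations, so every path terminates; together with finite branching this yields finiteness of the entire derivation tree. \qed
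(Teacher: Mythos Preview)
Your finite-pool argument has a genuine gap for the equality sets. The premises of \LiftEq and \LowerEq quantify over \emph{ideal membership}, not set membership: \LowerEq requires only $\dbk{e}\in I_\infty(\dbk{\IntEqs})$, so $e$ may be $g\cdot f$ for arbitrary $g\in\ZZ[\Vars]$ and $f\in\IntEqs$, introducing arbitrarily high-degree monomials after $\simplify_n$; your claim that its monomial support is ``contained in that of an earlier polynomial'' is simply false. For \LiftEq, the bound $\CalcBds(B,e)\subseteq[1-n,n-1]$ does not force bounded degree or coefficients in general: if some variable has bounds $[0,1]$ (or is later tightened to $[0,0]$ by \ConstrBds), then $\LVar^d$ has $\CalcBds=[0,1]$ for every $d$, and a term $c\cdot\LVar$ with $B(\LVar)=[0,0]$ has $\CalcBds=[0,0]$ for every coefficient $c$. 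So the equality sets are not subsets of any fixed finite pool, and your height bound collapses.

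The paper's proof avoids this by tracking, for each modulus $n$, the \emph{ideal} $I_n(\dbk{\REqsN})$ rather than the set $\REqsN$, ordered by reverse inclusion. Since $\ZZ[\Vars]$ and each $\ZZ_n[\Vars]$ are Noetherian, strictly ascending chains of ideals are finite, making this order well-founded. The non-redundancy premises of \LiftEq, \LowerEq, \InfEq, \RngLift, and \PrimeSqr are precisely ideal non-membership checks, so each application strictly enlarges the relevant ideal. The remaining components (your bound argument for $B$, and a finite-closure argument for the disequality sets, which \emph{do} only grow by copying and $\simplify_n$) combine lexicographically with the ideal components into a well-founded measure. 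Your K\"onig's-lemma decomposition and the bounds/disequality parts are fine; the missing idea is the Noetherian ascending chain condition for the equality ideals.
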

\begin{proof}

Let $T$ be some derivation tree whose root is a finite configuration $C$ where every variable is bounded. We will prove termination by showing that every possible branch of $T$ can be mapped to a strictly descending sequence in a well-founded order. 
We do this by showing how to construct a tuple from a configuration (other than \unsat) in such a way that every rule application reduces the tuple with respect to some well-founded order.  Note that no rules apply to the \unsat configuration, so any branch containing \unsat must end with the \unsat configuration.  Now, fix a configuration $C$ different from $\unsat$.
The first entry in the tuple keeps track of how tight the bounds are on all of the variables.  Let $\Bd$ count the total number of integers in the intervals mapped to by $B$: $\Bd = \sum_{i=1}^k \relu(B(\LVar_i)_1 - B(\LVar_i)_0 + 1)$.

Next are entries that track disequalities.
Since the set of disequalities do not form an ideal we will argue that new disequalities can only appear through lowering (when we apply the $\simplify_n$ rewrite). As there are only finitely many starting disequalities and finitely many moduli, there are only finitely many number of disequalities that can be created.
We formalize this as follows.
We first overload the definition of $\simplify_n$ to apply to sets of expressions. For some set of expressions $G$, $\simplify_n(G) = \{\simplify_n(e) \mid e \in G\}$. Also, note that applying $\simplify_{n_1}$ to $\simplify_{n_2}(e)$, where $n_1 \ge n_2$ has no effect.
We define a function that gathers all possible disequalities as $\mathcal{F}(A,S)$ where $A = \{a_1\hdots a_j\} \subset \ZZ^{+}$, with $a_i < a_{i+1}$ for $i\in[1,j-1]$ and $S$ is a finite set of expressions.  $\mathcal{F}$ applies every possible combination of $\simplify_a$ for subsets of $A$ to every expression in $S$.  We define $\mathcal{F}$ recursively as follows.
\begin{align*}
  \mathcal{F}(\emptyset, S) & = S\\
  \mathcal{F}(A, S) & = \simplify_{a_1}(\mathcal{F}(A\setminus\{a_1\},S)) \cup \mathcal{F}(A\setminus\{a_1\},S)
\end{align*}
Now, let $A=\{n \mid n\in\ZZNN \land \RNEqsN \in C\}$, let $A_{\infty} = A \cup \{\infty\}$, and let $\ND_n$ denote the number of possible disequalities \emph{not} in the current \RNEqsN: $ \ND_n = \lvert \mathcal{F}(A, \IntNEqs \cup \bigcup_{i \in A} R_{i}^{\not\approx} ) \setminus \RNEqsN \rvert $, where $n\in A_{\infty}$.  Following \Bd, the next $|A_{\infty}|$ entries in the tuple are $\ND_n$, $n\in A_{\infty}$.

Finally, we have entries for each ideal $I_n(\dbk{\REqsN})$, $n\in A_{\infty}$.
For these entries, we rely on the Noetherian ring ascending chain condition: if a ring is Noetherian, every increasing sequence of ideals in the ring has a largest element (both $\ZZ[\Vars]$ and $\ZZ_n[\Vars]$ are Noetherian). 

Thus, for a non-\unsat configuration $C$, we can construct the following tuple: $(\Bd, \ND_{a_1}\dots \ND_{a_j}, I_{a_1}(\dbk{\REqsN}) \dots, I_{a_j}(\dbk{\REqsN})))$, where $A_\infty = \{a_1,\dots,a_j\}$.  We define a total well-founded order $<_t$ on such tuples as the lexicographic order induced by:
\begin{itemize}[label=\textbullet]
    \item  the standard non-negative integer ordering $<$ for $\Bd$;
    \item  the standard non-negative integer ordering $<$ for each $ND_n$;
    \item  an order \OrderI for ideals defined as follows: $I \OrderI I'$ iff $I \supset I'$.  This is well-founded by the Noetherian ring ascending chain condition mentioned above.
\end{itemize}

We now prove termination
by showing that every rule application either decreases the tuple for the configuration according to $<_{t}$
or results in \unsat.
\UnsatOne, \UnsatDiseq, \InconBds all return \unsat.
\ConstrBds decreases \Bd by tightening the spread of bounds for some variable $\LVar_i$. 
\InfEq and \LiftEq expand $I_{\infty}(\dbk{\IntEqs})$. 
\LiftDiseq decreases $\mathit{ND}_{\infty}$.
\LowerEq expands $I_{n}(\dbk{\REqsN})$.
\LowerDiseq decreases $\mathit{ND}_n$.
Each conclusion of \RngLift expands $I_{\infty}(\dbk{\IntEqs})$. 
Each conclusion of \PrimeSqr expands $I_{n}(\dbk{\REqsN})$.
Thus, each rule decreases according to a well-founded order, and so the calculus is terminating.
\qed
\end{proof}

\label{ap:sound_term}
\section{Proof of partial completeness for lifting via a weighted \gb}
\label{ap:proof:gb}
To prove Theorem~\ref{thm:lift:gb:partial_complete}, we rely on two auxiliary lemmas. 
    \begin{lemma}
    Let $\sigma$ be a reverse lexicographical monomial order  with a weight function $w$ defined using weights in Equation~\ref{eq:weights}. For any two monomials $\dbk{m}$,$\dbk{s}$ if   $\dbk{m} \geq_{\sigma} \dbk{s}$ and  $\calcBds(B,m)\subseteq[1-n,n-1]$, then $\calcBds(B,s)\subseteq[1-n,n-1]$.
\label{lem:lift_preserv}
\end{lemma}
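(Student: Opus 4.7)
The plan is to reduce the weighted-order comparison directly to an interval-arithmetic comparison on the monomials' bounds. Writing $M_i \triangleq \max(\lvert B(\LVar_i)_1\rvert,\lvert B(\LVar_i)_2\rvert)$, the first step is to verify the identity $\max(\lvert\CalcBds(B,m)_1\rvert,\lvert\CalcBds(B,m)_2\rvert) = \prod_i M_i^{e_i^{(m)}}$ for any monomial $\dbk{m}=\prod_i \LVar_i^{e_i^{(m)}}$. This follows by induction on the degree of $m$ using the interval-multiplication rule: for intervals $[a,b]\cdot[c,d]$, the maximum absolute value of the product interval is exactly $\max(\lvert a\rvert,\lvert b\rvert)\cdot\max(\lvert c\rvert,\lvert d\rvert)$. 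Taking logarithms gives the clean identity $\log\max\lvert\CalcBds(B,m)\rvert = \sum_i e_i^{(m)}\log M_i$, which mirrors the shape of the weight function.

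Next, I would observe that $\CalcBds(B,m)\subseteq[1-n,n-1]$ is equivalent to $\prod_i M_i^{e_i^{(m)}}\leq n-1$, and likewise for $s$. Under the weighted reverse lexicographic order the first tuple component is $\sum_i w_i e_i$, so $\dbk{m}\geq_\sigma\dbk{s}$ implies $w(m)\geq w(s)$, i.e.\ $\sum_i e_i^{(m)}\log(M_i+\epsilon)\geq\sum_i e_i^{(s)}\log(M_i+\epsilon)$. Exponentiating yields
\[
  \prod_i (M_i+\epsilon)^{e_i^{(m)}} \;\geq\; \prod_i (M_i+\epsilon)^{e_i^{(s)}}.
\]
If this comparison could be transported from $M_i+\epsilon$ back to $M_i$, the chain $\prod_i M_i^{e_i^{(s)}}\leq\prod_i M_i^{e_i^{(m)}}\leq n-1$ would close the proof immediately.

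The main obstacle is precisely this $\epsilon$-perturbation, introduced only to avoid $\log(0)$. I would handle it by a short case split. If some $M_i=0$ with $e_i^{(s)}>0$, then $\CalcBds(B,\LVar_i^{e_i^{(s)}})=[0,0]$ and so $\CalcBds(B,s)=[0,0]\subseteq[1-n,n-1]$, giving the conclusion directly. Otherwise every $M_i$ appearing with positive exponent in $s$ is a strictly positive integer, hence at least $1$; since the products involved are integers bounded by $(\max_i M_i)^{\deg s}$, choosing $\epsilon$ strictly smaller than the minimum nonzero difference $\bigl\lvert\prod_i M_i^{a_i}-\prod_i M_i^{b_i}\bigr\rvert$ over exponent tuples of bounded degree preserves the ordering when we pass from $M_i+\epsilon$ to $M_i$.

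The hard part will be stating the $\epsilon$-handling cleanly. The neatest remedy would be to sharpen the lemma hypothesis so that weights are defined as $\log\max(M_i,1)$, which makes the weight monotonicity exactly equivalent to bound monotonicity; alternatively, one can observe that the refutation procedure uses only finitely many monomials at any time, so a sufficiently small $\epsilon$ always exists and the choice can be hidden inside the definition of $\sigma$. Either way, once the $\epsilon$ gap is closed, the final chain $\prod_i M_i^{e_i^{(s)}}\leq\prod_i M_i^{e_i^{(m)}}\leq n-1$ yields $\CalcBds(B,s)\subseteq[1-n,n-1]$, as required.
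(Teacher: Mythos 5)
Your proof follows essentially the same route as the paper's: establish that $\max(\lvert\CalcBds(B,m)_1\rvert,\lvert\CalcBds(B,m)_2\rvert)=\prod_i M_i^{e_i}$ via the multiplicativity of the maximum absolute endpoint under interval multiplication, identify this (after taking logs) with the weight $w(\dbk{m})$, and then transfer the bound from $m$ to $s$ using the fact that the weighted-degree component dominates the reverse-lexicographic order. The only difference is that you explicitly confront the $\epsilon$-perturbation in the weights (which the paper dismisses with ``we omit $\epsilon$ for simplicity as this does not change the proof''); your case split and sufficiently-small-$\epsilon$ argument is a legitimate, more careful patch of that gap.
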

\begin{proof}
To prove the claim, we will first use induction on number of variables in a monomial to show that for any monomial $\dbk{m}$. $$w(\dbk{m}) =  \log(\max( \lvert\calcBds(B,m)_1\rvert, \lvert\calcBds(B,m)_2\rvert)) \footnotemark$$
\footnotetext{we omit $\epsilon$ for simplicity as this does not change the proof} Let $m = \prod_{i=0}^k\LVar_i^{e_i}$  for $\LVar_i \in \LVars, e_i \in \ZZ_{\geq0}$ and recall by Equation~\ref{eq:weights}, $w(\dbk{m}) = \sum_{i=0}^{k}e_i \cdot \log(\max(\lvert B(\LVar_i)_1 \rvert,\lvert B( \LVar_i)_2 \rvert )) $. Which by properties of logarithms \\ $ =\log(\prod_{i=0}^{k}\max(|B(, \LVar_i)_1|,|B( \LVar_i)_2| )^{e_i})$.
By interval arithmetic, for $k=2$, \\ 
$\calcBds(B,\LVar_1\LVar_2) = \Big( \min(B(\LVar_1)B(\LVar_2)_1,B(\LVar_1)_1B(\LVar_2)_2,B(\LVar_1)_2B(\LVar_2)_1,$\\ $B(\LVar_1)_2B(\LVar_2)_2), \max((B(\LVar_1)B(\LVar_2)_1,B(\LVar_1)_1B(\LVar_2)_2, B(\LVar_1)_2B(\LVar_2)_1B(\LVar_1)_2B(\LVar_2)_2))\Big)$.
It follows  $\Big(\max(|B(\LVar_1)_1|, |B(\LVar_1)_2|) \max(|B(\LVar_2)_1|, |B(\LVar_2)_2|) \Big) $ will be included in either the first or second entry of the $\calcBds(B,\LVar_1\LVar_2)$ tuple (depending on the signs of the lower and upper bounds of $\LVar_1$ and $\LVar_2$), and dominate the absolute value of the interval's endpoints. Thus $\max(|\calcBds(B, \LVar_1\LVar_2)_1| , |\calcBds(B, $ \\ $ \LVar_1\LVar_2)_2|) $  $ = \prod_{i=1}^{2}\max(|B(\LVar_i)_1|, |B(\LVar_i)_2|)  = w(x_1x_2) $. The inductive case is equivalent, but uses $\calcBds(B,\prod_{i}^{n}\LVar_i^{e_i})$ for the monomial $\prod_{i}^{n}x_i^{e_i}$ instead of the map $B$. The proof of the lemma becomes trivial; $\calcBds(B,m) \subseteq [1-n,n-1]$ means $w(\dbk{m}) \leq \log(n-1)$ and if  $\dbk{m} \geq_{\sigma} \dbk{s}$ then  $w(\dbk{s}) \leq \log(n-1)$ and $\calcBds(B,s) \subseteq [1-n,n-1]$.  \qed
\end{proof}

\begin{lemma}
\label{lem:lift_poly_to_mono}
   Fix $B$ a bounds map s.t. for all $
   \LVar_i \in \LVars$, $0 \in B(\LVar_i).$  If for a polynomial $\dbk{e}$, $\calcBds(B,e) \subseteq[1-n,n-1]$  then for all terms $\dbk{t}$ of $\dbk{e}$, $\calcBds(B, t) \subseteq[1-n,n-1]$ 
\end{lemma}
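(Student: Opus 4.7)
The plan is to prove the lemma by contrapositive: assume some term $\dbk{t_j}$ of $\dbk{e}$ has $\calcBds(B, t_j) \not\subseteq [1-n, n-1]$, and derive that $\calcBds(B, e) \not\subseteq [1-n, n-1]$.

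The central auxiliary claim is that every non-constant monomial $\dbk{m}$ of $\dbk{e}$ satisfies $0 \in \calcBds(B, m)$. I would prove this by induction on the number of variables appearing in $\dbk{m}$, using the hypothesis $0 \in B(\LVar_i)$ as the base case (where $\calcBds(B, \LVar_i) = B(\LVar_i)$) and the interval-multiplication rule in the inductive step. Writing $\dbk{m} = \dbk{m'} \cdot \LVar_i$ with $\calcBds(B, m') = [a, b]$ and $B(\LVar_i) = [c, d]$ both containing $0$, the product interval is $[\min(ac, ad, bc, bd), \max(ac, ad, bc, bd)]$; since $b \ge 0 \ge c$ the product $bc \le 0$, and since $b, d \ge 0$ the product $bd \ge 0$, so the min is $\le 0$ and the max is $\ge 0$. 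Scaling by a nonzero integer coefficient preserves this property (it either preserves or flips the interval but keeps $0$ inside), so every non-constant term also has an interval containing $0$.

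Next, I would apply the interval-addition rule: adding an interval $[a_i, b_i]$ with $a_i \le 0 \le b_i$ to a running sum only pushes the lower endpoint further left and the upper endpoint further right. Writing $\dbk{e}$ as a sum of terms and iterating this observation across the non-constant terms yields $\calcBds(B, e)_1 \le \calcBds(B, t_j)_1$ and $\calcBds(B, e)_2 \ge \calcBds(B, t_j)_2$, so $\calcBds(B, t_j) \subseteq \calcBds(B, e)$; thus any violation of $[1-n, n-1]$ at $\dbk{t_j}$ propagates to $\dbk{e}$, completing the contrapositive.

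The main obstacle is the constant term $c_0$ of $\dbk{e}$, whose monomial is $1$ and whose interval $[c_0, c_0]$ does not contain $0$ when $c_0 \ne 0$. The widening argument therefore does not apply uniformly across all terms. To handle this, one would first derive from the hypothesis that $c_0 \in [1-n, n-1]$ (using that adding the intervals of the remaining non-constant summands, each containing $0$, can only widen the range around $c_0$), so the constant term itself trivially satisfies the conclusion; then apply the widening argument to the non-constant summands alone, carefully tracking how the offset by $c_0$ interacts with a violating term's bounds to still force $\calcBds(B, e)$ outside $[1-n, n-1]$.
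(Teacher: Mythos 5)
Your core argument---every non-constant monomial's interval contains $0$ because $0 \in B(\LVar_i)$ for every variable, interval addition of zero-containing intervals only widens, hence each term's interval sits inside the polynomial's interval---is exactly the argument in the paper's proof, and it is correct whenever the constant term of $\dbk{e}$ is zero. You also deserve credit for noticing what the paper silently elides: the constant monomial $1$ has interval $[c_0,c_0]$, which does not contain $0$ when $c_0\neq 0$, so the paper's assertions that ``every term of $\dbk{e}$ can be zero'' and that ``all terms' upper bounds are $\geq 0$'' are false for a nonzero constant term.

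However, your proposed repair of that case does not go through, and cannot. The first half is fine: since the non-constant summands all have intervals containing $0$, the polynomial's interval contains $[c_0,c_0]$, so the constant term itself satisfies the conclusion. But the second half---that a violating non-constant term still forces $\CalcBds(B,e)$ out of range after ``carefully tracking the offset by $c_0$''---fails, because the offset can genuinely cancel the violation. Concretely, take $e = \LVar_1 - 10$ with $B(\LVar_1) = [0,12]$ and $n = 11$. Then $0 \in B(\LVar_1)$ and $\CalcBds(B,e) = [-10,2] \subseteq [-10,10]$, yet the term $\LVar_1$ has $\CalcBds(B,\LVar_1) = [0,12] \not\subseteq [-10,10]$. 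So the statement is actually false for polynomials with a nonzero constant term; the widening argument (yours and the paper's) proves it only under the extra hypothesis that every term's interval contains $0$, e.g., that $\dbk{e}$ has zero constant term. No amount of careful tracking will close this gap---the honest options are to add that hypothesis or to weaken the conclusion, and the same caveat applies to the paper's own proof and to its downstream use in Theorem~\ref{thm:lift:gb:partial_complete}.
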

\begin{proof}
By set up, since $
   \LVar_i \in \LVars$, $0 \in B(\LVar_i)$, every $\LVar_i \in e$ can be zero and every $x_i \in \dbk{e}$ can be zero. Which means every monomial and thus every term of $\dbk{e}$ can be zero. It follows that the minimum possible value of any term must be $\leq 0$   and the maximum value must be $\geq 0$. Since a polynomial is the sum of its terms by interval arithmetic its lower/upper bound is the sum of the lower/upper bounds of its terms. The lower bound can only become more negative by addition (as all terms' lower bounds are $\leq 0$) and the upper bound can only become more positive by addition (as all terms' upper bounds are $\geq 0$). It follows that if there existed a term $\dbk{t} \in \dbk{e}$ s.t. $\CalcBds(t) \not \subset [1-n,n-1]$ then $\CalcBds(e) \not \subset [1-n,n-1]$. \qed

\end{proof}
We are now ready to prove Theorem~\ref{thm:lift:gb:partial_complete}. 
\begin{theorem}
  Let $C$ be a constraint system containing a modulus-$n$ equality subsystem
  \REqsN and  variable bounds $B$ s.t. for all $\LVar_i \in \LVars$, $0 \in B(\LVar_i)$.
  Let $ G = GB_{n,\le}(\dbk{\REqsN})$ be a \gb computed with a weighted reverse
  lexicographical order with weights from Equation~\ref{eq:weights}. 
  Finding liftable equalities derived from $I(\REqsN)$ by computing $G$ is complete if every generator $\dbk{e}\in G$, that has a leading monomial $\dbk{m}$ s.t. $\CalcBds(B,m)\subseteq[1-n,n-1]$, corresponds to a liftable expression $e \bmod n$ in $C$. 
\end{theorem}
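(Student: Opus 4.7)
The plan is to fix an arbitrary liftable polynomial $\dbk{p} \in I_n(\dbk{\REqsN})$ and exhibit it as a $\ZZ_n[\Vars]$-combination of the liftable generators of $G$. Since $G$ is a (strong, reduced) \gb for the ideal $I_n(\dbk{\REqsN})$, reduction modulo $G$ is a complete ideal membership test, so there is a reduction sequence
\[
  p \;\rightarrow_{g_{i_1}}\; p - h_1 g_{i_1} \;\rightarrow_{g_{i_2}}\; p - h_1 g_{i_1} - h_2 g_{i_2} \;\rightarrow\; \cdots \;\rightarrow\; 0,
\]
with each $g_{i_j} \in G$ and each $h_j$ a term in $\ZZ_n[\Vars]$. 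This representation, plus the two auxiliary lemmas already proved (Lemmas~\ref{lem:lift_preserv} and~\ref{lem:lift_poly_to_mono}), is essentially all the machinery we need.

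I would then argue, step by step, that every generator $g_{i_j}$ appearing in the reduction has a liftable leading monomial. First, because $\CalcBds(B,p) \subseteq [1-n, n-1]$ and $0 \in B(\LVar_i)$ for all $\LVar_i$, Lemma~\ref{lem:lift_poly_to_mono} applied to $\dbk{p}$ says that every term of $\dbk{p}$ is liftable; in particular $\lm(p)$ satisfies $\CalcBds(B,\lm(p)) \subseteq [1-n,n-1]$. Next, each reduction step rewrites some term $t$ of the current intermediate polynomial $t_{j-1}$ by multiplying $g_{i_j}$ with $t/\leadt(g_{i_j})$, so $\lm(g_{i_j})$ divides a monomial appearing in $t_{j-1}$, which in the weighted reverse-lexicographic order is bounded by $\lm(t_{j-1})$. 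A short inductive argument then shows $\lm(g_{i_j}) \leq_\sigma \lm(p)$ for every $j$: the intermediate terms introduced by each reduction step are no larger than the term being eliminated, so the leading monomial of $t_{j-1}$ cannot exceed that of $p$. Applying Lemma~\ref{lem:lift_preserv} (liftability of monomials is downward closed in the weighted order) then yields that each $\lm(g_{i_j})$ is liftable.

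Finally, I would invoke the hypothesis of the theorem: every $g \in G$ with a liftable leading monomial is itself liftable. Hence each $g_{i_j}$ used in the reduction is a liftable generator, so the reduction sequence witnesses
\[
  \dbk{p} \in I_n\bigl(\{\dbk{g} \in G : \CalcBds(B,g) \subseteq [1-n,n-1]\}\bigr),
\]
which is exactly the completeness statement. This gives the desired inclusion and completes the proof.

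The main obstacle I anticipate is justifying carefully that the leading monomial of every intermediate polynomial in the reduction remains $\leq_\sigma \lm(p)$. In a field this would be immediate from the standard ``head reduction'' viewpoint, but here we work in a PIR $\ZZ_n[\Vars]$ and the paper's notion of reduction allows reducing arbitrary terms (not just the leading one). Making sure no intermediate step introduces a monomial larger than $\lm(p)$---and tracking how coefficient arithmetic in $\ZZ_n$ interacts with the monomial order---is the subtle part; once that invariant is pinned down, the rest follows by applying the two lemmas in sequence.
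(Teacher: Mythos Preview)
Your proposal is correct and follows essentially the same approach as the paper: take a reduction sequence of $\dbk{p}$ to $0$ modulo $G$, use Lemma~\ref{lem:lift_poly_to_mono} to get liftability of $\lm(p)$, argue by induction that each $\lm(g_{i_j})$ is bounded by $\lm(p)$ in the weighted order, apply Lemma~\ref{lem:lift_preserv}, and then invoke the hypothesis. The obstacle you flag (that intermediate leading monomials do not exceed $\lm(p)$ when reducing arbitrary terms in a PIR) is exactly the step the paper also glosses over with ``by induction''; your sketch of why it holds---reduced-in terms are bounded by the term being eliminated, which is bounded by the current leading monomial---is the right justification.
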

\begin{proof}
Let e be some liftable expression in $C$ s.t $\dbk{e} \in I_n(\REqsN)$.  Let $H$ be the set of liftable expressions $e_i$ s.t $\dbk{e_i} \in G$. And let $\le_{\sigma}$ be the weighted reverse lexicographical order computed with a weight function $w$ based on weights from Equation~\ref{eq:weights}.

We will show that if the conditions of the theorem are observed then $\dbk{e} \in I_n(H)$.
  Since $\dbk{e}\in I_n(\REqsN)$, by properties of a \gb
  there exists a reduction sequence
  $\dbk{e} \rightarrow \dbk{e}  - h_1 g_1 \rightarrow  \dbk{e}  -  h_1 g_1  - h_2g_2 \rightarrow
  \cdots \rightarrow  \dbk{e}  -  h_1 g_1  - \cdots - h_lg_l = 0$,
  for $g_i \in G$.
  Let $p_i$ denote the intermediary polynomials formed;
  that is, $p_i = \dbk{e} - \sum_{j=1}^i h_jg_j$
  By the definition of reduction in PIRs, $\leadt(g_i)$ must divide some term of $p_{i-1}$. Let this term equal $a_{i-1}s_{i-1}$ where $a_{i-1} \in \ZZ$ is the coefficient and $s_{i-1}$ the monomial. Then for some monomial $m_i \neq 0$,
  $\lm(g_i) * m_i = s_{i-1}$.
  By Equation~\ref{eq:weights}, weights are non negative.
  $  w(\lm(g_i) * m_i)
    = w(\lm(g_i))*1 + w(m_i)*1
    = w(s_{i-1})$
    it follows
        $w(\lm(g_i)) \leq w(s_{i-1}) \leq w(\lm(p_{i-1})) $. 
   By induction
    on $i$,
   $w(\lm(g_i)) \leq w(\lm(\dbk{e}))$.
  Since $e$ is liftable, $\CalcBds(B,e)\subseteq [1-n,n-1]$ and by Lemma~\ref{lem:lift_poly_to_mono} for $\dbk{m_e} = \lm(\dbk{e})$,
  $ \CalcBds(B,m_e)\subseteq [1-n,n-1]$.
  From Lemma~\ref{lem:lift_preserv} it follows that for $\dbk{m_{g_i}} = \lm(g_i)$, $\CalcBds(B,m_{g_i})\subseteq [1-n,n-1]$.
Conditions of the theorem state that if the bounds of the leading monomial of a polynomial are within $[1-n,n-1]$ then the expression corresponding to the polynomial must be liftable. 
Thus all expressions corresponding to $g_i$ must be liftable and the original reduction sequence shows that
  $\dbk{e} \in I_n(H)$. 
  \qed
\end{proof}
\label{ap:proof:gb}
\section{Proof of partial completeness for lifting via integer linear constraints}
\label{ap:proof:ilp}
\begin{theorem}
  Let $C$ be a constraint system containing modulus-$n$ equality subsystem \REqsN and variable bounds $B$. $\Phi$ is satisfiable iff there exists a linear combination of the form 
  $\pnew = \sum_{i=1}^l a_i e_i $, $a_i \in \ZZ $ and $e_i\in \REqsN$
   s.t. \pnew is liftable in $C$ and $\pnew \neq 0$.
\end{theorem}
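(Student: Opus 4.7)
The plan is to prove both directions of the equivalence by exhibiting a direct correspondence between satisfying assignments of $\Phi$ and liftable integer linear combinations of the $e_i$. First I would fix notation as in the text: let $\dbk{m_1},\dots,\dbk{m_t}$ enumerate the monomials appearing across $\dbk{\REqsN}$ (including $1$), write $\dbk{e_i}=\sum_{j=1}^{t}c_{i,j}\dbk{m_j}$, and for a candidate assignment $\mathbf{a}=(\mathbf{a}_1,\dots,\mathbf{a}_l)$ set $\coef_j=\sum_{i=1}^{l}\mathbf{a}_i c_{i,j}$ so that $\dbk{\pnew}=\sum_{j=1}^{t}\coef_j\dbk{m_j}$. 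Thus every valuation of $\mathbf{a}$ determines a unique polynomial $\pnew$, and every $\ZZ$-linear combination $\pnew=\sum_i a_i e_i$ arises this way by taking $\mathbf{a}_i:=a_i$.

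The central intermediate claim I would prove is that the two expressions $\lowerbnd$ and $\upperbnd$ in the encoding coincide with $\CalcBds(B,\pnew)_1$ and $\CalcBds(B,\pnew)_2$, respectively. Because monomials are distinct across terms of $\pnew$ and interval arithmetic over a sum of such terms is additive, this reduces to a per-term identity: for any integer $a$ and any monomial $m$ with $\CalcBds(B,m)=[l,u]$, the lower and upper contributions of $a\cdot m$ equal $l\cdot\relu(a)-u\cdot\relu(-a)$ and $u\cdot\relu(a)-l\cdot\relu(-a)$, respectively. A three-case split on the sign of $a$, using $\relu(a)\cdot\relu(-a)=0$, verifies this identity.

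Given the intermediate claim, both directions follow cleanly. For the forward direction, a satisfying assignment $\mathbf{a}$ of $\Phi$ yields $\pnew=\sum_i \mathbf{a}_i e_i$; since each $e_i\in\REqsN$ and the $\mathbf{a}_i\in\ZZ$, we get $\dbk{\pnew}\in I_n(\dbk{\REqsN})$ immediately, and the first two conjuncts of $\Phi$ combined with the intermediate claim give $\CalcBds(B,\pnew)\subseteq[1-n,n-1]$; by Lemma~\ref{lem:eq_lift}, $\pnew$ is liftable in $C$. The third conjunct forces $\coef_j\neq 0$ for some $j$, which means $\dbk{\pnew}\neq 0$, i.e., $\pnew\neq 0$ as a polynomial. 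For the reverse direction, given a liftable $\pnew=\sum_i a_i e_i$ with $\pnew\neq 0$, set $\mathbf{a}_i:=a_i$; the intermediate claim turns $\CalcBds(B,\pnew)\subseteq[1-n,n-1]$ into the first two conjuncts of $\Phi$, while $\pnew\neq 0$ provides the third.

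The main obstacle I anticipate is a conceptually minor but notationally delicate one: $\Phi$'s $\relu$ arguments $\coef_j$ are linear expressions in the unknowns $\mathbf{a}_i$, so one must be careful to treat $\Phi$ as a constraint over $(\mathbf{a}_1,\dots,\mathbf{a}_l)$ (from which $\coef_j$, $\lowerbnd$, and $\upperbnd$ are derived), rather than as a constraint whose ``free variables'' are the $\coef_j$ themselves. Once this bookkeeping is handled, the entire proof rests on the single-term bound identity above.
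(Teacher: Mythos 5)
Your proposal is correct and follows essentially the same route as the paper's proof: both hinge on identifying $\lowerbnd$ and $\upperbnd$ with the endpoints of $\CalcBds(B,\pnew)$ via the $\relu$ sign-split, and then reading off the two directions of the equivalence. Your write-up is in fact slightly more explicit than the paper's on two points the paper leaves implicit — the per-term $\relu$ identity that justifies the bound correspondence, and the ideal-membership half of Lemma~\ref{lem:eq_lift} needed to conclude liftability in the forward direction.
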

\begin{proof}
$\Rightarrow$ Assume $\Phi$ is satisfiable. Then there exists a solution vector of the form $a_1 \hdots a_l$, $a_i \in \ZZ$. And $ - n > \lowerbnd \triangleq
  \textstyle\sum_{j=1}^{t} (
    l_j \cdot \relu(\sum_{i=1}^t \textbf{a}_i c_{i,j})
    -
    u_j \cdot \relu(-\sum_{i=1}^t \textbf{a}_i c_{i,j})$. By definition $l_j$ and $u_j$ correspond to the upper and lower bounds of a monomial $m_j$. By setup of $\Phi$, for $e_i \in \REqsN$, $\dbk{e_i} = \sum_{j=1}^t c_{i,j}\dbk{m_j}$, and $\sum_{i=1}^t \textbf{a}_i c_{i,j}$ corresponds to the coefficient of $\dbk{m_j}$ in the polynomial $\dbk{e}' = \sum_{i=1}^l a_i \dbk{e_i}$. By linear arithmetic and properties of \relu (to compute the lower bound we add lower bounds if coefficient is positive and subtract upper bound if it is negative), $\lowerbnd = \CalcBds(B, e
')_o < -n$. The proof for the upper bound is symmetrical. Thus $\upperbnd = \CalcBds(B,e')_1 > n$. It follows  $\CalcBds(B,e') \subseteq [1-n,n-1]$, meaning $e'$ is liftable. Since  $\dbk{e}' = \sum_{i=1}^l a_i \dbk{e_i} $,it follows $e' = \sum_{i=1}^l a_i e_i$.  Finally, since $\Phi$ is satisfiable, $\sum_{i=1}^k\left(\relu(\sum_{i=1}^t \textbf{a}_i c_{i,j}) + \relu(-\sum_{i=1}^t \textbf{a}_i c_{i,j})\right) > 0$. By set up of $\Phi$ and properties of \relu there must exist at least one monomial $\dbk{m_i} \in \dbk{e'}$ with a nonzero coefficient and thus $\dbk{e'} \neq 0$ and $e' \neq 0$. 
\\

\noindent $\Leftarrow$ Assume there exists a  linear combination of the form 
  $\pnew = \sum_{i=1}^l a_i e_i $, $a_i \in \ZZ $ and $e_i\in \REqsN$ s.t $\pnew$ is liftable in $C$ and $\pnew \neq 0$.  Since $e'$ is liftable $\CalcBds(B,e') \subseteq [1-n,n-1]$. By set up of $\Phi$ and the explanation above $\lowerbnd$ and $\upperbnd$ in $\Phi$ correspond to $\CalcBds(B,e')_0$ and  $\CalcBds(B,e')_1$ respectively. Thus $\lowerbnd > -n$ and  $\upperbnd < n$ are satisfiable by the vector of coefficients $a_1 \hdots a_l$ that forms $e'$. Further since $e' \neq 0$, it follows that there must exist at least one $a_i \neq 0$. From set up of $\Phi$ it follows that $\sum_{i=1}^k\left(\relu(\sum_{i=1}^t \textbf{a}_i c_{i,j}) + \relu(-\sum_{i=1}^t \textbf{a}_i c_{i,j})\right) > 0$. Thus $\Phi$ is satisfiable. \qed
  \end{proof}
\section{Approximate implementation of lifting via integer linear constraints}
\label{sec:aprox_imp}
As mentioned in Section~\ref{sec:implementation}, instead of the full version of integer linear constraint encoding defined in Section~\ref{subsec:ilp}, we use an approximation. The approximate encoding scales all integer constants in the full version using signed log, defined as $\slog_2(a) = \frac{|a|}{a}\log_2(a)$, and scales the solution vector by dividing all entries by their greatest common divisor. The table below compares the results of lifting using the full integer linear constraint encoding and the approximation.
\begin{table}[H]
\centering
\setlength{\tabcolsep}{3pt}
\begin{tabular}{lcccccccccccc}
        \toprule
        \multicolumn{1}{c}{} & \multicolumn{6}{c}{Family} &  \multicolumn{1}{c}{} \\
        \cmidrule(rl){2-7} 
        \multicolumn{1}{c}{}  & \benchFfBvSp & \benchFfBvMp & \multicolumn{2}{c}{\benchFfFfSp} & \benchFfFfMp & \multicolumn{2}{c}{\benchBvFfMp} & \multicolumn{1}{c}{}\\
        \cmidrule(rl){2-2} \cmidrule(rl){3-3} \cmidrule(rl){4-5} \cmidrule(rl){6-6} \cmidrule(rl){7-8}
        Ablation  &  cor & cor & cor & det & cor & cor & det  & Total   \\
        \cmidrule(r){1-1} \cmidrule(rl){2-9} 
        Full Lin. Constraints & 33 & 0 & 0 & 7 & 0 & 0  & 0 & 40  \\
        
        Aprox. Lin. Constraints & 37 & 18 & 0 & 7 & 0 & 7  & 3 & 72  \\
           \cmidrule(r){1-1}
           \cmidrule(rl){2-9}
           \# Benchmarks   & 72 &58 & 10 & 98 & 50 & 24 & 24 &  336  \\
 \bottomrule
\end{tabular}
\end{table}
While both algorithms struggle to find liftable equalities, the approximate encoding performs slightly better and thus we use this relaxation of the encoding for the experimental results in Section~\ref{sec:experiments}. 

For both algorithms we rule out linear combinations of solutions found as for as follows:
Given a matrix $A = \{\overrightarrow{a}_1, \dots, \overrightarrow{a}_\ell\}$, $\overrightarrow{a_i}$ is a solution to $\Phi$  and all $\overrightarrow{a_i}$ are \QQ-independent we can modify $\Phi$ to find a $\pnew_{\ell+1}$ that is not a
 \QQ-linear combination of $A$.
We compute $H$ as the left null space of $A$ (a matrix in $\QQ^{k \times (\ell - k)}$ entries $h_{j,i}$ and columns that are orthogonal to $A$'s).
 We then normalize $H$ to be a $\ZZ$ matrix by multiplying by lest common multiple of all denominators in each column (thus preserving independence). 
%
%
%
%
Then, we compute $U$ as the left null space of $H$
(a matrix in $\QQ^{k \times (\ell - k)}$ entries $u_{j,i}$ and columns that are orthogonal to $H$'s).
We introduce integer variables $\mathbf{b}_1, \dots, \mathbf{b}_k$,
and define $\Phi'$ as:
\begin{align*}
  \Phi'
  \triangleq
  \Phi~&\land~
  \textstyle\bigwedge_{j=1}^{k}
    \Big(
      \coef_j = \textstyle\sum_{i=1}^{\ell} \mathbf{b}_ia_{j,i} + \textstyle\sum_{i=\ell+1}^{k}\mathbf{b}_ih_{j,i}
    \Big)
    \\
       &\land
  \Big(
    \textstyle\bigvee_{i=\ell+1}^{k}\mathbf{b}_i \neq 0
  \Big)
\end{align*}
 
 Since $\Phi'$ contains disjunctions, to solve it, we split each disjunct into its own integer linear constraint. For the approximate encoding we \emph{do not} scale $h_{j,i}$ or $a_{j,i}$ as $\log_2$ does not preserve independence and in practice both $h_{j,i}$ and $a_{j,i}$ tend to be fairly small.
\section{Additional Benchmark Statistics}
\begin{table}[h]
\centering
\setlength{\tabcolsep}{3pt}
\begin{tabular}{lccccccccccc}
        \toprule
        \multicolumn{1}{c}{} & \multicolumn{6}{c}{Family} &  \multicolumn{1}{c}{} \\
        \cmidrule(rl){2-7} 
        \multicolumn{1}{c}{}  & \benchFfBvSp & \benchFfBvMp & \multicolumn{2}{c}{\benchFfFfSp} & \benchFfFfMp & \multicolumn{2}{c}{\benchBvFfMp} & \multicolumn{1}{c}{}\\
        \cmidrule(rl){2-2} \cmidrule(rl){3-3} \cmidrule(rl){4-5} \cmidrule(rl){6-6} \cmidrule(rl){7-8}
          &  cor & cor & cor & det & cor & cor & det    \\
        \cmidrule(r){1-1} \cmidrule(rl){2-8} 
       
\# of equalities (min) &
8.0 &
43.0 &
2.0 &
24.0 &
9.0 &
3.0 &
8.0 \\
\# of equalities (mode) &
19.5 &
43.0  &
4.0 &
105.0 &
40.5 &
7.0 &
38.0 \\
\# of equalities (max)  &
45.0 &
6,403.0 &
5.0 &
186.0 &
72.0 &
22.0 &
52.0 \\
\# of disequalities (min)  &
0.0  &
0.0 & 
1.0 &
1.0 &
1.0 &
1.0  &
1.0 \\
\# of disequalities (mode)  &
0.5 &  
0.5 & 
1.0  &
1.0 &
1.0 &
1.0 &
1.0 \\
\# of disequalities (max)  &
1.0 &
1.0 & 
1.0 & 
1.0 &
1.0 &
1.0  &
1.0 \\
\# of inequalities (min)  &
8.0 &
52.0 &
15.0 &
52.0 &
12.0 & 
16.0 & 
32.0 \\
\# of inequalities (mode)  &
15.0 &
53.0 &
27.0 &
187.0 &
39.0 &
65.0 &
130.0 \\
\# of inequalities (max)  &
44.0 &
6,534 &
36.0 &
322.0 &
66.0 &
96.0 &
192.0\\
\# of variables (min)  &
9.0 &
49.0 &
5.0 &
20.0 &
9.0 &
6.0 &
12.0 \\
\# of variables (mode)  &
21.0 &
49.0 &
9.0 &
74.0 & 
36.0 & 
22.0 &
44.0 \\
\# of variables (max)  &
48.0 &
6,499.0 &
12.0 &
128.0 &
63.0 &
37.0 &
74.0 \\
\# of disjunctions (min)  &
0.0  &
0.0  &
0.0 &
8.0 & 
4.0 & 
0.0 &
0.0 \\
\# of disjunctions (mode)  &
0.5 &
0.5 &
0.0 & 
44.0 & 
22.0 &
0.0 &
0.0 \\
\# of disjunctions (max) & 
1.0 &
1.0 &
0.0 &
80.0 &
40.0 &
0.0 &
0.0 \\

           \cmidrule(r){1-1}
           \cmidrule(rl){2-8}
           \# of Benchmarks   & 72 &58 & 10 & 98 & 50 & 24 & 24  \\
 \bottomrule
\end{tabular}
\end{table}
\label{ap:bench_stats}
\newpage
\section{Additional Algorithm Statistics}
\begin{table}[h!]
\centering
\setlength{\tabcolsep}{3pt}
\begin{tabular}{lcccccccc}
        \toprule
        \multicolumn{1}{c}{} & \multicolumn{6}{c}{Family} &  \multicolumn{1}{c}{} \\
        \cmidrule(rl){2-7} 
        \multicolumn{1}{c}{}  & \benchFfBvSp & \benchFfBvMp & \multicolumn{2}{c}{\benchFfFfSp} & \benchFfFfMp & \multicolumn{2}{c}{\benchBvFfMp} & \multicolumn{1}{c}{}\\
        \cmidrule(rl){2-2} \cmidrule(rl){3-3} \cmidrule(rl){4-5} \cmidrule(rl){6-6} \cmidrule(rl){7-8}
          &  cor & cor & cor & det & cor & cor & det   \\
        \cmidrule(r){1-1} \cmidrule(rl){2-8} 
       \# of partitions (min) &
6.0 &
14.0 &
6.0 & 
6.0 &
16.0 &
6.0 &
6.0 \\
\# of partitions (mode) &
10.0 &
18.0 &
6.0 &
22.0 &
16.0 &
14.0 &
8.0 \\
\# of partitions (max) &
14.0 &
324.0 &
6.0 &
22.0 &
16.0 &
72.0 &
60.0 \\
\# of exp lift/lowered (min) &
6.0 &
0.0 &
7.0 &
17.0 &
24.0 &
4.0 &
20.0 \\
\# of exp lift/lowered (mode) &
48.0 &
0.0 &
11.0 &
32.0 &
60.0 &
8.0 &
24.0 \\
\# of exp lift/lowered (max) &
89,800.0  &
4,328.0 &
13.0 &
1,306.0 &
96.0 &
357.0 &
816.0 \\
avg. \# poly per GB  (min) &
2.5 &
1.0 &
2.0 &
6.1 &
2.0 &
2.7 &
5.5 \\
avg. \# poly per GB  (mode) &
6.0 &
13.6 &
2.0 &
158.2 &
2.0 &
4.9 &
11.6 \\
avg. \# poly per GB  (max) &
24.8 &
3,168.0 &
5.1 &
460.4 &
2.0 &
70.9 &
90.1 \\
 \# GB computes (min)  &
2.0 &
0.0 &
4.0 &
15.0 &
8.0 &
3.0 &
18.0 \\
 \# GB computes (mode) &
32.0  &
212.0 &
8.0 &
30.0 &
8.0 &
7.0 &
22.0 \\
 \# GB computes (max) &
67,386.0 &
4,271.0 &
10.0 &
1,313.0 &
8.0 &
324.0 &
787.0 \\

           \cmidrule(r){1-1}
           \cmidrule(rl){2-8}
           \# of Benchmarks   & 72 &58 & 10 & 98 & 50 & 24 & 24   \\
 \bottomrule
\end{tabular}
\end{table}
\label{ap:alg_stats}

\end{document}
\endinput